\documentclass[journal,onecolumn]{IEEEtran}

\usepackage{mathrsfs}

\usepackage{calligra}
\usepackage{amsfonts,amscd,mathrsfs,amsmath,amsthm,amssymb}
\usepackage{mathtools}
\usepackage{xparse}
\usepackage{graphicx}
\usepackage{float}
\usepackage{pifont}
\usepackage[dvipsnames]{xcolor}
\usepackage{colortbl}
\usepackage{multirow}
\usepackage{enumerate}   
\usepackage{comment}
\usepackage{booktabs}
\usepackage{cancel}
\usepackage{lieart}
\usepackage{mhchem}
\usepackage{listings}
\usepackage{xcolor}
\usepackage{physics}
\usepackage{bm}
\usepackage{bbm}
\usepackage{caption}
\usepackage{lipsum}

\PassOptionsToPackage{hyphens}{url}\usepackage{hyperref}
\hypersetup{
    colorlinks=true,
    linkcolor=magenta!80!black,  
    citecolor=magenta!80!black,
    breaklinks=true,
}
\usepackage[capitalise]{cleveref}

\theoremstyle{theorem}

\theoremstyle{definition}
\newtheorem{theorem}{Theorem}
\newtheorem*{theorem*}{Theorem}

\newtheorem*{conjecture*}{Conjecture}
\newtheorem{corollary}{Corollary}
\newtheorem*{corollary*}{Corollary}
\newtheorem{lemma}{Lemma}
\newtheorem*{lemma*}{Lemma}
\newtheorem{proposition}{Proposition}
\newtheorem{remark}{Remark}

\AddToHook{cmd/appendix/before}{%
  \setcounter{equation}{0}%
  \renewcommand{\theequation}{\thesection\arabic{equation}}%
  \setcounter{theorem}{0}%
  \renewcommand{\thetheorem}{\thesection\arabic{theorem}}%
  \setcounter{lemma}{0}%
  \renewcommand{\thelemma}{\thesection\arabic{lemma}}%
}

\newcommand\numberthis{\addtocounter{equation}{1}\tag{\theequation}}

\DeclareMathAlphabet{\mathsfit}{T1}{\sfdefault}{\mddefault}{\sldefault}
\SetMathAlphabet{\mathsfit}{bold}{T1}{\sfdefault}{\bfdefault}{\sldefault}

\newcommand{\CC}{\mathbb{C}}

\renewcommand{\C}{\mathcal{C}} 
\newcommand{\Cint}{\mathcal{I}}

\newcommand{\Ad}{\textsf{Ad}}
\newcommand{\Sym}{\mathrm{Sym}}

\newcommand{\Twirl}{\text{Twirl}}
\newcommand{\Hom}{\text{Hom}}

\newcommand{\lsuper}{\langle\!\langle}
\newcommand{\rsuper}{\rangle\!\rangle}

\newcommand{\lcode}{\{\mkern-4mu\{}
\newcommand{\rcode}{\}\mkern-4mu\}}

\renewcommand{\L}{\mathscr{L}} 
\renewcommand{\E}{W} 
\renewcommand{\B}{\mathcal{B}} 

\newcommand{\proj}{\mathscr{P}} 
\newcommand{\twirl}{\mathscr{T}} 

\renewcommand{\ip}[1]{\expval{#1}} 
\newcommand{\ipp}[1]{\expval{\! \expval{#1} \!}} 

\renewcommand{\op}{\mathscr{O}} 

\newcommand{\Avec}{\bm{A}}
\newcommand{\Bvec}{\bm{B}}

\providecommand{\SU}[1]{\mathrm{SU}(#1)}
\begin{document}

\title{MacWilliams Identities for Intrinsic Quantum Codes}

\author{Eric~Kubischta, Ian~Teixeira

\thanks{Both authors contributed equally to this work.}
\thanks{Eric Kubischta is affiliated with the Department of Mathematics, Florida State University, Tallahassee, FL 32306.}
\thanks{Ian Teixeira is affiliated with the Department of Mathematics, University of California, San Diego, CA 92093}
}

\maketitle
\begin{abstract}
This paper develops a MacWilliams-type enumerator theory and semidefinite programming bounds for intrinsic quantum codes. An intrinsic code is a subspace $\C\subseteq V$ of a finite-dimensional unitary representation $V$ of a group $G$, with errors organized into the isotypic components of $\L(V)$ under conjugation. The Knill-Laflamme conditions persist under every \(G\)-equivariant realization of \(V\), allowing a single intrinsic bound to apply across a broad class of physical encodings. Without assuming a group action, we first construct projector and twirl quadratic enumerators for an arbitrary orthogonal decomposition of $\L(V)$, analogous respectively to the $A$- and $B$-enumerators of Shor--Laflamme theory. These enumerators satisfy positivity, normalization, and Knill--Laflamme inequalities, but the two superoperator families need not span the same space and therefore admit no canonical duality. A group action places their generalized versions in the same equivariant intertwiner algebra, and the change of basis between them defines the intrinsic MacWilliams transform. When the conjugation representation has multiplicities, this algebra is noncommutative and the enumerators are matrix-valued; positivity and detection are encoded by Loewner-order constraints, making the natural programming bound a semidefinite program. The multiplicity-free specialization recovers the known scalar MacWilliams transform and linear-programming framework. For symmetric powers, intrinsic depth equals ordinary qudit distance, yielding extremality certificates for permutation-invariant qudit codes. Finally, for the $27$-dimensional $\SU3$ representation $V=(2,2)$, an exact dual certificate proves the sharp bound $K\leq5$ for codes detecting the full adjoint isotypic component. An explicit $5$-dimensional code attains the bound and, under the canonical embedding, gives a four-qutrit distance-$2$ code. 
\end{abstract}

\begin{IEEEkeywords}
quantum error correction, MacWilliams identities, linear programming bounds,
semidefinite programming bounds, representation theory, equivariant codes,
permutation-invariant quantum codes
\end{IEEEkeywords}

\tableofcontents

\section{Introduction}
\label{sec:introduction}

\subsection{Overview}

A central problem in coding theory is to bound the parameters of
error-correcting codes.
In both the classical and quantum settings, the sharpest such bounds come from
enumerator formalisms together with duality relations.
In the classical theory, the MacWilliams identities \cite{MacWilliams1963}
relate the weight distribution of a code to that of its dual and, with
positivity, yield Delsarte's linear programming bounds \cite{Delsarte1973} and
their association-scheme refinements
\cite{MacWilliamsSloane1977,BannaiIto1984}.
In quantum coding theory, the Shor--Laflamme weight enumerators play the same
role for stabilizer and nonstabilizer codes
\cite{GF4codes,rainsWE,ShorWE,Shadows}.
The pattern is always the same.
One attaches enumerators to a code, proves linear relations among them, and
combines those relations with positivity to obtain an optimization bound.

Recent work introduced a symmetry-based viewpoint through the notion of an
\emph{intrinsic quantum code} \cite{usIntrinsicCodes}.
Here a code is not a subspace of a fixed tensor-product Hilbert space, but a
subspace
\[
    \C \subseteq V
\]
of a finite-dimensional unitary representation \(V\) of a group \(G\).
The operator space \(\L(V):=\operatorname{End}(V)\) is a \(G\)-representation
under conjugation, and errors are organized by its decomposition into intrinsic
error sectors.
The Knill--Laflamme conditions become conditions on these sectors.
A single intrinsic code then governs many physical realizations at once, so a
bound proved intrinsically applies to a whole family of ordinary codes sharing
the same symmetry data.
When an intrinsic depth has been fixed, we write
\[
    \lcode N,K,d\rcode_G
\]
for a \(K\)-dimensional intrinsic code, detecting all sectors of depth less
than \(d\), inside a representation \(V\) of dimension \(N:=\dim V\); the first
slot records the dimension of the ambient representation. Intrinsic codes and
error sectors are introduced in
Section~\ref{sec:intrinsic-forms} following \cite{usIntrinsicCodes}; there we also define an abstract
sector-depth assignment as part of the chosen error model.
The physically meaningful depth assignments used for symmetric powers, and
their identification with ordinary qudit distance, are developed in
Appendix~\ref{sec:mf-examples}.

This paper develops enumerator theory for intrinsic codes and uses it to bound
their parameters.
The organizing principle is a single dichotomy.
The representation theory of the conjugation action on \(\L(V)\) fixes not
only the error sectors, but also the form of the duality transform and the
class of the resulting optimization problem.
When the conjugation representation is multiplicity-free, the intertwiner
algebra is commutative, the enumerators are scalars, the MacWilliams transform
is a unitary matrix, and the bound is a linear program.
When multiplicities are present, the intertwiner algebra is noncommutative,
the enumerators are matrices, the MacWilliams transform is a unitary change of
basis on the full intertwiner space, and the bound is a semidefinite program.
The linear-programming case is the commutative shadow of the general
matrix-valued theory.

\subsection{Main contribution: the semidefinite case}

The multiplicity-free case has already been substantially developed.
For quantum metric spaces with sufficient symmetry it is the content of
Okada's linear programming theory
\cite{okada2025quantumanalogdelsarteslinear}, and for the spin case it goes
back to Bumgardner \cite{bumgardner2012codeswastmetricspacestheory}.
In that setting the error sectors behave like the classes of an association
scheme, the projector and twirl maps furnish dual bases of a commutative
intertwiner algebra, and the code constraints are linear.
The multiplicity-free formalism in
Appendix~\ref{sec:mf-macwilliams} and Appendix~\ref{sec:mf-examples} is therefore a
reformulation of this commutative theory in the intrinsic projector/twirl
language.
We include it because it fixes notation, makes the connection with ordinary
quantum weight enumerators explicit, and supplies the commutative model that
the matrix-valued theory extends. 
We discuss the relationship in detail in
Section~\ref{subsec:relation-to-prior}.

The main contribution of this paper is the theory beyond the multiplicity-free
case.
Multiplicity-freeness is a strong restriction.
It holds for symmetric powers and several other important families, but many
irreducible representations \(V\) have nontrivial multiplicities in the
conjugation representation
\[
    \L(V)\cong V\otimes V^* ,
\]
and so fall outside the scalar linear-programming framework.
The new issue is not merely that an irreducible label occurs more than once:
coherent combinations of equivalent copies are themselves error directions,
and controlling them requires the full multiplicity-space matrix rather than
only scalar data attached to individual copies.
When this happens the intertwiner algebra
\[
    \Hom_G(\L(V),\L(V))
    \cong
    \bigoplus_\xi \CC^{m_\xi\times m_\xi}
\]
is noncommutative.
Scalar enumerators can only record the diagonal part of each block and miss
the off-diagonal intertwiners between distinct copies of an irreducible sector.
The correct enumerators are therefore matrix-valued, indexed by the
multiplicity spaces, and the projector and twirl constructions give two
orthonormal bases of the intertwiner algebra: a projector matrix-unit basis
and a twirl-derived orthogonal basis.
The change of basis between them is a unitary matrix-valued MacWilliams
transform on the full intertwiner space.
Positivity and detection become Loewner-order inequalities, and the feasibility
problem is a semidefinite program (SDP).

This SDP is not introduced as a strengthening of the usual scalar LP as in
\cite{LaiTsengYu,MunneNemecHuber}, where semidefinite methods sharpen bounds
for ordinary quantum codes.
Here the semidefinite structure is forced by the noncommutativity of the
intertwiner algebra, and it is the natural home of the enumerator constraints
whenever \(\L(V)\) has multiplicities.
We develop this matrix-valued theory in Section~\ref{sec:mult} and illustrate
it in Section~\ref{sec:mult-su3-example} with an explicit non-multiplicity-free
example, the \(\SU 3\) irrep of highest weight \((2,2)\), where an exact positive
dual certificate for the matrix-valued MacWilliams constraints proves the sharp
depth-\(2\) bound \(K\le5\), attained by a \(5\)-dimensional \(3.A_6\)
constituent.

\subsection{Relation to prior work}
\label{subsec:relation-to-prior}

For the scalar enumerator formalism in the multiplicity-free case, the linear programming bounds, and the
quantum Hamming-space recovery of the Shor--Laflamme and Rains enumerators are
already present in Okada's quantum-metric-space theory
\cite{okada2025quantumanalogdelsarteslinear}.
The \(\SU 2\) transform in terms of Wigner \(6j\)-symbols
(Section~\ref{subsec:su2-macwilliams}) goes back to Bumgardner
\cite{bumgardner2012codeswastmetricspacestheory} and also appears in
\cite{okada2025quantumanalogdelsarteslinear}.
The symmetric-power family \(V=\Sym^n(\CC^q)\) appears there as well, and the
distance-\(2\) enumerator characterizations we obtain in
Sections~\ref{sec:su2-example}--\ref{sec:su3-example} recover the distance-\(2\)
bounds of \cite{okada2025quantumanalogdelsarteslinear}.

We reprove these results in the intrinsic language for self-containedness and
because the intrinsic derivation, through the projector/twirl MacWilliams
transform, is the commutative specialization of the matrix-valued theory that
follows.
On the multiplicity-free side, we also add one coding-theoretic ingredient not
present in \cite{okada2025quantumanalogdelsarteslinear}.
That framework works entirely within the abstract quantum metric, where
distance is measured by the adjoint-order filtration and is not identified
with physical qudit weight.
For symmetric powers we make this identification explicit: there is a
distance-preserving correspondence between intrinsic codes in
\(\Sym^n(\CC^q)\) and permutation-invariant codes on \(n\) qudits, under which
intrinsic depth equals ordinary qudit distance
(Proposition~\ref{prop:PI-qudit-intrinsic-strong} in
Appendix~\ref{sec:mf-examples}, building on \cite{us2,usIntrinsicCodes}).
Through it the intrinsic linear program yields extremality certificates for
concrete permutation-invariant codes: the four-qubit \(((4,2,2))\), the
seven-qubit \(((7,2,3))\), and the three-qutrit \(((3,2,2))_3\) codes are each
shown to be optimal within the permutation-invariant family.

Bradshaw, LaBorde, and Montero~\cite{Bradshaw_2026} develop a complementary
representation-theoretic framework for quantum codes beyond the Pauli
stabilizer setting. Starting from a finite-group representation, their
construction produces group-invariant code spaces, interprets part of the
group action as passive error mitigation against errors in the image of the
representation, and uses projective measurements onto isotypic components as a
symmetry-resolved analogue of syndrome diagnosis; ordinary stabilizer codes
arise as a special case. The present work asks a different question. Given a
symmetry-decomposed operator space, we construct enumerators and MacWilliams
duality relations and use their positivity to bound code dimension. The two
approaches thus share the use of representation theory to organize errors, but
their principal outputs are complementary: symmetry-based code construction and
diagnosis on one side, and enumerator duality and optimization bounds on the
other. The distinction is sharpest in the presence of multiplicity, where our
variables live on the full multiplicity spaces and the resulting constraints
are semidefinite.

\subsection{Shor--Laflamme enumerators as a special case}

The intrinsic formalism contains the ordinary quantum weight enumerators as its
simplest instance, which orients the theory for the reader familiar with the
Shor--Laflamme picture \cite{ShorWE}.
Take
\[
    V=(\CC^q)^{\otimes n},
    \qquad
    G=U(q)\wr S_n,
\]
with \(G\) acting by local unitaries and permutations.
On one site, conjugation splits
\[
    \L(\CC^q)\cong \mathbf 1\oplus \Ad
\]
into the scalars and the traceless adjoint sector.
Passing to the wreath product merges supports of equal size, so the intrinsic
error sectors are the Hamming-weight spaces \(\E_w\), and the normalized
intrinsic enumerators are exactly the Shor--Laflamme enumerators:
\[
    \widetilde A_w=A_w^{\mathrm{SL}},
    \qquad
    \widetilde B_w=B_w^{\mathrm{SL}}.
\]
The one-site transform
\[
    M_{\mathrm{site}}
    =
    \frac{1}{q}
    \begin{pmatrix}
        1 & 1\\
        q^2-1 & -1
    \end{pmatrix}
\]
tensors and passes to the weight basis to give the quantum Krawtchouk
transform, and the intrinsic MacWilliams identity becomes the ordinary
Shor--Laflamme MacWilliams identity.
The details are in Sections~\ref{sec:sl-recovery}
and~\ref{subsec:wreath-krawtchouk}.
Thus the Shor--Laflamme theory is the wreath-product instance of the intrinsic
theory.

\section{General Quadratic Enumerators}
\label{sec:intrinsic-forms}

Let \(V\) be a finite-dimensional Hilbert space, and let \(\L(V)\) denote the
space of linear operators on \(V\).
We write
\[
    N:=\dim V.
\]
A code in \(V\) is a subspace \(\C\subseteq V\), and we write \(P\) for its
orthogonal projector.
The dimension of the code is denoted
\[
    K:=\dim \C.
\]

We equip \(\L(V)\) with the Hilbert--Schmidt inner product
\[
    \ip{X_1,X_2} := \Tr(X_1^\dagger X_2),
    \numberthis
\]
which is conjugate-linear in the first argument and linear in the second.
Throughout this section we fix an orthogonal decomposition
\begin{equation}
    \L(V)=\bigoplus_{\xi\in\mathcal E}\E_\xi
    \label{eq:operator-decomposition}
\end{equation}
with respect to the Hilbert--Schmidt inner product.
The finite set \(\mathcal E\) is the set of error-sector labels, and
\(\E_\xi\) is the corresponding error sector.

The word ``sector'' should be read broadly in this section.
We do not yet assume that the decomposition
\eqref{eq:operator-decomposition} comes from a group action.
It is simply a chosen orthogonal decomposition of the error space \(\L(V)\).
In the equivariant sections below, the decomposition will be the isotypic
decomposition for the conjugation action of \(G\) on \(\L(V)\); in the ordinary
\(n\)-qudit case, it will become the usual Hamming-weight decomposition of
Pauli/Weyl errors.
Thus the reader may keep the ordinary Shor--Laflamme decomposition by error
weight in mind as the guiding example.

We say that an intrinsic code \(\C\) detects the sector \(\E_\xi\) if
\[
    PEP=c_E P
    \qquad
    \text{for every }E\in\E_\xi
    \numberthis
\]
for suitable scalars \(c_E\in\CC\). This is an intrinsic Knill-Laflamme condition. Equivalently, it suffices to check this condition on any basis of \(\E_\xi\).
A depth function on the sector decomposition is a function
\[
    \text{depth}:\mathcal E\to \mathbb Z_{\geq 0}.
\]
Given such a function, a code has sector distance at least \(d\) if it detects
every sector \(\E_\xi\) with \( \text{depth}(\xi)<d\).
In this paper the depth function is part of the chosen error model.
In the ordinary tensor-product case it is Hamming weight, and for symmetric
powers it agrees with ordinary qudit distance under the correspondence with
permutation-invariant quantum codes discussed later.

This decomposition provides the basic setting for the intrinsic enumerator
formalism developed in the paper.
From it we define quadratic forms associated with the orthogonal projections
onto the sectors \(\E_\xi\) and with the corresponding twirling operations, and
we establish the positivity, normalization, and Knill--Laflamme type
inequalities satisfied by these forms.
At this level of generality, however, there is no canonical relation between
the resulting enumerators.
The MacWilliams-type identities needed for programming bounds arise only after
additional symmetry is imposed, which will be done in the later sections.

\subsection{Projectors and Twirling Maps}

We denote superoperators, i.e.\ linear maps on \(\L(V)\), by script letters.
For each \(\xi\in\mathcal E\), let
\[
    \proj_\xi:\L(V)\to\L(V)
\]
denote the orthogonal projector onto the subspace \(\E_\xi\).
These superoperators satisfy
\[
    \proj_\xi\proj_\eta=\delta_{\xi\eta}\proj_\xi,
    \numberthis
\]
so they form a family of orthogonal idempotents in \(\L(\L(V))\), the space of
superoperators acting on \(\L(V)\).

We use the Hilbert--Schmidt inner product on \(\L(\L(V))\):
\[
    \ipp{\Phi,\Psi}
    :=
    \Tr_{\L(V)}(\Phi^\dagger\Psi).
    \numberthis
\]
Equivalently, if \(\mathcal M\) is any Hilbert--Schmidt orthonormal basis of
\(\L(V)\), then
\[
    \ipp{\Phi,\Psi}
    =
    \sum_{M\in\mathcal M}\ip{\Phi(M),\Psi(M)}.
    \numberthis
\]
Thus the trace is the ordinary matrix trace of the induced linear maps on the
Hilbert space \(\L(V)\).

With respect to this inner product we have
\[
    \ipp{\proj_\xi,\proj_\eta}
    =
    \Tr(\proj_\xi^\dagger\proj_\eta)
    =
    d_\xi \,\delta_{\xi\eta},
    \numberthis
\]
where \(d_\xi=\dim\E_\xi\).
Consequently,
\[
    \left\{ \frac{1}{\sqrt{d_\xi}}\proj_\xi\right\}_{\xi\in\mathcal E}
    \numberthis
\]
is an orthonormal set in \(\L(\L(V))\).

Next fix an orthonormal basis \(\B_\xi\) of \(\E_\xi\).
We define the associated twirling superoperator
\[
    \twirl_\xi:\L(V)\to\L(V)
\]
by \cite{TwirlOrigins,TwirlOriginsFormal}
\begin{equation}
    \twirl_\xi(X):=\sum_{E\in\B_\xi}E^\dagger X E.
    \label{eq:twirl-def}
\end{equation}

\begin{lemma}
The definition of \(\twirl_\xi\) in \eqref{eq:twirl-def} is independent of the
choice of orthonormal basis of \(\E_\xi\).
\end{lemma}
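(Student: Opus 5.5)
The plan is to show directly that any two orthonormal bases of \(\E_\xi\) give the same superoperator in \eqref{eq:twirl-def}, using the unitary change of basis between them. The key point is that the right-hand side of \eqref{eq:twirl-def} is sesquilinear in \(E\): \(E\) enters once conjugate-linearly, through \(E^\dagger\), and once linearly. Such a sum is invariant under unitary reparametrization of the basis, exactly as \(\sum_i \lvert e_i\rangle\langle e_i\rvert\) is basis independent.

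First, fix two orthonormal bases \(\B_\xi=\{E_1,\dots,E_{d_\xi}\}\) and \(\B'_\xi=\{E'_1,\dots,E'_{d_\xi}\}\) of \(\E_\xi\) with respect to the Hilbert--Schmidt inner product. Expand \(E'_j=\sum_i U_{ij}E_i\), where \(U_{ij}=\ip{E_i,E'_j}\). Since both families are orthonormal, the matrix \(U=(U_{ij})\) is unitary. This follows by computing \(\ip{E'_j,E'_k}=\sum_i \overline{U_{ij}}U_{ik}=(U^\dagger U)_{jk}=\delta_{jk}\), so \(U^\dagger U=I\), and \(UU^\dagger=I\) holds because \(U\) is square.

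Second, substitute the expansion into the twirl. Using \((E'_j)^\dagger=\sum_i \overline{U_{ij}}E_i^\dagger\), we get
\[
\sum_j (E'_j)^\dagger X E'_j=\sum_{i,k}\Big(\sum_j \overline{U_{ij}}U_{kj}\Big)E_i^\dagger X E_k=\sum_{i,k}(UU^\dagger)_{ki}\,E_i^\dagger X E_k=\sum_i E_i^\dagger X E_i,
\]
and this holds for every \(X\in\L(V)\).

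The only point requiring care is the bookkeeping of conjugations and indices. One must check that the coefficient that appears is \((UU^\dagger)_{ki}\), and not \((U^TU)\) or a similar non-identity product. That is where the conjugate-linearity of the adjoint, matched against the conjugate-linearity of the inner product in its first slot, has to line up. Beyond that the argument is routine, and it uses nothing from earlier in the paper except the definition of the Hilbert--Schmidt inner product.
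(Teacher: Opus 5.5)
Your proof is correct and follows the same route as the paper: you write one orthonormal basis in terms of the other with a unitary change-of-basis matrix, and unitarity collapses the double sum to $\sum_i E_i^\dagger X E_i$. The only difference is indexing: you expand along columns, so the collapse uses $UU^\dagger=I$ (which you correctly justify from $U^\dagger U=I$ for a square matrix), whereas the paper expands along rows and uses $Q^\dagger Q=I$.
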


\begin{proof}
Let \(\{E_j\}\) and \(\{E_j'\}\) be two orthonormal bases of \(\E_\xi\).
Then there exists a unitary matrix \(Q=(q_{jk})\) such that
\[
    E_j'=\sum_k q_{jk}E_k.
    \numberthis
\]
Substituting this expansion into \eqref{eq:twirl-def} gives
\begin{align*}
\sum_j (E_j')^\dagger X E_j'
&=
\sum_j
\left(\sum_k \overline{q_{jk}}\,E_k^\dagger\right)
X
\left(\sum_\ell q_{j\ell}E_\ell\right) \\
&=
\sum_{k,\ell}
\left(\sum_j \overline{q_{jk}}\,q_{j\ell}\right)
E_k^\dagger X E_\ell.
\end{align*}
Since \(Q\) is unitary, we have
\[
    \sum_j \overline{q_{jk}}q_{j\ell}=\delta_{k\ell},
    \numberthis
\]
and therefore
\[
    \sum_j (E_j')^\dagger X E_j'
    =
    \sum_k E_k^\dagger X E_k.
    \numberthis
\]
Hence \(\twirl_\xi\) does not depend on the choice of orthonormal basis.
\end{proof}

Let \(\B\) be the union of the mutually orthogonal bases \(\B_\xi\), relabeled
as a single Hilbert--Schmidt orthonormal basis of \(\L(V)\).

\begin{lemma}[Completeness identity]
For any Hilbert--Schmidt orthonormal basis \(\B\) of \(\L(V)\),
\[
    \sum_{E\in\B}E^\dagger X E=\Tr(X)\,I
    \qquad
    \text{for all }X\in\L(V).
\]
\end{lemma}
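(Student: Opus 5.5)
The plan is to reduce to a single convenient basis and then compute directly. The expression \(\sum_{E\in\B}E^\dagger X E\) is independent of the choice of Hilbert--Schmidt orthonormal basis of \(\L(V)\). This follows verbatim from the argument of the previous lemma (basis independence of \(\twirl_\xi\)): that proof used only that two orthonormal bases of the same space are related by a unitary matrix \(Q\), and never used any property of \(\E_\xi\) beyond its being a finite-dimensional inner-product subspace of \(\L(V)\). Applying it with \(\E_\xi\) replaced by all of \(\L(V)\) shows the sum is basis independent. It therefore suffices to verify the identity for one particular orthonormal basis.

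Fix an orthonormal basis \(\{e_i\}_{i=1}^N\) of \(V\). Take the matrix units \(E_{ij}:=|e_i\rangle\langle e_j|\), which form a Hilbert--Schmidt orthonormal basis of \(\L(V)\), since \(\Tr(E_{ij}^\dagger E_{kl})=\delta_{ik}\delta_{jl}\). Then
\[
\sum_{i,j}E_{ij}^\dagger X E_{ij}
=\sum_{i,j}|e_j\rangle\langle e_i|X|e_i\rangle\langle e_j|
=\Big(\sum_i\langle e_i|X|e_i\rangle\Big)\sum_j|e_j\rangle\langle e_j|
=\Tr(X)\,I .
\]
Combining this with basis independence gives the identity for every Hilbert--Schmidt orthonormal basis \(\B\). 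In particular it holds for the union of the sector bases \(\B_\xi\), and hence \(\sum_\xi\twirl_\xi(X)=\Tr(X)I\).

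There is no serious obstacle. The only point needing care is that the basis-change argument transfers unchanged from a sector \(\E_\xi\) to the full space \(\L(V)\). One must also track which side carries the dagger: with \(E^\dagger X E\), the matrix unit \(E_{ij}\) contributes \(\langle e_i|X|e_i\rangle\) times the diagonal projector \(|e_j\rangle\langle e_j|\), so the outcome is the scalar \(\Tr(X)\) times \(I\), as required.
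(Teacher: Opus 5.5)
Your proof is correct and follows the paper's argument: reduce to a single basis by basis independence, then evaluate on the matrix-unit basis to get \(\Tr(X)\,I\). The one difference is that you justify basis independence explicitly, by applying the previous lemma's unitary change-of-basis computation to all of \(\L(V)\), whereas the paper simply asserts it.
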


\begin{proof}
The map
\[
    \Phi(X):=\sum_{E\in\B}E^\dagger X E
    \numberthis
\]
is independent of the choice of Hilbert--Schmidt orthonormal basis.
It therefore suffices to compute it using the matrix-unit basis
\(\{|a\rangle\langle b|\}_{a,b}\) of \(\L(V)\).
Then
\begin{align}
\sum_{a,b} (|a\rangle\langle b|)^\dagger X (|a\rangle\langle b|)
&=
\sum_{a,b} |b\rangle\langle a| X |a\rangle\langle b| \\
&=
\sum_{a,b} \langle a|X|a\rangle\,|b\rangle\langle b| \\
&=
\Tr(X)\,I .
\end{align}
\end{proof}

Using this identity we can compute the inner products of the twirling maps.

\begin{lemma}
For all \(\xi,\eta\in\mathcal E\), we have
\[
    \ipp{\twirl_\xi,\twirl_\eta}=d_\xi\,\delta_{\xi\eta}.
\]
Consequently
\[
    \left\{\frac{1}{\sqrt{d_\xi}}\twirl_\xi\right\}_{\xi\in\mathcal E}
    \numberthis
\]
is an orthonormal set in \(\L(\L(V))\).
\end{lemma}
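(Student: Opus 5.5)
Write $\twirl_\xi=\sum_{E\in\B_\xi}L_E$ with $L_E(X):=E^\dagger X E$. Expanding bilinearly gives
\[
    \ipp{\twirl_\xi,\twirl_\eta}=\sum_{E\in\B_\xi}\sum_{F\in\B_\eta}\ipp{L_E,L_F}.
\]
The plan is to compute each term $\ipp{L_E,L_F}=\Tr_{\L(V)}(L_E^\dagger L_F)$ in closed form and show that it equals $\abs{\Tr(E^\dagger F)}^2$. Orthonormality of $\B$ then finishes the argument.

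\textbf{Computing the trace.} Identify $\L(V)\cong V\otimes V^*$. Under this identification $X\mapsto A X B$ corresponds to $A\otimes B^{T}$, so the superoperator trace is $\Tr_{\L(V)}(X\mapsto AXB)=\Tr(A)\Tr(B)$. One can check this directly in the matrix-unit basis $\{|a\rangle\langle b|\}$ used in the completeness identity: $\sum_{a,b}\langle a|A|a\rangle\langle b|B|b\rangle$.

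Next, with respect to the Hilbert--Schmidt inner product the adjoint of $L_E$ is $L_E^\dagger(Y)=EYE^\dagger$, since $\Tr((E^\dagger XE)^\dagger Y)=\Tr(X^\dagger EYE^\dagger)$. Therefore
\[
    L_E^\dagger L_F(X)=E F^\dagger X F E^\dagger,
\]
and the trace formula gives
\[
    \ipp{L_E,L_F}=\Tr(EF^\dagger)\,\Tr(FE^\dagger)=\abs{\Tr(E^\dagger F)}^2=\abs{\ip{E,F}}^2 .
\]

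\textbf{Conclusion.} Since $\B_\xi\cup\B_\eta\subseteq\B$ is orthonormal, $\abs{\ip{E,F}}^2=\delta_{EF}$. Here $E=F$ is possible only when $\xi=\eta$, because the sectors are mutually orthogonal. The double sum therefore collapses to $\delta_{\xi\eta}\,\abs{\B_\xi}=d_\xi\delta_{\xi\eta}$. The orthonormality of $\{\twirl_\xi/\sqrt{d_\xi}\}$ follows immediately.

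\textbf{Main obstacle.} The only delicate point is the trace formula for $X\mapsto AXB$ together with the correct Hilbert--Schmidt adjoint of $L_E$; one has to get the daggers in the right places so that the product of traces comes out as a modulus squared. Everything else is bookkeeping.
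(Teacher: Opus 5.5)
Your proposal is correct and is essentially the paper's argument. The paper expands $\ipp{\twirl_\xi,\twirl_\eta}$ over an orthonormal basis of $\L(V)$ and collapses the sum over that basis with the completeness identity. That collapsed sum is exactly your trace formula $\Tr_{\L(V)}(X\mapsto AXB)=\Tr(A)\Tr(B)$ applied to $X\mapsto EF^\dagger XFE^\dagger$, so both routes reach $\sum_{E,F}\abs{\ip{E,F}}^2$ and finish identically.
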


\begin{proof}
Using the Hilbert--Schmidt inner product on \(\L(\L(V))\) and the basis
\(\B\) of \(\L(V)\), we obtain
\begin{align}
\ipp{\twirl_\xi,\twirl_\eta}
&=
\sum_{M\in\B}\ip{\twirl_\xi(M),\twirl_\eta(M)} \\
&=
\sum_{M\in\B}\sum_{E\in\B_\xi}\sum_{F\in\B_\eta}
\Tr\!\left((E^\dagger M E)^\dagger F^\dagger M F\right).
\end{align}
Rearranging the sums gives
\[
\ipp{\twirl_\xi,\twirl_\eta}
=
\sum_{E\in\B_\xi}\sum_{F\in\B_\eta}
\sum_{M\in\B}
\Tr\!\left(E^\dagger M^\dagger E F^\dagger M F\right).
\numberthis
\]
Applying the completeness identity to the orthonormal basis
\(\{M^\dagger:M\in\B\}\) with \(X=EF^\dagger\) yields
\[
    \sum_{M\in\B} M(EF^\dagger)M^\dagger
    =
    \Tr(EF^\dagger)\,I.
    \numberthis
\]
Substituting this into the previous expression gives
\begin{align}
\ipp{\twirl_\xi,\twirl_\eta}
&=
\sum_{E\in\B_\xi}\sum_{F\in\B_\eta}
\Tr(EF^\dagger)\Tr(E^\dagger F) \\
&=
\sum_{E\in\B_\xi}\sum_{F\in\B_\eta}
|\ip{E,F}|^2 .
\end{align}
Since the sets \(\B_\xi\) are orthonormal and mutually orthogonal, the sum
vanishes unless \(\xi=\eta\), in which case it equals \(d_\xi\).
\end{proof}

At this level of generality there is no reason for the two families
\(\{\proj_\xi\}_{\xi\in\mathcal E}\) and
\(\{\twirl_\xi\}_{\xi\in\mathcal E}\) to span the same subspace of
\(\L(\L(V))\).
In particular, neither family need form a basis of \(\L(\L(V))\), and there is
therefore no canonical change-of-basis transform relating the two corresponding
enumerator formalisms.
This reflects a genuine structural difference: the operators \(\proj_\xi\)
span a commutative \(^\ast\)-algebra, whereas the operators \(\twirl_\xi\) need
not commute, and their span need not be closed under multiplication.
The MacWilliams-type duality needed for programming bounds appears only after
additional symmetry is imposed, as will be done in the subsequent sections.

\subsection{Sesquilinear Forms}

Each superoperator \(\op\in\L(\L(V))\) induces a sesquilinear form on \(\L(V)\)
via
\[
    (X_1,X_2)\mapsto \ip{X_1,\op(X_2)}.
    \numberthis
\]
In particular, associated with the projector \(\proj_\xi\) and the twirling map
\(\twirl_\xi\) we define
\begin{align}
    A_\xi(X_1,X_2) &:= \ip{X_1,\proj_\xi(X_2)}, \\
    B_\xi(X_1,X_2) &:= \ip{X_1,\twirl_\xi(X_2)}.
\end{align}
Using the expansion
\[
    \proj_\xi(X)=\sum_{E\in\B_\xi}\ip{E,X}\,E,
    \numberthis
\]
these forms may be written explicitly as
\begin{align}
A_\xi(X_1,X_2)
&=
\sum_{E\in\B_\xi}
\Tr(X_1^\dagger E)\Tr(X_2E^\dagger),
\label{eq:a-def}\\
B_\xi(X_1,X_2)
&=
\sum_{E\in\B_\xi}
\Tr(X_1^\dagger E^\dagger X_2E).
\label{eq:b-def}
\end{align}
Restricting to the diagonal \(X_1=X_2=X\) yields the associated quadratic forms
\begin{align}
    A_\xi(X,X) &= \|\proj_\xi(X)\|_2^2, \\
    B_\xi(X,X) &= \sum_{E\in\B_\xi}\Tr(X^\dagger E^\dagger X E).
\end{align}
We refer to the collections
\[
    \{A_\xi(X,X)\}_{\xi\in\mathcal E},
    \qquad
    \{B_\xi(X,X)\}_{\xi\in\mathcal E}
\]
as the intrinsic quadratic enumerators associated with the decomposition
\eqref{eq:operator-decomposition}.

\subsection{Normalization Identities}

The intrinsic quadratic enumerators satisfy the following basic positivity and
normalization relations.

\begin{lemma}
\label{lem:basic-identities}
For all \(X\in \L(V)\),
\begin{enumerate}[(1)]
\item \(A_\xi(X,X)\ge 0\);
\item \(\sum_\xi A_\xi(X,X)=\Tr(X^\dagger X)\);
\item \(\sum_\xi B_\xi(X,X)=|\Tr(X)|^2\);
\item If \(\E_0=\mathrm{span}\{I\}\), then
\[
    A_0(X,X)=\frac{|\Tr(X)|^2}{N},
    \qquad
    B_0(X,X)=\frac{\Tr(X^\dagger X)}{N}.
    \numberthis
\]
\end{enumerate}
\end{lemma}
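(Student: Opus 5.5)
Each of the four items follows by direct computation from the explicit formulas \eqref{eq:a-def} and \eqref{eq:b-def}, together with the orthogonal decomposition \eqref{eq:operator-decomposition} and the Completeness identity. I take the items in order.

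For (1), note that \(\proj_\xi\) is an orthogonal projector, so it is self-adjoint and idempotent. Hence
\[
A_\xi(X,X)=\ip{X,\proj_\xi X}=\ip{\proj_\xi X,\proj_\xi X}=\|\proj_\xi(X)\|_2^2\ge 0 .
\]
For (2), the decomposition \eqref{eq:operator-decomposition} is orthogonal, so \(\sum_\xi\proj_\xi=\mathrm{id}_{\L(V)}\). Summing the forms then gives \(\sum_\xi A_\xi(X,X)=\ip{X,X}=\Tr(X^\dagger X)\). Equivalently, this is Parseval's identity for the orthonormal basis \(\B=\bigcup_\xi\B_\xi\).

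For (3), summing \eqref{eq:b-def} over \(\xi\) gives a sum over the full basis \(\B\). Writing \(\Tr(X^\dagger E^\dagger X E)=\ip{X,E^\dagger X E}\) and using linearity in the second slot, we get
\[
\sum_\xi B_\xi(X,X)=\Big\langle X,\sum_{E\in\B}E^\dagger X E\Big\rangle .
\]
The Completeness identity turns the inner sum into \(\Tr(X)\,I\). The result is therefore \(\Tr(X)\,\ip{X,I}=\Tr(X)\,\overline{\Tr(X)}=|\Tr(X)|^2\). The only point needing care is the conjugate-linearity convention: \(\ip{X,I}=\Tr(X^\dagger)=\overline{\Tr X}\).

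For (4), the sector \(\E_0\) has orthonormal basis \(\{I/\sqrt N\}\), since \(\|I\|_2^2=N\). Substituting \(E=I/\sqrt N\) into \eqref{eq:a-def} gives
\[
A_0(X,X)=\tfrac1N\,\Tr(X^\dagger)\Tr(X)=\tfrac{|\Tr X|^2}{N}.
\]
Substituting the same element into \eqref{eq:b-def} gives \(B_0(X,X)=\tfrac1N\Tr(X^\dagger X)\).

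None of these steps is a real obstacle. The only place to be careful is bookkeeping of complex conjugates in (3) and (4), and there the key input is the Completeness identity proved earlier.
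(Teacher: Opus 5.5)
Your proposal is correct and follows the paper's proof step for step. Item (1) is \(\|\proj_\xi(X)\|_2^2\ge0\), item (2) is Parseval's identity over the orthogonal decomposition, item (3) applies the Completeness identity to the sum over the full basis, and item (4) substitutes \(I/\sqrt N\) into the explicit formulas. Your extra details, namely justifying (1) via self-adjointness and idempotence of \(\proj_\xi\) and writing (3) as \(\Tr(X)\,\langle X,I\rangle\), only make explicit what the paper leaves implicit.
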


\begin{proof}
For (1), by definition,
\[
    A_\xi(X,X)=\|\proj_\xi(X)\|_2^2 \ge 0.
    \numberthis
\]
For (2), since the decomposition
\[
    \L(V)=\bigoplus_{\xi\in\mathcal E}\E_\xi
    \numberthis
\]
is orthogonal, Parseval's identity gives
\[
    \sum_\xi A_\xi(X,X)
    =
    \sum_\xi \|\proj_\xi(X)\|_2^2
    =
    \|X\|_2^2
    =
    \Tr(X^\dagger X).
    \numberthis
\]
For (3), summing \eqref{eq:b-def} over \(\xi\) yields
\[
    \sum_\xi B_\xi(X,X)
    =
    \sum_{E\in \B}\Tr(X^\dagger E^\dagger X E),
    \numberthis
\]
where \(\B\) is the Hilbert--Schmidt orthonormal basis obtained by combining
the bases \(\B_\xi\).
Using the completeness identity,
\[
    \sum_{E\in\B} E^\dagger X E=\Tr(X)\,I,
    \numberthis
\]
we obtain
\begin{align}
    \sum_\xi B_\xi(X,X)
    &=
    \Tr\!\left(X^\dagger \sum_{E\in\B} E^\dagger X E\right) \\
    &=
    \Tr(X^\dagger \Tr(X)I) \\
    &=
    \Tr(X^\dagger)\Tr(X) \\
    &=
    |\Tr(X)|^2.
\end{align}
For (4), if \(\E_0=\mathrm{span}\{I\}\), then an orthonormal basis of
\(\E_0\) is
\[
    \B_0=\left\{\frac{1}{\sqrt{N}}I\right\}.
    \numberthis
\]
Substituting this into \eqref{eq:a-def} gives
\[
    A_0(X,X)
    =
    \Tr\!\left(X^\dagger \frac{I}{\sqrt{N}}\right)
    \Tr\!\left(X \frac{I}{\sqrt{N}}\right)
    =
    \frac{|\Tr(X)|^2}{N}.
    \numberthis
\]
Similarly, substituting into \eqref{eq:b-def} gives
\[
    B_0(X,X)
    =
    \Tr\!\left(X^\dagger \frac{I}{\sqrt{N}} X \frac{I}{\sqrt{N}}\right)
    =
    \frac{\Tr(X^\dagger X)}{N}.
    \numberthis
\]
This proves the claim.
\end{proof}

\subsection{A Fundamental Inequality}

The next lemma is the bridge from enumerators to error detection.
When \(X=P\) is a code projector, equality in the inequality below is exactly
the Knill--Laflamme detection condition for the sector \(\E_\xi\).

\begin{lemma}
\label{lem:ai-bi-ineq}
If \(X\) is positive semidefinite, then
\[
    A_\xi(X,X)\le \rank(X)\,B_\xi(X,X).
    \numberthis
\]
Moreover, equality holds if and only if
\[
    \sqrt{X}\,E\,\sqrt{X}=c_E\,P_X
    \qquad\text{for all }E\in\B_\xi,
    \numberthis
\]
where \(P_X\) is the orthogonal projector onto the image of \(X\) and each
\(c_E\in\CC\) is a scalar.
\end{lemma}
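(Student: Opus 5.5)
The plan is to reduce the inequality to a termwise Cauchy--Schwarz bound, one basis element at a time. Since $X\succeq 0$, write $X=\sqrt{X}\sqrt{X}$ and set $r:=\rank(X)$. By \eqref{eq:a-def} and \eqref{eq:b-def} with $X_1=X_2=X=X^\dagger$,
\[
A_\xi(X,X)=\sum_{E\in\B_\xi}|\Tr(XE)|^2,
\qquad
B_\xi(X,X)=\sum_{E\in\B_\xi}\Tr(XE^\dagger XE).
\]
It therefore suffices to prove, for each fixed $E$, the inequality
\[
|\Tr(XE)|^2\le r\,\Tr(XE^\dagger XE),
\]
and then sum over $E\in\B_\xi$.

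For the termwise bound, introduce $Y_E:=\sqrt{X}\,E\,\sqrt{X}$. Cyclicity of the trace gives $\Tr(XE)=\Tr(Y_E)$. It also gives
\[
\Tr(Y_E^\dagger Y_E)=\Tr(\sqrt X E^\dagger X E\sqrt X)=\Tr(XE^\dagger XE),
\]
so this quantity is real and nonnegative. Because $\sqrt X$ has the same image and kernel as $X$, we have $Y_E=P_XY_EP_X$. Hence $\Tr(Y_E)=\ip{P_X,Y_E}$ in the Hilbert--Schmidt inner product. Cauchy--Schwarz then gives
\[
|\Tr(Y_E)|^2\le \|P_X\|_2^2\,\|Y_E\|_2^2=r\,\Tr(Y_E^\dagger Y_E),
\]
which is the desired termwise bound. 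Summing over $E\in\B_\xi$ proves the inequality.

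For the equality case, every termwise inequality holds with $\le$. So equality of the sums holds if and only if equality holds for each $E\in\B_\xi$. Equality in Cauchy--Schwarz holds if and only if $Y_E$ and $P_X$ are linearly dependent. Since $P_X\neq0$ whenever $X\neq0$, this means $Y_E=c_EP_X$ for some scalar $c_E$, which may be zero. The degenerate case $X=0$ is trivial if we take $P_X=0$, since then both sides vanish. Conversely, if $Y_E=c_EP_X$ for every $E\in\B_\xi$, then each term attains equality, so the sums agree.

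The only step needing care is the justification that $Y_E$ lies in the compressed space $P_X\L(V)P_X$, so that pairing with $P_X$ (of squared norm $r$ rather than $N$) is legitimate. This follows from $\sqrt X=P_X\sqrt X=\sqrt X P_X$, which holds by the spectral theorem.
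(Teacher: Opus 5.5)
Your proposal is correct and follows essentially the same route as the paper's proof. Both write $A_\xi(X,X)=\sum_E|\Tr(XE)|^2$ and $B_\xi(X,X)=\sum_E\Tr(XE^\dagger XE)$, set $Y_E=\sqrt X E\sqrt X$, use $Y_E=P_XY_EP_X$ to apply Hilbert--Schmidt Cauchy--Schwarz against $P_X$ (which has $\|P_X\|_2^2=\rank(X)$), sum over $E$, and get the equality case from the vanishing of each nonnegative termwise deficit. You are also slightly more careful than the paper: you separate the linear-dependence condition from proportionality to $P_X$ and treat the degenerate case $X=0$ explicitly.
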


\begin{proof}
Let \(P_X\) denote the orthogonal projector onto the image of \(X\).
Since \(X\) is positive semidefinite, we may write
\[
    X=\sqrt{X}\sqrt{X},
    \qquad
    \sqrt{X}=P_X\sqrt{X}=\sqrt{X}P_X.
    \numberthis
\]
Using \eqref{eq:a-def}, we have
\[
    A_\xi(X,X)
    =
    \sum_{E\in\B_\xi}
    \Tr(X^\dagger E)\Tr(XE^\dagger).
    \numberthis
\]
Because \(X\) is positive semidefinite, \(X^\dagger=X\), and hence
\[
    \Tr(XE^\dagger)=\overline{\Tr(XE)}.
    \numberthis
\]
Therefore
\[
    A_\xi(X,X)
    =
    \sum_{E\in\B_\xi} |\Tr(XE)|^2.
    \numberthis
\]
Since \(X=\sqrt{X}\sqrt{X}\) and the trace is cyclic,
\[
    \Tr(XE)=\Tr(\sqrt{X}E\sqrt{X}),
    \numberthis
\]
so
\begin{equation}
    A_\xi(X,X)
    =
    \sum_{E\in\B_\xi} |\Tr(\sqrt{X}E\sqrt{X})|^2.
    \label{eq:A-as-traces}
\end{equation}
Now fix \(E\in\B_\xi\) and set
\[
    Y_E:=\sqrt{X}E\sqrt{X}.
    \numberthis
\]
Then \(Y_E=P_XY_EP_X\), so \(Y_E\) is supported on the range of \(P_X\).
Applying the Hilbert--Schmidt Cauchy--Schwarz inequality to \(Y_E\) and
\(P_X\) gives
\[
    |\Tr(Y_E^\dagger P_X)|^2
    \le
    \Tr(Y_E^\dagger Y_E)\Tr(P_X^\dagger P_X).
    \numberthis
\]
Since \(P_XY_E=Y_E\), we have
\[
    \Tr(Y_E^\dagger P_X)=\Tr(Y_E^\dagger)
    =
    \overline{\Tr(Y_E)},
    \numberthis
\]
and
\[
    \Tr(P_X^\dagger P_X)=\Tr(P_X)=\rank(X).
    \numberthis
\]
Thus
\[
    |\Tr(Y_E)|^2
    \le
    \rank(X)\,\Tr(Y_E^\dagger Y_E).
    \numberthis
\]
Substituting back \(Y_E=\sqrt{X}E\sqrt{X}\), we obtain
\[
    |\Tr(\sqrt{X}E\sqrt{X})|^2
    \le
    \rank(X)\,\Tr(XE^\dagger X E).
    \numberthis
\]
Summing over \(E\in\B_\xi\) and using \eqref{eq:A-as-traces} together with the
definition of \(B_\xi(X,X)\) yields
\[
    A_\xi(X,X)\le \rank(X)\,B_\xi(X,X).
    \numberthis
\]
Finally, equality in Hilbert--Schmidt Cauchy--Schwarz holds if and only if
\(Y_E\) is proportional to \(P_X\), i.e.
\[
    \sqrt{X}E\sqrt{X}=c_E P_X
    \numberthis
\]
for some scalar \(c_E\in\CC\).
Since the inequality above is obtained by summing the nonnegative quantities
\[
    \rank(X)\,\Tr(XE^\dagger X E)
    -
    |\Tr(\sqrt{X}E\sqrt{X})|^2,
    \numberthis
\]
equality in the summed inequality holds if and only if equality holds for each
\(E\in\B_\xi\) individually.
This proves the claim.
\end{proof}

\subsection{Specialization to Code Projectors}

We now specialize the preceding identities to the case of a code projector.
Let \(P:V\to V\) denote the orthogonal projector onto a code
\(\C\subseteq V\), and let
\[
    K:=\dim\C.
\]

\begin{corollary}
\label{cor:code-identities}
For a code projector \(P\),
\begin{enumerate}[(1)]
\item \(A_\xi(P,P)\ge 0\);
\item \(\sum_\xi A_\xi(P,P)=K\);
\item \(\sum_\xi B_\xi(P,P)=K^2\);
\item If \(\E_0=\mathrm{span}\{I\}\), then
\[
    A_0(P,P)=\frac{K^2}{N},
    \qquad
    B_0(P,P)=\frac{K}{N};
    \numberthis
\]
\item For every \(\xi\in\mathcal E\), we have
\[
    A_\xi(P,P)\le K\,B_\xi(P,P).
    \numberthis
\]
Equality holds if and only if
\[
    PEP=c_E P
    \qquad
    \text{for all }E\in\B_\xi
    \numberthis
\]
for suitable scalars \(c_E\in\CC\), i.e.\ if and only if \(\C\) detects the
sector \(\E_\xi\).
\end{enumerate}
\end{corollary}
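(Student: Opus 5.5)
The plan is to specialize Lemma~\ref{lem:basic-identities} and Lemma~\ref{lem:ai-bi-ineq} to \(X=P\), using only three elementary facts about an orthogonal projector onto a \(K\)-dimensional subspace:
\[
P^\dagger=P,\qquad P^2=P,\qquad \Tr(P)=\rank(P)=K .
\]
Items (1)--(4) then follow by substitution. Item (1) is Lemma~\ref{lem:basic-identities}(1). For item (2), Lemma~\ref{lem:basic-identities}(2) gives
\[
\sum_\xi A_\xi(P,P)=\Tr(P^\dagger P)=\Tr(P)=K .
\]
Item (3) is \(|\Tr(P)|^2=K^2\). Item (4) is \(A_0(P,P)=K^2/N\) and \(B_0(P,P)=\Tr(P^\dagger P)/N=K/N\).

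For item (5), I would apply Lemma~\ref{lem:ai-bi-ineq} to \(X=P\), which is positive semidefinite with \(\rank(P)=K\). This immediately gives \(A_\xi(P,P)\le K\,B_\xi(P,P)\). For the equality case, note that \(\sqrt{P}=P\), since \(P\) is a positive semidefinite idempotent and hence its own square root. Also the projector onto the image of \(P\) is \(P_X=P\). The equality criterion of Lemma~\ref{lem:ai-bi-ineq} therefore reads \(PEP=c_EP\) for all \(E\in\B_\xi\).

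The remaining step is to identify this basis-level condition with detection of the whole sector \(\E_\xi\), as defined earlier. The map \(E\mapsto PEP\) is linear, and the condition \(PEP\in\CC P\) defines a linear subspace of \(\E_\xi\). So the condition holds on the basis \(\B_\xi\) if and only if it holds on all of \(\E_\xi=\operatorname{span}\B_\xi\), with \(c_E\) extending linearly. This is the remark already made after the definition of detection.

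There is no genuine obstacle. The only point needing care is the step \(\sqrt{P}=P\) and \(P_X=P\), so that Lemma~\ref{lem:ai-bi-ineq}'s condition \(\sqrt{X}E\sqrt{X}=c_EP_X\) becomes exactly the Knill--Laflamme form \(PEP=c_EP\).
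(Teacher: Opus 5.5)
Your proposal is correct and follows the paper's proof essentially verbatim: you specialize Lemma~\ref{lem:basic-identities} and Lemma~\ref{lem:ai-bi-ineq} to $X=P$ using $P^\dagger=P$, $P^2=P$, $\Tr(P)=\rank(P)=K$, and you reduce the equality condition via $\sqrt{P}=P$. Your added remarks that $P_X=P$ and that the basis-level condition extends linearly to all of $\E_\xi$ just make explicit what the paper leaves implicit.
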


\begin{proof}
Since \(P\) is an orthogonal projector, we have
\[
    P^\dagger=P,
    \qquad
    P^2=P,
    \qquad
    \Tr(P)=\rank(P)=K.
    \numberthis
\]
Thus (1)--(4) are immediate from Lemma~\ref{lem:basic-identities} applied to
\(X=P\).
Similarly, (5) follows from Lemma~\ref{lem:ai-bi-ineq} applied to \(X=P\),
since \(\rank(P)=K\).
The equality condition in Lemma~\ref{lem:ai-bi-ineq} becomes
\[
    \sqrt{P}\,E\,\sqrt{P}=c_E P.
    \numberthis
\]
Because \(\sqrt{P}=P\), this is equivalent to
\[
    PEP=c_E P
    \numberthis
\]
for all \(E\in\B_\xi\), which is exactly the Knill--Laflamme detection
condition for the sector \(\E_\xi\).
\end{proof}

\subsection{Recovery of the Shor--Laflamme enumerators}
\label{sec:sl-recovery}

The quantum Hamming-space case is already contained in Okada's
multiplicity-free quantum metric-space framework
\cite{okada2025quantumanalogdelsarteslinear}.
We spell it out here in the present projector/twirl language to make the
relation with the usual Shor--Laflamme enumerators explicit.

Let
\[
    V=(\CC^q)^{\otimes n},
    \qquad
    N:=\dim V=q^n.
\]
On one qudit, conjugation by \(U(q)\) gives the orthogonal decomposition
\[
    \L(\CC^q)
    =
    \CC I \oplus \L_0(\CC^q),
    \qquad
    \L_0(\CC^q):=\{X\in\L(\CC^q):\Tr(X)=0\}.
\]
Equivalently, as a \(U(q)\)-representation under conjugation,
\[
    \L(\CC^q)\cong \mathbf 1\oplus \Ad,
\]
where \(\Ad\) denotes the traceless adjoint sector.
Therefore
\[
    \L(V)
    =
    \L(\CC^q)^{\otimes n}
    =
    (\CC I\oplus \L_0(\CC^q))^{\otimes n}.
\]
For each subset \(S\subseteq[n]\), define
\[
    \E_S
    :=
    \bigotimes_{r=1}^n L_r(S),
    \qquad
    L_r(S):=
    \begin{cases}
        \L_0(\CC^q), & r\in S,\\
        \CC I, & r\notin S.
    \end{cases}
\]
Thus \(\E_S\) is the space of errors supported exactly on \(S\), and
\[
    \L(V)=\bigoplus_{S\subseteq[n]}\E_S.
\]
This is the support-refined decomposition associated with the local unitary
group \(U(q)^n\).

Now pass from the local group \(U(q)^n\) to the non-entangling wreath product
\[
    U(q)^n\rtimes S_n = U(q)\wr S_n,
\]
where \(S_n\) acts by permuting tensor factors.
The symmetric group permutes the support sectors \(\E_S\), and hence the
coarser \(S_n\)-orbit decomposition is indexed by the weight \(w=|S|\):
\[
    \E_w
    :=
    \bigoplus_{\substack{S\subseteq[n]\\ |S|=w}}\E_S,
    \qquad
    0\leq w\leq n.
\]
Thus \(\E_w\) is precisely the span of all \(n\)-qudit errors of Hamming weight
\(w\).
The orthogonal decomposition
\[
    \L(V)=\bigoplus_{w=0}^n \E_w
\]
is precisely the usual weight decomposition of the error space \(\L(V)\).

Let \(P\) be the projector onto a code \(\C\subseteq V\), and put
\(K:=\dim\C\).
Let \(\mathcal B_w\) be a Hilbert--Schmidt orthonormal basis of \(\E_w\).
The intrinsic enumerators of Section~\ref{sec:intrinsic-forms} are
\[
    A_w(P,P)
    =
    \sum_{E\in\mathcal B_w}|\Tr(PE)|^2,
    \qquad
    B_w(P,P)
    =
    \sum_{E\in\mathcal B_w}\Tr(PE^\dagger P E).
\]
We now compare these with the standard Shor--Laflamme enumerators.

Choose an unnormalized \(n\)-qudit Pauli/Weyl error basis $ \mathcal P_n $.
Then
\[
    \left\{N^{-1/2}E\right\}_{ E \in \mathcal P_n , \operatorname{wt}(E)=w}
\]
is a Hilbert--Schmidt orthonormal basis of \(\E_w\).
Hence
\begin{align}
    A_w(P,P)
    &=
    \frac{1}{N}
    \sum_{ E \in \mathcal P_n , \operatorname{wt}(E)=w}
    |\Tr(PE)|^2,
    \label{eq:intrinsic-Aw-SL}\\
    B_w(P,P)
    &=
    \frac{1}{N}
    \sum_{ E \in \mathcal P_n , \operatorname{wt}(E)=w}
    \Tr(PE^\dagger P E).
    \label{eq:intrinsic-Bw-SL}
\end{align}
With the usual coding-theoretic normalization,
\[
    A_w^{\mathrm{SL}}(P)
    :=
    \frac{1}{K^2}
    \sum_{ E \in \mathcal P_n , \operatorname{wt}(E)=w}
    |\Tr(P E)|^2,
\]
and
\[
    B_w^{\mathrm{SL}}(P)
    :=
    \frac{1}{K}
    \sum_{ E \in \mathcal P_n , \operatorname{wt}(E)=w}
    \Tr(P E^\dagger P E),
\]
we obtain
\[
    A_w^{\mathrm{SL}}(P)
    =
    \frac{N}{K^2}A_w(P,P),
    \qquad
    B_w^{\mathrm{SL}}(P)
    =
    \frac{N}{K}B_w(P,P).
\]
Equivalently, if one defines the normalized intrinsic enumerators by
\[
    \widetilde A_w:=\frac{N}{K^2}A_w(P,P),
    \qquad
    \widetilde B_w:=\frac{N}{K}B_w(P,P),
\]
then
\[
    \widetilde A_w=A_w^{\mathrm{SL}}(P),
    \qquad
    \widetilde B_w=B_w^{\mathrm{SL}}(P).
\]

Thus the Shor--Laflamme quantum weight enumerators are exactly the normalized
intrinsic enumerators for the non-entangling wreath-product action on
\((\CC^q)^{\otimes n}\).
The local group \(U(q)^n\) gives the finer support enumerators, while the
wreath product \(U(q)\wr S_n\) merges supports of the same cardinality and
gives the ordinary Hamming-weight enumerators.

Finally, the fundamental inequality of
Corollary~\ref{cor:code-identities} becomes
\[
    A_w^{\mathrm{SL}}(P)\leq B_w^{\mathrm{SL}}(P).
\]
Equality holds if and only if
\[
    PEP=c_E P
    \qquad
    \text{for every }E\in\E_w,
\]
or equivalently if and only if the code detects all errors of weight \(w\).
Therefore, for an ordinary distance-\(d\) quantum code, the intrinsic
sector-detection equalities
\[
    \widetilde A_w=\widetilde B_w,
    \qquad
    0\leq w<d,
\]
are precisely the usual Shor--Laflamme enumerator equalities.

\section{Equivariant MacWilliams Theory}
\label{sec:mult}

Section~II considered an arbitrary orthogonal decomposition of $\L(V)$ into
intrinsic error sectors and developed the associated quadratic enumerators,
normalization identities, and Knill--Laflamme inequalities. At that level of
generality, however, there is no canonical relation between the projector and
twirl enumerators. We now impose additional symmetry by assuming that $V$
carries a unitary representation of a group $G$ and taking the error sectors
to be the isotypic components of the induced conjugation representation on
\[
    \L(V)\cong V^*\otimes V.
\]
The symmetry forces both the projector and twirl constructions to lie in the
same equivariant intertwiner algebra and thereby produces a MacWilliams-type
duality between the corresponding enumerators.

In general, an irreducible representation may occur with nontrivial
multiplicity in $\L(V)$. The resulting intertwiner algebra
\[
    \Hom_G(\L(V),\L(V))
\]
is then noncommutative, and scalar data attached only to individual
irreducible copies do not capture its full structure. Instead, the
enumerators are naturally matrix-valued on the multiplicity spaces, the
MacWilliams transform is a unitary change of basis on the full intertwiner
algebra, and the corresponding positivity and Knill--Laflamme conditions are
matrix inequalities. Consequently, the natural optimization problem is a
semidefinite program rather than a linear program
\cite{BachocInvariantSDP}. We develop this general matrix-valued theory in the
remainder of this section.

A special case occurs when the conjugation representation is
multiplicity--free. Then every multiplicity space is one-dimensional, the
intertwiner algebra becomes commutative, the matrix-valued enumerators reduce
to scalars, and the semidefinite feasibility problem collapses to a linear
program. This multiplicity--free theory has been substantially developed
previously in the language of quantum metric spaces: the $\SU 2$ case goes
back to Bumgardner~\cite{bumgardner2012codeswastmetricspacestheory}, while
Okada developed the corresponding linear-programming framework for broader
classes of sufficiently symmetric quantum metric spaces
\cite{okada2025quantumanalogdelsarteslinear}. The multiplicity--free
specialization is therefore not a principal novelty of the present work. However for completeness, and to express it in the same projector/twirl notation as
the general theory, we develop this specialization in
Appendix~\ref{sec:mf-macwilliams}; symmetric-power and permutation-invariant
examples are collected in Appendix~\ref{sec:mf-examples}.

\subsection{Multiplicity spaces and block structure}

Let $V$ be a finite-dimensional unitary representation of a group $G$.
The operator space
\[
\L(V)\cong V^*\otimes V \numberthis
\]
is a unitary $G$-representation under conjugation, with canonical isotypic
decomposition
\[
\L(V)=\bigoplus_{\xi\in\widehat G}\E_\xi, \numberthis
\]
where \(\E_\xi\) is the \(\xi\)-isotypic component.
Let \(m_\xi\) denote the multiplicity of \(\xi\) in \(\L(V)\), and let
\(d_\xi:=\dim(\xi)\).

Choosing an orthogonal decomposition
\[
\E_\xi=\bigoplus_{\alpha=1}^{m_\xi}\E_{\xi\alpha},
\qquad
\E_{\xi\alpha}\cong \xi, \numberthis
\]
amounts to choosing an orthonormal basis of the multiplicity space attached to
\(\xi\).
Unlike the isotypic decomposition itself, this further splitting is not
canonical.
Equivalently, after such a choice one may identify
\[
\E_\xi\cong \xi\otimes \CC^{m_\xi}. \numberthis
\]

Schur's lemma then gives
\[
\Hom_G(\E_\xi,\E_\xi)\cong \text{End}(\CC^{m_\xi})\cong \CC^{m_\xi\times m_\xi}, \numberthis
\]
and hence
\[
\Hom_G(\L(V),\L(V))
\cong
\bigoplus_{\xi\in\widehat G}\CC^{m_\xi\times m_\xi}. \numberthis
\]
In particular,
\[
\dim \Hom_G(\L(V),\L(V))=\sum_\xi m_\xi^2, \numberthis
\]
which is strictly larger than \(\sum_\xi m_\xi\) whenever some \(m_\xi>1\).

Thus scalar data indexed only by \((\xi,\alpha)\) cannot capture the full
equivariant structure: the diagonal directions account for only \(m_\xi\)
degrees of freedom in the \(\xi\)-block, whereas the full intertwiner algebra on
that block has dimension \(m_\xi^2\).
The missing directions are the off-diagonal intertwiners between distinct copies
\(\E_{\xi\alpha}\) and \(\E_{\xi\beta}\).
Accordingly, the equivariant enumerator data are naturally matrix-valued on
each multiplicity space.

The role of the off-diagonal entries can be seen directly. Fix a sector
\(\xi\), and for \(c=(c_1,\ldots,c_{m_\xi})\in\CC^{m_\xi}\) form the coherent
error directions
\[
    U_i(c):=\sum_{\alpha=1}^{m_\xi}c_\alpha E_i^\alpha,
    \qquad 1\le i\le d_\xi .
\]
The scalar projector and twirl quantities associated with this coherent
direction are precisely \(c^\dagger A_\xi c\) and \(c^\dagger B_\xi c\). Hence
requiring the rank-\(K\) inequality for every coherent combination \(c\) is
equivalent to a Loewner inequality on the whole block,
\[
    c^\dagger A_\xi c\le K\,c^\dagger B_\xi c\ \ \text{for all }c
    \quad\Longleftrightarrow\quad
    A_\xi\preceq K B_\xi ,
\]
and detection of the full isotypic sector is equivalent to \(A_\xi=K B_\xi\).
Retaining only the diagonal entries would test the basis directions
\(c=e_\alpha\) but discard the interference terms that control coherent
superpositions of equivalent irreducible copies. This is the elementary reason
that multiplicity forces matrix-valued enumerators and semidefinite, rather
than linear, constraints.

The next two subsections construct two orthonormal bases of
\(\Hom_G(\L(V),\L(V))\): a projector matrix-unit basis built from generalized
projectors, and a twirl-derived orthogonal basis built from generalized twirls.
The matrix-valued MacWilliams transform will arise as the change of basis
between these two families.

\subsection{Projector matrix units}

For each \(\xi\), let
\[
\Pi_{\xi\alpha}:\L(V)\to\L(V) \numberthis
\]
denote the orthogonal projector onto \(\E_{\xi\alpha}\).
Choose isometric \(G\)-intertwiners
\[
\phi_{\alpha\beta}^{\xi}:\E_{\xi\alpha}\to\E_{\xi\beta} \numberthis
\]
satisfying
\[
\phi_{\alpha\alpha}^{\xi}=1,
\qquad
\phi_{\beta\alpha}^{\xi}=(\phi_{\alpha\beta}^{\xi})^\dagger,
\qquad
\phi_{\beta\gamma}^{\xi}\phi_{\alpha\beta}^{\xi}=\phi_{\alpha\gamma}^{\xi}. \numberthis
\]
For example, one may choose arbitrary isometries
\(\phi_{1\alpha}^{\xi}:\E_{\xi1}\to\E_{\xi\alpha}\) and then define
\[
\phi_{\alpha\beta}^{\xi}:=\phi_{1\beta}^{\xi}(\phi_{1\alpha}^{\xi})^\dagger. \numberthis
\]

We define the associated \emph{projector matrix units} by
\begin{equation}\label{eq:projector-matrix-unit}
\Pi_{\xi\alpha\beta}:=\phi_{\alpha\beta}^{\xi}\,\Pi_{\xi\alpha}.
\end{equation}
For \(\alpha=\beta\), this recovers the orthogonal projector onto
\(\E_{\xi\alpha}\). For \(\alpha\neq\beta\), it is the corresponding
off-diagonal matrix unit.

Choose an orthonormal basis
\[
\B_{\xi\alpha}=\{E_i^\alpha\}_{i=1}^{d_\xi} \numberthis
\]
of \(\E_{\xi\alpha}\), and define the corresponding basis of \(\E_{\xi\beta}\) by
\[
E_i^\beta:=\phi_{\alpha\beta}^{\xi}(E_i^\alpha),
\qquad
1\le i\le d_\xi. \numberthis
\]
Then
\begin{equation}\label{eq:projector-matrix-unit-formula}
\Pi_{\xi\alpha\beta}(X)
=
\sum_{i=1}^{d_\xi}\ip{E_i^\alpha,X}\,E_i^\beta,
\qquad
X\in\L(V). \numberthis
\end{equation}

Each \(\Pi_{\xi\alpha\beta}\) is \(G\)-equivariant, and the matrix-unit
relations hold:
\begin{equation}\label{eq:projector-matrix-unit-rel}
\Pi_{\xi\alpha\beta}\Pi_{\rho\mu\nu}
=
\delta_{\xi\rho}\,\delta_{\alpha\nu}\,\Pi_{\xi\mu\beta}.  \numberthis
\end{equation}
Moreover, with respect to the Hilbert--Schmidt inner product on
\(\L(\L(V))\)
\begin{equation}\label{eq:projector-matrix-unit-orth}
\lsuper \Pi_{\xi\alpha\beta},\Pi_{\rho\mu\nu}\rsuper
=
\Tr(\Pi_{\xi\alpha\beta}^\dagger\Pi_{\rho\mu\nu})
=
\delta_{\xi\rho}\,\delta_{\alpha\mu}\,\delta_{\nu\beta}\,d_\xi. \numberthis
\end{equation}
Hence
\[
\left\{\frac{1}{\sqrt{d_\xi}}\Pi_{\xi\alpha\beta}\right\}_{\xi,\alpha,\beta} \numberthis
\]
is an orthonormal basis of \(\Hom_G(\L(V),\L(V))\).

\subsection{The twirl-derived orthogonal basis}

Fix orthonormal bases \(\B_{\xi\alpha}\) of each \(\E_{\xi\alpha}\).
For each \(\xi,\alpha,\beta\), define the \emph{generalized twirls} by
\begin{equation}\label{eq:twirl-matrix-unit}
\Twirl_{\xi\alpha\beta}(M)
:=
\sum_{E\in\B_{\xi\alpha}} E^\dagger\,M\,\phi_{\alpha\beta}^{\xi}(E),
\qquad
M\in\L(V). \numberthis
\end{equation}
Writing \(\B_{\xi\alpha}=\{E_i^\alpha\}_{i=1}^{d_\xi}\) and
\(E_i^\beta:=\phi_{\alpha\beta}^{\xi}(E_i^\alpha)\), this becomes
\begin{equation}\label{eq:twirl-matrix-unit-ordered}
\Twirl_{\xi\alpha\beta}(M)
=
\sum_{i=1}^{d_\xi} (E_i^\alpha)^\dagger\,M\,E_i^\beta.
\end{equation}
This definition is independent of the choice of orthonormal basis
\(\B_{\xi\alpha}\), and each \(\Twirl_{\xi\alpha\beta}\) is \(G\)-equivariant.

To compute their inner products, let \(\B\) be any orthonormal basis of
\(\L(V)\). Then
\[
\lsuper \Twirl_{\xi\alpha\beta},\Twirl_{\rho\mu\nu}\rsuper
=
\sum_{M\in\B}
\ip{\Twirl_{\xi\alpha\beta}(M),\Twirl_{\rho\mu\nu}(M)}. \numberthis
\]
Substituting \eqref{eq:twirl-matrix-unit-ordered} gives
\[
\sum_{M\in\B}\sum_{i,j}
\Tr\!\left(
(E_i^\beta)^\dagger M^\dagger E_i^\alpha
(E_j^\mu)^\dagger M E_j^\nu
\right). \numberthis
\]
Using the completeness identity from Section~\ref{sec:intrinsic-forms}, the sum
over \(M\) collapses to
\[
\sum_{i,j}
\Tr\!\left((E_i^\beta)^\dagger E_j^\nu\right)
\Tr\!\left(E_i^\alpha (E_j^\mu)^\dagger\right). \numberthis
\]
Since the families \(\{E_i^\alpha\}\) and \(\{E_j^\mu\}\) are orthonormal and
mutually orthogonal unless \((\xi,\alpha)=(\rho,\mu)\), this becomes
\[
\lsuper\Twirl_{\xi\alpha\beta},\Twirl_{\rho\mu\nu}\rsuper
=
\delta_{\xi\rho}\,\delta_{\alpha\mu}\,\delta_{\beta\nu}\,d_\xi. \numberthis
\]
Hence
\[
\left\{\frac{1}{\sqrt{d_\xi}}\Twirl_{\xi\alpha\beta}\right\}_{\xi,\alpha,\beta} \numberthis
\]
is a second orthonormal basis of \(\Hom_G(\L(V),\L(V))\), which we call the
\emph{twirl-derived orthogonal basis}.

\subsection{The matrix-valued MacWilliams transform}

The projector matrix-unit basis and the twirl-derived orthogonal basis
constructed above are two orthonormal bases of the same Hilbert space
\[
\Hom_G(\L(V),\L(V)) \numberthis
\]
with respect to the Hilbert--Schmidt inner product.
Consequently the change of basis between them is unitary.
More precisely, define the matrix \(U\) with rows indexed by the output triple
\((\rho,\mu,\nu)\) and columns by the source triple \((\xi,\alpha,\beta)\) by
\begin{equation}\label{eq:U-mult-def}
[U]_{(\rho,\mu,\nu),(\xi,\alpha,\beta)}
:=
\Big\langle
\frac{1}{\sqrt{d_{\xi}}}\Pi_{\xi\alpha\beta},
\frac{1}{\sqrt{d_{\rho}}}\Twirl_{\rho\mu\nu}
\Big\rangle . \numberthis
\end{equation}

\begin{proposition}[Block MacWilliams transform]
\label{prop:block-macwilliams}
The matrix \(U\) defined in \eqref{eq:U-mult-def} is unitary.
It gives the change of basis between the projector matrix-unit basis
\(\{\Pi_{\xi\alpha\beta}\}\) and the twirl-derived orthogonal basis
\(\{\Twirl_{\xi\alpha\beta}\}\) of \(\Hom_G(\L(V),\L(V))\).
\end{proposition}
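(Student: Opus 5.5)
The plan is to reduce the statement to the elementary fact that the Gram matrix between two orthonormal bases of the same finite-dimensional Hilbert space is unitary. Everything therefore rests on confirming that both normalized families, \(\{d_\xi^{-1/2}\Pi_{\xi\alpha\beta}\}\) and \(\{d_\rho^{-1/2}\Twirl_{\rho\mu\nu}\}\), are orthonormal bases of \(\Hom_G(\L(V),\L(V))\) with respect to \(\ipp{\cdot,\cdot}\).

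\textbf{Step 1: both families lie in the intertwiner space.}

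For the projector family, \(\Pi_{\xi\alpha}\) is \(G\)-equivariant because \(\E_{\xi\alpha}\) is \(G\)-stable and the conjugation action is unitary. Hence \(\Pi_{\xi\alpha\beta}=\phi^\xi_{\alpha\beta}\Pi_{\xi\alpha}\) is equivariant, being a composite of intertwiners.

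For the twirl family, I would verify equivariance directly. Write \(g\cdot M=\rho(g)M\rho(g)^\dagger\). Then
\[
\Twirl_{\xi\alpha\beta}(g\cdot M)=\sum_i (E_i^\alpha)^\dagger \rho(g)M\rho(g)^\dagger E_i^\beta .
\]
Substituting \(E_i^\alpha=g\cdot F_i^\alpha\), where \(F_i^\alpha:=g^{-1}\cdot E_i^\alpha\), the expression becomes \(g\cdot\sum_i (F_i^\alpha)^\dagger M F_i^\beta\). This step needs two facts:
\begin{itemize}
\item \(\{F_i^\alpha\}\) is again an orthonormal basis of \(\E_{\xi\alpha}\), since conjugation is unitary and preserves the sector;
\item \(F_i^\beta=\phi^\xi_{\alpha\beta}(F_i^\alpha)\), by equivariance of \(\phi^\xi_{\alpha\beta}\).
\end{itemize}
Basis independence of \(\Twirl_{\xi\alpha\beta}\) then finishes the argument. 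That independence follows exactly as in the earlier lemma for \(\twirl_\xi\): a unitary change of basis \(Q\) on \(\E_{\xi\alpha}\) is transported by \(\phi^\xi_{\alpha\beta}\) to the same \(Q\) on \(\E_{\xi\beta}\), and the factors \(\sum_j\overline{q_{jk}}q_{j\ell}=\delta_{k\ell}\) cancel.

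\textbf{Step 2: orthonormality and dimension count.}

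I take the orthogonality relations \eqref{eq:projector-matrix-unit-orth} and the twirl inner-product computation as given. Each normalized family is therefore orthonormal and has \(\sum_\xi m_\xi^2\) elements. By Schur's lemma this equals \(\dim\Hom_G(\L(V),\L(V))\), so each family is an orthonormal basis.

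The one point I would double-check is the twirl orthogonality in the case \(\xi\ne\rho\) but \(\alpha=\mu\) as indices. The factor \(\Tr(E_i^\alpha(E_j^\mu)^\dagger)=\overline{\ip{E_i^\alpha,E_j^\mu}}\) vanishes whenever the underlying sectors \(\E_{\xi\alpha}\) and \(\E_{\rho\mu}\) differ. That covers all cases except \(\xi=\rho\), \(\alpha=\mu\). In that case the remaining factor \(\ip{E_i^\beta,E_j^\nu}\) forces \(\beta=\nu\).

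\textbf{Step 3: unitarity.}

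Let \(\{e_a\}\) and \(\{f_b\}\) be orthonormal bases of a common finite-dimensional inner product space, and set \(U_{ba}=\ip{e_a,f_b}\). Then \(e_a=\sum_b \overline{\ip{e_a,f_b}}\,f_b\), up to the conjugation convention, so the rows of \(U\) give the expansion coefficients linking the two bases. We get
\[
(UU^\dagger)_{bb'}=\sum_a \ip{e_a,f_b}\overline{\ip{e_a,f_{b'}}}=\ip{f_{b'},f_b}=\delta_{bb'}
\]
by Parseval in the basis \(\{e_a\}\). The identity \(U^\dagger U=I\) follows symmetrically, or from squareness. This shows that \(U\) is unitary and is precisely the change-of-basis matrix between the two bases.

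\textbf{Main obstacle.}

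The only nontrivial point is Step 1's claim that each \(\Twirl_{\xi\alpha\beta}\) genuinely lies in \(\Hom_G\). Without it the twirls would only be an orthonormal set in \(\L(\L(V))\), as happened in the non-equivariant setting of Section~\ref{sec:intrinsic-forms}, and the change of basis would fail. I would handle this with the basis-independence argument above, taking care that \(\phi^\xi_{\alpha\beta}\) intertwines the transported bases.
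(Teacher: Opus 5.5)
Your proposal is correct and takes the same route as the paper. The paper's proof simply observes that both normalized families are orthonormal bases of $\Hom_G(\L(V),\L(V))$, so their Gram matrix $U$ is unitary. Your check that each $\Twirl_{\xi\alpha\beta}$ is $G$-equivariant (via basis independence transported by $\phi^\xi_{\alpha\beta}$), together with the Schur count $\sum_\xi m_\xi^2$, fills in details the paper asserts without proof.
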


\begin{proof}
Both families
\[
\left\{\frac{1}{\sqrt{d_\xi}}\Pi_{\xi\alpha\beta}\right\}_{\xi,\alpha,\beta},
\qquad
\left\{\frac{1}{\sqrt{d_\xi}}\Twirl_{\xi\alpha\beta}\right\}_{\xi,\alpha,\beta} \numberthis
\]
are orthonormal bases of the same finite--dimensional Hilbert space
\(\Hom_G(\L(V),\L(V))\).
Therefore the matrix of inner products between them is unitary.
\end{proof}

Expanding \eqref{eq:U-mult-def} using the explicit formulas for the matrix units
yields a trace expression.  Writing
\(\B_{\xi\alpha}=\{E_i^\alpha\}_{i=1}^{d_\xi}\) and
\(E_i^\beta=\phi_{\alpha\beta}^{\xi}(E_i^\alpha)\), we obtain
\begin{equation}\label{eq:U-mult-trace}
[U]_{(\rho,\mu,\nu),(\xi,\alpha,\beta)}
=
\frac{1}{\sqrt{d_\xi d_\rho}}
\sum_{i=1}^{d_\xi}\sum_{j=1}^{d_\rho}
\Tr\!\left(
(E_i^\beta)^\dagger (E_j^\mu)^\dagger
E_i^\alpha E_j^\nu
\right). \numberthis
\end{equation}

To pass from the unitary basis change \(U\) to the transform acting on the
\emph{unnormalized} enumerators, we must undo the Plancherel normalization on
the projector side. Write \(a=(\xi,\alpha,\beta)\) for a source index and
\(b=(\rho,\mu,\nu)\) for an output index, with \(d_a=d_\xi\) and
\(d_b=d_\rho\), and let
\begin{equation}\label{eq:R-mult-def}
[R]_{ba}:=\langle \Pi_{a},\Twirl_{b}\rangle_{\mathrm{HS}}
=\sqrt{d_a d_b}\,[U]_{ba} \numberthis
\end{equation}
be the raw overlap. Expanding a twirl in the (non-normalized) projector basis
divides by \(\lVert\Pi_a\rVert_{\mathrm{HS}}^2=d_a\); the resulting coefficient
matrix is therefore
\begin{equation}\label{eq:M-mult-def}
[M]_{ba}
:=
\sqrt{\frac{d_b}{d_a}}\,[U]_{ba}
=
\frac{[R]_{ba}}{d_a},
\qquad
D_{aa}:=d_a. \numberthis
\end{equation}
Equivalently,
\[
M=D^{1/2}UD^{-1/2},
\]
with the output index as the row of \(M\). The raw overlap \([R]\) is \emph{not} the
enumerator coefficient matrix; the factor \(1/d_a\) is required. The matrix
\(M\) is generally not unitary, but it obeys the weighted-orthogonality
relation
\begin{equation}\label{eq:M-mult-weighted-orth}
\boxed{\,M\,D\,M^\dagger=D\,,} \numberthis
\end{equation}
since
\[
[MDM^\dagger]_{bb'}
=\sum_a \sqrt{\tfrac{d_b}{d_a}}\,U_{ba}\;d_a\;
        \sqrt{\tfrac{d_{b'}}{d_a}}\,\overline{U_{b'a}}
=\sqrt{d_b d_{b'}}\sum_a U_{ba}\overline{U_{b'a}}
=\sqrt{d_b d_{b'}}\,[UU^\dagger]_{bb'}
=\sqrt{d_b d_{b'}}\,\delta_{bb'}
=D_{bb'},
\]
using unitarity of \(U\).

\begin{lemma}[Trivial-row audit]
\label{lem:trivial-row-audit}
If \(V\) is irreducible of dimension \(N\), the trivial sector is
\(\E_{\mathbf1}=\CC I\) and \(\Twirl_{\mathbf1}=\tfrac1N\,\mathrm{id}_{\L(V)}\),
so
\[
B_{\mathbf1}(X)=\frac1N\sum_{\xi,\alpha}[A_\xi(X)]_{\alpha\alpha}.
\]
Equivalently, the trivial row of \(M\) equals \(1/N\) on every diagonal
\(A\)-coordinate and \(0\) on every off-diagonal coordinate. This identity is a
convenient check on any explicitly computed coefficient matrix \(M\).
\end{lemma}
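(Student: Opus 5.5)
The plan has four steps: identify the trivial sector via Schur's lemma, compute \(\Twirl_{\mathbf1}\) directly from its definition, expand the identity superoperator in the projector matrix units, and then read off both the enumerator identity and the trivial row of \(M\).

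\textbf{Step 1: the trivial sector.} The trivial isotypic component of \(\L(V)\) under conjugation is the space of \(G\)-fixed operators, namely \(\Hom_G(V,V)\). Since \(V\) is irreducible, Schur's lemma gives \(\Hom_G(V,V)=\CC I\). Hence \(m_{\mathbf1}=1\) and \(d_{\mathbf1}=1\). There is only the single index pair \(\alpha=\beta=1\), and \(\phi^{\mathbf1}_{11}=1\). An orthonormal basis of \(\E_{\mathbf1}\) is \(\{I/\sqrt N\}\).

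\textbf{Step 2: computing \(\Twirl_{\mathbf1}\).} Substituting this basis into \eqref{eq:twirl-matrix-unit} gives \(\Twirl_{\mathbf1}(M)=\tfrac1N I\,M\,I=\tfrac1N M\). So \(\Twirl_{\mathbf1}=\tfrac1N\,\mathrm{id}_{\L(V)}\). This mirrors item (4) of Lemma~\ref{lem:basic-identities}.

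\textbf{Step 3: expanding the identity in the projector basis.} The isotypic decomposition and its refinement \(\E_\xi=\bigoplus_\alpha\E_{\xi\alpha}\) are orthogonal. Therefore
\[
\mathrm{id}_{\L(V)}=\sum_{\xi,\alpha}\Pi_{\xi\alpha}=\sum_{\xi,\alpha}\Pi_{\xi\alpha\alpha},
\]
and
\[
\Twirl_{\mathbf1}=\tfrac1N\sum_{\xi,\alpha}\Pi_{\xi\alpha\alpha}.
\]
Pairing with \(X\), and using that the \(A\)-enumerator coordinates are the sesquilinear forms \([A_\xi(X)]_{\alpha\beta}\) induced by \(\Pi_{\xi\alpha\beta}\) (in the same way that \(B\) is induced by \(\Twirl\)), gives
\[
B_{\mathbf1}(X)=\tfrac1N\sum_{\xi,\alpha}\ip{X,\Pi_{\xi\alpha\alpha}X}=\tfrac1N\sum_{\xi,\alpha}[A_\xi(X)]_{\alpha\alpha}.
\]
By Parseval this equals \(\|X\|_2^2/N\), which serves as a consistency check against Lemma~\ref{lem:basic-identities}.

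\textbf{Step 4: the trivial row of \(M\).} To confirm the row statement directly from \eqref{eq:M-mult-def}, I compute
\[
[M]_{\mathbf1,(\xi,\alpha,\beta)}=\frac{\ipp{\Pi_{\xi\alpha\beta},\tfrac1N\mathrm{id}}}{d_\xi}=\frac{\Tr(\Pi_{\xi\alpha\beta}^\dagger)}{N d_\xi}.
\]

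\begin{itemize}
\item For \(\alpha=\beta\), \(\Pi_{\xi\alpha\alpha}\) is an orthogonal projector of rank \(d_\xi\). Its trace is \(d_\xi\), so the entry is \(1/N\).
\item For \(\alpha\neq\beta\), \(\Pi_{\xi\alpha\beta}\) maps \(\L(V)\) into \(\E_{\xi\beta}\) and annihilates \(\E_{\xi\beta}\subseteq\E_{\xi\alpha}^\perp\). Hence it is nilpotent, with square zero by \eqref{eq:projector-matrix-unit-rel}, so its trace vanishes and the entry is \(0\).
\end{itemize}

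\textbf{Main obstacle.} The only delicate point is fixing the convention for the matrix-valued \(A\)-coordinates and the row/column orientation of \(M\), so that "diagonal coordinate" matches \(\Pi_{\xi\alpha\alpha}\). I would settle this by writing every quantity as a Hilbert--Schmidt overlap with \(\Pi_{\xi\alpha\beta}\), as done above.
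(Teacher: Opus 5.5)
Your proposal is correct and takes essentially the paper's route. The paper gives no separate proof; the lemma's statement itself carries the argument: Schur's lemma identifies the trivial sector as $\CC I$, twirling over $\{I/\sqrt N\}$ gives $\tfrac1N\,\mathrm{id}$, and writing the identity as $\sum_{\xi,\alpha}\Pi_{\xi\alpha\alpha}$ gives the enumerator identity. Your Step~4 trace computation ($\Tr\Pi_{\xi\alpha\alpha}=d_\xi$, and $\Pi_{\xi\alpha\beta}$ square-zero hence traceless for $\alpha\neq\beta$) is a valid, slightly more explicit check of the row statement, which also follows from the uniqueness of the expansion of $\Twirl_{\mathbf 1}$ in the orthogonal basis $\{\Pi_{\xi\alpha\beta}\}$.
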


Different coherent choices of the intertwiners
\(\phi_{\alpha\beta}^{\xi}\) correspond to changing the orthonormal basis of the
multiplicity space \(\CC^{m_\xi}\) attached to each \(\xi\).
Such a change conjugates the enumerators \(A_\xi\) and \(B_\xi\) by a unitary
acting on \(\CC^{m_\xi}\), and hence does not affect any positivity or
feasibility conditions in the semidefinite programs derived below.

\subsection{Matrix-valued intrinsic enumerators}

We now associate intrinsic enumerator data to the multiplicity spaces
introduced above. For each \(\xi,\alpha,\beta\), associate to the projector matrix unit
\(\Pi_{\xi\alpha\beta}\) the sesquilinear form
\[
a_{\xi\alpha\beta}(X_1,X_2)
:=
\langle X_1,\Pi_{\xi\alpha\beta}(X_2)\rangle , \numberthis
\]
and to the generalized twirl \(\Twirl_{\xi\alpha\beta}\) the form
\[
b_{\xi\alpha\beta}(X_1,X_2)
:=
\langle X_1,\Twirl_{\xi\alpha\beta}(X_2)\rangle .\numberthis
\]

Using the explicit formulas for the matrix units, these forms admit the
computational expressions
\begin{align}
a_{\xi\alpha\beta}(X_1,X_2)
&=
\sum_{i=1}^{d_\xi}
\Tr(X_1^\dagger E_i^\beta)\,
\Tr(X_2 (E_i^\alpha)^\dagger),
\label{eq:a-formula}
\\
b_{\xi\alpha\beta}(X_1,X_2)
&=
\sum_{i=1}^{d_\xi}
\Tr\!\big(X_1^\dagger (E_i^\alpha)^\dagger X_2 E_i^\beta\big). \numberthis
\label{eq:b-formula}
\end{align}

For a fixed operator \(X\in\L(V)\) and fixed irreducible label \(\xi\), define
the \(m_\xi\times m_\xi\) matrices
\begin{align}
[A_\xi(X)]_{\alpha\beta} &:= a_{\xi\alpha\beta}(X,X),
\label{eq:A-xi-def}
\\
[B_\xi(X)]_{\alpha\beta} &:= b_{\xi\alpha\beta}(X,X).
\label{eq:B-xi-def} \numberthis
\end{align}

We call \(A_\xi(X)\) and \(B_\xi(X)\) the \emph{matrix-valued intrinsic
enumerators} of \(X\) in the \(\xi\)-isotypic sector. The matrix-valued enumerators are related by the block MacWilliams transform.

\begin{proposition}[Matrix MacWilliams identity]
\label{prop:matrix-macwilliams}
Let \(X\in\L(V)\).
Then the matrices \(A_\xi(X)\) and \(B_\xi(X)\) satisfy
\begin{equation}
[B_\rho(X)]_{\mu\nu}
=
\sum_{\xi}\sum_{\alpha,\beta}
[M]_{(\rho,\mu,\nu),(\xi,\alpha,\beta)}
\,[A_\xi(X)]_{\alpha\beta}. \numberthis
\label{eq:block-macwilliams}
\end{equation}
Equivalently, if the matrices \(A_\xi(X)\) and \(B_\xi(X)\) are vectorized and
stacked over all \(\xi\), then
\[
\mathrm{vec}(B)=M\,\mathrm{vec}(A), \numberthis
\]
where \(M\) is the block MacWilliams matrix defined in
\eqref{eq:M-mult-def}.
\end{proposition}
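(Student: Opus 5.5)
The identity will follow from linearity. Every generalized twirl is expanded in the projector matrix-unit basis of \(\Hom_G(\L(V),\L(V))\), and the sesquilinear forms \(a\) and \(b\) are linear in the superoperator that defines them.

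First I need to know that each \(\Twirl_{\rho\mu\nu}\) lies in \(\Hom_G(\L(V),\L(V))\), which was asserted when the generalized twirls were defined. Since \(\{d_\xi^{-1/2}\Pi_{\xi\alpha\beta}\}\) is an orthonormal basis of that space by \eqref{eq:projector-matrix-unit-orth}, we can write
\[
\Twirl_{b}=\sum_{a}\frac{\langle \Pi_a,\Twirl_b\rangle}{\langle\Pi_a,\Pi_a\rangle}\,\Pi_a
=\sum_a \frac{[R]_{ba}}{d_a}\,\Pi_a=\sum_a [M]_{ba}\,\Pi_a ,
\]
with \(a=(\xi,\alpha,\beta)\) and \(b=(\rho,\mu,\nu)\). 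The last equality is the definition \eqref{eq:M-mult-def}. The division by \(d_a\) is legitimate because the \(\Pi_a\) are mutually orthogonal with squared norm \(d_a\). Expressing \([R]_{ba}=\sqrt{d_ad_b}\,[U]_{ba}\) in terms of \(U\) from \eqref{eq:U-mult-def} confirms that the coefficient is \(\sqrt{d_b/d_a}\,[U]_{ba}\).

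Next, apply both sides to \(X\) and pair with \(X\) in the Hilbert--Schmidt inner product, which is linear in its second slot. This gives
\[
b_{\rho\mu\nu}(X,X)=\langle X,\Twirl_{\rho\mu\nu}(X)\rangle=\sum_{\xi,\alpha,\beta}[M]_{(\rho,\mu,\nu),(\xi,\alpha,\beta)}\,\langle X,\Pi_{\xi\alpha\beta}(X)\rangle .
\]
By the definitions \eqref{eq:A-xi-def} and \eqref{eq:B-xi-def}, this is exactly \eqref{eq:block-macwilliams}. Stacking the entries gives \(\mathrm{vec}(B)=M\,\mathrm{vec}(A)\).

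The main point needing care is the index and conjugation conventions. The inner product is conjugate-linear in its first argument, so the expansion coefficient of \(\Twirl_b\) along \(\Pi_a\) is \(\langle\Pi_a,\Twirl_b\rangle\), with \(\Pi_a\) in the first slot. This matches \eqref{eq:U-mult-def} and the ordering in \eqref{eq:R-mult-def}, so no complex conjugate of \(M\) appears. I would also make sure that the \(\Pi_{\xi\alpha\beta}\) and \(\Twirl_{\xi\alpha\beta}\) are built from the same coherent choice of intertwiners \(\phi^\xi_{\alpha\beta}\). This is what makes the orthonormality relations hold as stated. The equivariance of \(\Twirl_{\rho\mu\nu}\), which is needed for it to lie in the span, follows because conjugation by \(G\) permutes orthonormal bases of \(\E_{\rho\mu}\) unitarily and commutes with \(\phi^\rho_{\mu\nu}\). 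Basis-independence, proved as in the earlier lemma, then gives invariance.
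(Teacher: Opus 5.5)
Your proposal is correct and follows the paper's own argument: expand each generalized twirl $\Twirl_{\rho\mu\nu}$ in the orthogonal projector matrix-unit basis with coefficients $[M]_{ba}=\langle\Pi_a,\Twirl_b\rangle/d_a$, then pair with $X$ using linearity of the Hilbert--Schmidt inner product in its second slot. Your checks on the conjugation convention, on dividing by the squared norm $d_a$ of $\Pi_a$, and on the $G$-equivariance of the twirls supply details that the paper's one-line proof leaves implicit.
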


\begin{proof}
The identity follows directly from the change of basis between the projector
matrix-unit basis and the twirl-derived orthogonal basis.
Indeed, expanding \(\Twirl_{\rho\mu\nu}\) in the projector basis using the
matrix \(M\) and evaluating the inner products defining
\(a_{\xi\alpha\beta}\) and \(b_{\rho\mu\nu}\) yields
\eqref{eq:block-macwilliams}.
\end{proof}

\subsection{Block positivity and normalization}

The matrix-valued intrinsic enumerators satisfy natural positivity and
normalization properties.
For Hermitian matrices we use the Loewner order
\[
A \preceq B, \numberthis
\]
meaning that \(B-A\) is positive semidefinite.

\begin{lemma}[Block positivity]\label{lem:block-psd}
Let \(X\in\L(V)\).
For every irreducible label \(\xi\), the matrix \(A_\xi(X)\) is Hermitian
positive semidefinite.
If moreover \(X\succeq 0\), then \(B_\xi(X)\) is also Hermitian positive
semidefinite.

In addition, the diagonal traces satisfy the normalization identities
\begin{align}
\sum_{\xi}\Tr A_\xi(X) &= \Tr(X^\dagger X), \label{eq:A-parseval}\\
\sum_{\xi}\Tr B_\xi(X) &= |\Tr(X)|^2. \numberthis \label{eq:B-parseval}
\end{align}
\end{lemma}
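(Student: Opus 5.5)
Each of the four claims reduces to an explicit Gram-matrix or sum-of-squares form built from the concrete formulas \eqref{eq:a-formula} and \eqref{eq:b-formula}.

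\textbf{Step 1: $A_\xi(X)\succeq 0$.} Set $v_i^\alpha:=\Tr(X(E_i^\alpha)^\dagger)=\ip{E_i^\alpha,X}$. Then $\Tr(X^\dagger E_i^\beta)=\ip{X,E_i^\beta}=\overline{v_i^\beta}$, and \eqref{eq:a-formula} gives
\[
[A_\xi(X)]_{\alpha\beta}=\sum_{i=1}^{d_\xi}\overline{v_i^\beta}\,v_i^\alpha .
\]
This is a sum over $i$ of rank-one Gram matrices of the vectors $(v_i^\alpha)_\alpha\in\CC^{m_\xi}$. Up to the convention of whether $\alpha$ or $\beta$ carries the conjugate, i.e.\ up to transposition, which preserves positivity, it is Hermitian PSD. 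Concretely, for $c\in\CC^{m_\xi}$,
\[
c^\dagger A_\xi(X)\,c=\sum_i\Bigl|\sum_\alpha \ldots\Bigr|^2\ge 0 .
\]
The only care needed is the index placement.

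\textbf{Step 2: $B_\xi(X)\succeq 0$ when $X\succeq 0$.} Write $X=\sqrt{X}\sqrt{X}$ and use cyclicity of the trace:
\[
[B_\xi(X)]_{\alpha\beta}=\sum_i\Tr\bigl(\sqrt{X}(E_i^\alpha)^\dagger\sqrt{X}\,\sqrt{X}E_i^\beta\sqrt{X}\bigr)=\sum_i\ip{Y_i^\alpha,Y_i^\beta},
\qquad Y_i^\alpha:=\sqrt{X}E_i^\alpha\sqrt{X}.
\]
This is a sum of Hilbert--Schmidt Gram matrices, hence Hermitian PSD. The quadratic form is
\[
c^\dagger B_\xi(X)\,c=\sum_i\Bigl\|\sqrt{X}\,U_i(\cdot)\,\sqrt{X}\Bigr\|_2^2 ,
\]
evaluated at the appropriate coherent direction. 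This is the step where the hypothesis $X\succeq 0$ is essential. Without it, $b$ is not a Gram form.

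\textbf{Step 3: the traces.} Only diagonal entries $\alpha=\beta$ enter the traces, and there $\phi^\xi_{\alpha\alpha}=1$. Hence
\[
\Tr A_\xi(X)=\sum_\alpha\|\Pi_{\xi\alpha}X\|_2^2 ,
\qquad
\Tr B_\xi(X)=\sum_\alpha\sum_{E\in\B_{\xi\alpha}}\Tr(X^\dagger E^\dagger XE).
\]
Summing over $\xi$, the sets $\B_{\xi\alpha}$ together form one Hilbert--Schmidt orthonormal basis of $\L(V)$, because the $\E_{\xi\alpha}$ are mutually orthogonal and span $\L(V)$. Identity \eqref{eq:A-parseval} then follows by Parseval, exactly as in Lemma~\ref{lem:basic-identities}(2). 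Identity \eqref{eq:B-parseval} follows from the completeness identity, exactly as in Lemma~\ref{lem:basic-identities}(3).

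Equivalently, the diagonal forms $a_{\xi\alpha\alpha}$ and $b_{\xi\alpha\alpha}$ are the Section~\ref{sec:intrinsic-forms} enumerators for the refined orthogonal decomposition $\{\E_{\xi\alpha}\}$. So one can simply invoke Lemma~\ref{lem:basic-identities} for that decomposition.

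I expect no real obstacle. The mildly delicate point is Step 1's conjugation bookkeeping: checking that the stated index convention gives Hermitian $A_\xi$ and not merely its transpose. Either form is PSD, so the conclusion holds regardless.
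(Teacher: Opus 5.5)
Your proposal is correct and follows essentially the same route as the paper. The paper likewise writes $c^\dagger A_\xi(X)c=\sum_i\bigl|\Tr\bigl(X^\dagger\sum_\beta c_\beta E_i^\beta\bigr)\bigr|^2$ and $c^\dagger B_\xi(X)c=\sum_i\|X^{1/2}U_iX^{1/2}\|_2^2$ with $U_i=\sum_\beta c_\beta E_i^\beta$, and obtains the trace identities from $\sum_{\xi,\alpha}\Pi_{\xi\alpha\alpha}=I$ together with the completeness identity. Your transposition hedge in Step 1 is unnecessary: with $v_i^\alpha=\langle E_i^\alpha,X\rangle$ one has exactly $A_\xi(X)=\sum_i v_iv_i^\dagger$, which is already Hermitian positive semidefinite in the stated index convention.
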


\begin{proof}
Let \(c\in\mathbb{C}^{m_\xi}\). Using \eqref{eq:a-formula}, we compute
\begin{align*}
\sum_{\alpha,\beta}\overline{c_\alpha}c_\beta [A_\xi(X)]_{\alpha\beta}
&=
\sum_{\alpha,\beta}\overline{c_\alpha}c_\beta
\sum_{i=1}^{d_\xi}
\Tr(X^\dagger E_i^\beta)\,
\Tr(X (E_i^\alpha)^\dagger)
\\
&=
\sum_{i=1}^{d_\xi}
\Big(
\sum_{\beta} c_\beta \Tr(X^\dagger E_i^\beta)
\Big)
\Big(
\sum_{\alpha} \overline{c_\alpha} \Tr(X (E_i^\alpha)^\dagger)
\Big). \numberthis
\end{align*}

Since
\[
\sum_{\alpha} \overline{c_\alpha} \Tr(X (E_i^\alpha)^\dagger)= \Big( \sum_{\alpha} c_\alpha \Tr(X^\dagger E_i^\alpha) \Big)^\dagger \numberthis
\]
the two factors are complex conjugates. Hence
\[
\sum_{\alpha,\beta}\overline{c_\alpha}c_\beta [A_\xi(X)]_{\alpha\beta}
=
\sum_{i=1}^{d_\xi}
\Big|
\Tr\!\Big(
X^\dagger \sum_{\beta=1}^{m_\xi} c_\beta E_i^\beta
\Big)
\Big|^2
\ge 0. \numberthis
\]
and we can conclude that \(A_\xi(X)\succeq 0\).

Similarly, using \eqref{eq:b-formula} and defining
\[
U_i:=\sum_{\beta=1}^{m_\xi} c_\beta E_i^\beta, \numberthis
\]
we obtain
\[
\sum_{\alpha,\beta}\overline{c_\alpha}c_\beta [B_\xi(X)]_{\alpha\beta}
=
\sum_{i=1}^{d_\xi}\Tr(X^\dagger U_i^\dagger X U_i). \numberthis
\]
If \(X\succeq 0\), then \(X=X^\dagger\), so
\begin{align*}
    \Tr(X^\dagger U_i^\dagger X U_i) &=
\Tr(XU_i^\dagger XU_i) \\
&=
\Tr\!\big((X^{1/2}U_iX^{1/2})^\dagger(X^{1/2}U_iX^{1/2})\big) \\
&=
\langle X^{1/2}U_iX^{1/2}, X^{1/2}U_iX^{1/2} \rangle  \numberthis
\end{align*}
Therefore \(B_\xi(X)\succeq 0\).

For the normalization identities, note first that the diagonal projector matrix
units are exactly the orthogonal projections onto the summands
\(\E_{\xi\alpha}\), so
\[
\sum_{\xi,\alpha}\Pi_{\xi\alpha\alpha}=I_{\L(V)}. \numberthis
\]
Hence
\[
\sum_\xi \Tr A_\xi(X)
=
\sum_{\xi,\alpha}\langle X,\Pi_{\xi\alpha\alpha}(X)\rangle
=
\langle X,X\rangle
=
\Tr(X^\dagger X). \numberthis
\]

Similarly, the union of the orthonormal bases \(\B_{\xi\alpha}\) over all
\(\xi,\alpha\) is an orthonormal basis of \(\L(V)\). Therefore the completeness
identity from Section~\ref{sec:intrinsic-forms} gives
\[
\sum_{\xi,\alpha}\Twirl_{\xi\alpha\alpha}(X)=\Tr(X)\,I, \numberthis
\]
and so
\begin{align*}
    \sum_\xi \Tr B_\xi(X)
&=
\sum_{\xi,\alpha}\langle X,\Twirl_{\xi\alpha\alpha}(X)\rangle \\
&=
\langle X,\Tr(X)I\rangle \\
&=
|\Tr(X)|^2 \numberthis
\end{align*}
\end{proof}

\subsection{A matrix Knill--Laflamme inequality}

The fundamental inequality \(A_i(X,X)\le \rank(X)\,B_i(X,X)\) of Section~\ref{sec:intrinsic-forms} extends
naturally to the multiplicity spaces. For a fixed irreducible sector \(\xi\),
the inequality must hold simultaneously for every coherent combination of its
equivalent irreducible copies. As shown below, this family of scalar
inequalities is equivalent to a single Loewner-order inequality for the
matrices \(A_\xi(X)\) and \(B_\xi(X)\).

\begin{lemma}[Matrix Knill--Laflamme inequality]\label{lem:matrix-KL}
Let \(X\in\L(V)\) be positive semidefinite, and let $ r:=\rank(X) $. Then for every irreducible label \(\xi\),
\begin{equation}\label{eq:matrix-KL}
A_\xi(X)\preceq r\,B_\xi(X). \numberthis
\end{equation}
In particular, if \(X=P\) is the orthogonal projector onto a
\(K\)-dimensional code subspace \(\C\subseteq V\), then
\begin{equation}\label{eq:matrix-KL-code}
A_\xi(P)\preceq K\,B_\xi(P). \numberthis
\end{equation}
\end{lemma}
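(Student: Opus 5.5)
The plan is to reduce the Loewner inequality to the scalar inequality of Lemma~\ref{lem:ai-bi-ineq}, applied along each coherent error direction. Fix \(\xi\) and a vector \(c\in\CC^{m_\xi}\). For \(1\le i\le d_\xi\) put
\[
    U_i:=\sum_{\beta=1}^{m_\xi}c_\beta E_i^\beta,
\]
exactly as in the proof of Lemma~\ref{lem:block-psd}. That proof already gives
\[
    c^\dagger A_\xi(X)c=\sum_{i=1}^{d_\xi}\big|\Tr(X^\dagger U_i)\big|^2,
    \qquad
    c^\dagger B_\xi(X)c=\sum_{i=1}^{d_\xi}\Tr\big(X^\dagger U_i^\dagger X U_i\big).
\]
Here \(X\succeq 0\), so \(X^\dagger=X\). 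Both matrices are Hermitian (Lemma~\ref{lem:block-psd}), so \eqref{eq:matrix-KL} is equivalent to the scalar statement
\[
    c^\dagger A_\xi(X)c\le r\,c^\dagger B_\xi(X)c
    \qquad\text{for all }c\in\CC^{m_\xi}.
\]

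To prove this scalar statement I will repeat the Cauchy--Schwarz step from the proof of Lemma~\ref{lem:ai-bi-ineq} term by term, without needing the \(U_i\) to be orthonormal. For each \(i\), set \(Y_i:=\sqrt{X}\,U_i\sqrt{X}\). Then \(Y_i=P_XY_iP_X\), where \(P_X\) is the projector onto the range of \(X\), and cyclicity of the trace gives \(\Tr(XU_i)=\Tr(Y_i)=\Tr(P_XY_i)\). The Hilbert--Schmidt Cauchy--Schwarz inequality then yields
\[
    |\Tr(XU_i)|^2\le \Tr(P_X)\,\Tr(Y_i^\dagger Y_i)=r\,\Tr\big(XU_i^\dagger XU_i\big).
\]
Summing over \(i\) gives the desired inequality for this \(c\). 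Since \(c\) was arbitrary, \(r\,B_\xi(X)-A_\xi(X)\succeq 0\), which is \eqref{eq:matrix-KL}.

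The code case \eqref{eq:matrix-KL-code} is the specialization \(X=P\), for which \(r=\rank(P)=K\).

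The only delicate point is the index bookkeeping in the first step. I must check that the quadratic forms \(c^\dagger A_\xi c=\sum_{\alpha,\beta}\overline{c_\alpha}c_\beta[A_\xi]_{\alpha\beta}\) and \(c^\dagger B_\xi c\) produce the same combination \(U_i\) in both expressions. In particular, the conjugate coefficients \(\overline{c_\alpha}\) must attach to the daggered factors \((E_i^\alpha)^\dagger\) in \eqref{eq:a-formula} and \eqref{eq:b-formula}. This relies on the coherent identification \(E_i^\beta=\phi^\xi_{\alpha\beta}(E_i^\alpha)\), which gives a common index \(i\) across all copies. It is exactly this identification that makes the off-diagonal entries of \(A_\xi\) and \(B_\xi\) compatible.
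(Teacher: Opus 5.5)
Your proof is correct and follows the paper's argument. You use the same coherent combinations $U_i=\sum_\beta c_\beta E_i^\beta$ to reduce the Loewner inequality to scalar inequalities, and the same operators $Y_i=\sqrt{X}\,U_i\sqrt{X}$ bounded by Hilbert--Schmidt Cauchy--Schwarz; the only cosmetic difference is that you pair $Y_i$ against $P_X$ directly, whereas the paper uses the equivalent bound $|\Tr(Y)|^2\le\rank(Y)\,\|Y\|_2^2$ together with $\rank(Y_i)\le r$.
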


\begin{proof}
Fix \(\xi\), and let \(c=(c_\beta)_{\beta=1}^{m_\xi}\in\CC^{m_\xi}\).
To prove \eqref{eq:matrix-KL}, it suffices to show that
\[
c^\dagger A_\xi(X)c \le r\, c^\dagger B_\xi(X)c
\qquad\text{for all }c\in\CC^{m_\xi}. \numberthis
\]

Since \(X\succeq 0\), it is Hermitian and admits a positive square root.
Thus
\[
X=X^\dagger,
\qquad
X=\sqrt{X}\,\sqrt{X}. \numberthis
\]

Define
\[
U_i:=\sum_{\beta=1}^{m_\xi} c_\beta E_i^\beta,
\qquad 1\le i\le d_\xi. \numberthis
\]
Using \eqref{eq:a-formula} and \eqref{eq:b-formula}, we compute
\begin{align}
c^\dagger A_\xi(X)c
&=
\sum_{\alpha,\beta}\overline{c_\alpha}c_\beta [A_\xi(X)]_{\alpha\beta}
\notag\\
&=
\sum_{i=1}^{d_\xi}
\left|
\Tr\!\left(
X^\dagger \sum_{\beta=1}^{m_\xi} c_\beta E_i^\beta
\right)
\right|^2
\notag\\
&=
\sum_{i=1}^{d_\xi} |\Tr(XU_i)|^2
\notag\\
&=
\sum_{i=1}^{d_\xi} |\Tr(\sqrt{X}\,U_i\,\sqrt{X})|^2 , \numberthis
\label{eq:matrix-KL-A}
\end{align}
where in the last step we used \(X=\sqrt{X}\sqrt{X}\) and cyclicity of trace.

Similarly,
\begin{align}
c^\dagger B_\xi(X)c
&=
\sum_{\alpha,\beta}\overline{c_\alpha}c_\beta [B_\xi(X)]_{\alpha\beta}
\notag\\
&=
\sum_{i=1}^{d_\xi}\Tr(X^\dagger U_i^\dagger XU_i)
\notag\\
&=
\sum_{i=1}^{d_\xi}\Tr(XU_i^\dagger XU_i). \numberthis
\label{eq:matrix-KL-B}
\end{align}

Now set
\[
Y_i:=\sqrt{X}\,U_i\,\sqrt{X}. \numberthis
\]
Since \(\rank(Y_i)\le \rank(\sqrt{X})=\rank(X)=r\), the elementary inequality
\[
|\Tr(Y)|^2 \le \rank(Y)\,\|Y\|_2^2 \numberthis
\]
gives
\[
|\Tr(Y_i)|^2 \le r\,\|Y_i\|_2^2 . \numberthis
\]
Also,
\begin{align*}
\|Y_i\|_2^2
&=
\Tr(Y_i^\dagger Y_i) \\
&=
\Tr\!\big((\sqrt{X}U_i\sqrt{X})^\dagger(\sqrt{X}U_i\sqrt{X})\big) \\
&=
\Tr\!\big(\sqrt{X}\,U_i^\dagger XU_i\,\sqrt{X}\big) \\
&=
\Tr(XU_i^\dagger XU_i). \numberthis
\end{align*}
Therefore,
\[
|\Tr(\sqrt{X}\,U_i\,\sqrt{X})|^2
\le
r\,\Tr(XU_i^\dagger XU_i). \numberthis
\]
Summing over \(i\) and using \eqref{eq:matrix-KL-A}--\eqref{eq:matrix-KL-B}
yields
\[
c^\dagger A_\xi(X)c
\le
r\,c^\dagger B_\xi(X)c. \numberthis
\]
Since this holds for every \(c\in\CC^{m_\xi}\), we conclude that
\[
A_\xi(X)\preceq r\,B_\xi(X), \numberthis
\]
which is \eqref{eq:matrix-KL}.

If \(X=P\) is the projector onto a \(K\)-dimensional code subspace, then $ \rank(P)=K $ so \eqref{eq:matrix-KL-code} follows immediately from \eqref{eq:matrix-KL}.
\end{proof}

For a code projector \(P\), the inequality
\eqref{eq:matrix-KL-code} is the intrinsic Knill--Laflamme inequality on the
full multiplicity space. The following equality characterization identifies
precisely when the corresponding isotypic sector is detected and gives the
detection criterion used in the semidefinite program below.

\begin{theorem}[Matrix Knill--Laflamme equality]
\label{thm:matrix-KL-equality}
Let \(P\) be the orthogonal projector onto a \(K\)-dimensional code
\(\C\subseteq V\), and fix an irreducible label \(\xi\). Then
\(A_\xi(P)\preceq K\,B_\xi(P)\), and the following are equivalent:
\begin{enumerate}[(i)]
\item \(A_\xi(P)=K\,B_\xi(P)\);
\item \(PEP=c_E P\) for every \(E\in\E_\xi\), i.e.\ \(\C\) detects the isotypic
sector \(\E_\xi\).
\end{enumerate}
\end{theorem}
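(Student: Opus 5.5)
The inequality \(A_\xi(P)\preceq K\,B_\xi(P)\) is already Lemma~\ref{lem:matrix-KL} with \(X=P\), so only the equivalence needs proof. I will reduce both directions to the per-vector scalar computation in that proof. For \(c\in\CC^{m_\xi}\), set \(U_i(c)=\sum_\beta c_\beta E_i^\beta\). Since \(\sqrt P=P\), the proof of Lemma~\ref{lem:matrix-KL} gives
\[
c^\dagger\big(K B_\xi(P)-A_\xi(P)\big)c=\sum_{i=1}^{d_\xi}\Big(K\,\|PU_i(c)P\|_2^2-|\Tr(PU_i(c)P)|^2\Big).
\]
Each summand is nonnegative, by Cauchy--Schwarz against \(P\), exactly as in Lemma~\ref{lem:ai-bi-ineq}. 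Equality holds in a summand if and only if \(PU_i(c)P\) is proportional to \(P\).

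\emph{(ii)\(\Rightarrow\)(i).} If \(\C\) detects \(\E_\xi\), then every \(U_i(c)\in\E_\xi\) satisfies \(PU_i(c)P=\lambda P\). Every summand therefore vanishes, so \(c^\dagger(KB_\xi-A_\xi)c=0\) for all \(c\). The matrix \(KB_\xi(P)-A_\xi(P)\) is Hermitian, since \(A_\xi\) and \(B_\xi\) are Hermitian by Lemma~\ref{lem:block-psd}, and its quadratic form vanishes identically. By polarization it is zero, which gives (i).

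\emph{(i)\(\Rightarrow\)(ii).} Assume (i). The quadratic form above is then zero for every \(c\), so each summand vanishes. Hence \(PU_i(c)P\in\CC P\) for all \(i\) and all \(c\). Taking \(c=e_\beta\) gives \(PE_i^\beta P\in\CC P\) for every basis element \(E_i^\beta\), \(1\le i\le d_\xi\), \(1\le\beta\le m_\xi\). These elements form an orthonormal basis of \(\E_\xi=\bigoplus_\beta\E_{\xi\beta}\). The map \(E\mapsto PEP\) is linear, so \(PEP\in\CC P\) for every \(E\in\E_\xi\), which is (ii).

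The only delicate point is the step from a vanishing quadratic form to a vanishing matrix, and this is where Hermiticity from Lemma~\ref{lem:block-psd} is needed. Once we use the matrix inequality rather than just the diagonal entries, no step is a genuine obstacle. The coherent directions \(U_i(c)\) are not needed beyond the basis vectors \(c=e_\beta\) for (i)\(\Rightarrow\)(ii). For (ii)\(\Rightarrow\)(i), however, the full Hermitian matrix must vanish, and that requires all \(c\) together with polarization.
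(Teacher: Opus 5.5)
Your proof is correct and follows essentially the same route as the paper: the same per-$c$ Cauchy--Schwarz decomposition with $U_i(c)$, the vanishing of every deficit under (i), and specialization to $c=e_\beta$ plus linearity to get (ii); your polarization step for (ii)$\Rightarrow$(i) spells out what the paper calls ``immediate.'' One small aside: over $\CC$, the condition $c^\dagger X c=0$ for all $c$ already forces $X=0$ for any square matrix $X$, so Hermiticity is not actually needed at that step, though invoking it does no harm.
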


\begin{proof}
For \(c=(c_\beta)\in\CC^{m_\xi}\) put \(U_i(c):=\sum_{\beta}c_\beta E_i^\beta\).
By \eqref{eq:a-formula} and \eqref{eq:b-formula}, and using \(P^\dagger=P\),
\[
c^\dagger A_\xi(P)c=\sum_{i=1}^{d_\xi}\bigl|\Tr(P\,U_i(c))\bigr|^2,
\qquad
c^\dagger B_\xi(P)c=\sum_{i=1}^{d_\xi}\bigl\lVert P\,U_i(c)\,P\bigr\rVert_2^2 .
\]
Since \(\Tr(P\,U_i(c))=\Tr(P\,U_i(c)\,P)=\langle P,\,P\,U_i(c)\,P\rangle\) and
\(\lVert P\rVert_2^2=\Tr(P)=K\), the Cauchy--Schwarz inequality gives
\[
\bigl|\Tr(P\,U_i(c))\bigr|^2\le K\,\bigl\lVert P\,U_i(c)\,P\bigr\rVert_2^2,
\]
and summing over \(i\) yields \(c^\dagger A_\xi(P)c\le K\,c^\dagger B_\xi(P)c\)
for all \(c\), i.e.\ \(A_\xi(P)\preceq K B_\xi(P)\) (recovering
\eqref{eq:matrix-KL-code}). If equality \(A_\xi(P)=K\,B_\xi(P)\) holds, then for
every \(c\) the sum of the nonnegative Cauchy--Schwarz deficits vanishes, so
each deficit vanishes and \(P\,U_i(c)\,P\in\CC P\). Taking \(c=e_\beta\) gives
\(P E_i^\beta P\in\CC P\) for all \(i,\beta\); as the \(E_i^\beta\) span
\(\E_\xi\), we obtain \(PEP=c_E P\) for all \(E\in\E_\xi\). The converse is
immediate, since detection makes every Cauchy--Schwarz step tight.
\end{proof}

\subsection{Semidefinite Programming Bounds in the General Equivariant Case}
\label{subsec:enumerators-to-sdp}

We now show that the existence of an intrinsic code with prescribed detection
properties implies feasibility of a finite system of matrix equations and
semidefinite inequalities.
This gives the basic semidefinite programming relaxation in the general
equivariant setting.

Let $ M$ denote the intrinsic MacWilliams transform in the unnormalized block basis, so
that for every code projector \(P\) the corresponding matrix-valued projector
and twirl enumerators satisfy
\[
\mathrm{vec}(B)=M\,\mathrm{vec}(A). \numberthis
\]
Here \(\mathrm{vec}(A)\) denotes the vector obtained by stacking the entries of
the matrices \(A_\xi\) over all irreducible labels \(\xi\), and similarly for
\(\mathrm{vec}(B)\).

\begin{theorem}
\label{thm:intrinsic-sdp-feasibility}
Let \(V\) be a finite-dimensional unitary representation of a group \(G\), and
write
\[
\L(V)=\bigoplus_{\xi\in\Lambda}\E_\xi \numberthis
\]
for the isotypic decomposition of the conjugation representation, where
\(\Lambda\) is the set of irreducible labels occurring in \(\L(V)\).
Suppose there exists a code subspace \(\C\subseteq V\) of dimension \(K:=\dim\C\)
with orthogonal projector \(P\), and suppose that \(\C\) detects every error in
each sector \(\E_\xi\) for \(\xi\in\mathcal D\), where \(\mathcal D\subseteq\Lambda\)
is the prescribed set of \emph{detected} labels (as distinct from \(\Lambda\),
the set of all occurring labels).
Then there exists a list of Hermitian matrices indexed by \(\xi\in\Lambda\),
\[
A_\xi,\;B_\xi\in\CC^{m_\xi\times m_\xi} \numberthis
\]
satisfying the following equations and semidefinite inequalities:
\begin{align}
0  &\preceq A_\xi \qquad\text{for all }\xi\in\Lambda, \label{eq:sdp-thm-1}\\
0  &\preceq B_\xi \qquad\text{for all }\xi\in\Lambda, \label{eq:sdp-thm-2}\\
A_\xi &\preceq K\,B_\xi \qquad\text{for all }\xi\in\Lambda, \label{eq:sdp-thm-3}\\
\sum_{\xi\in\Lambda} \Tr A_\xi &= K, \label{eq:sdp-thm-4}\\
\sum_{\xi\in\Lambda} \Tr B_\xi &= K^2, \label{eq:sdp-thm-5}\\
\mathrm{vec}(B) &= M\,\mathrm{vec}(A), \label{eq:sdp-thm-6}\\
A_\xi &= K\,B_\xi
\qquad\text{for all }\xi\in\mathcal D. \numberthis \label{eq:sdp-thm-7}
\end{align}
In fact, one may take
\[
A_\xi=A_\xi(P),
\qquad
B_\xi=B_\xi(P). \numberthis
\]
If in addition \(V\) is irreducible of dimension \(N\), then
\(A_{\mathbf1}=K^2/N\) and \(B_{\mathbf1}=K/N\).
\end{theorem}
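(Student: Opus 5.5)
The proof will take $A_\xi:=A_\xi(P)$ and $B_\xi:=B_\xi(P)$ for the given code projector $P$ and check each listed condition against the results already established. First, Hermiticity. For $A_\xi(P)$ this is part of Lemma~\ref{lem:block-psd}. For $B_\xi(P)$ it follows from the same lemma, since $P\succeq 0$ and a positive semidefinite matrix is Hermitian. The same lemma then gives the conditions \eqref{eq:sdp-thm-1} and \eqref{eq:sdp-thm-2} directly. Condition \eqref{eq:sdp-thm-3} is Lemma~\ref{lem:matrix-KL} with $X=P$ and $\rank(P)=K$.

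For the normalizations, I will apply \eqref{eq:A-parseval} and \eqref{eq:B-parseval} with $X=P$. Using $P^\dagger P=P^2=P$ gives $\Tr(P^\dagger P)=\Tr P=K$, and $|\Tr P|^2=K^2$. This yields \eqref{eq:sdp-thm-4} and \eqref{eq:sdp-thm-5}. The linear constraint \eqref{eq:sdp-thm-6} is exactly Proposition~\ref{prop:matrix-macwilliams} at $X=P$. Condition \eqref{eq:sdp-thm-7} is the implication (ii)$\Rightarrow$(i) of Theorem~\ref{thm:matrix-KL-equality}, applied to each $\xi\in\mathcal D$, because $\C$ detects $\E_\xi$ by hypothesis.

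For the irreducible case, Schur's lemma shows that $\Hom_G(V,V)=\CC\,\mathrm{id}$. Hence the trivial isotypic component of $\L(V)$ is $\CC I$, with $m_{\mathbf 1}=1$ and $d_{\mathbf 1}=1$. The $1\times1$ enumerators are then the scalar ones of Lemma~\ref{lem:basic-identities}(4), computed with the orthonormal basis $\{I/\sqrt N\}$, and Corollary~\ref{cor:code-identities}(4) gives $A_{\mathbf1}=K^2/N$ and $B_{\mathbf1}=K/N$.

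Every step is an invocation of an earlier result, so no step is hard. The one point that needs care is that the transform $M$ in \eqref{eq:sdp-thm-6} is the same matrix whatever $P$ is. This holds because $M$ is defined purely through the two bases of $\Hom_G(\L(V),\L(V))$, and it must be fixed with the same choice of intertwiners $\phi^\xi_{\alpha\beta}$ that is used to define $A_\xi$ and $B_\xi$.
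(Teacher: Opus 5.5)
Your proposal is correct and follows the paper's proof essentially step by step. You set $A_\xi=A_\xi(P)$, $B_\xi=B_\xi(P)$ and read each constraint off Lemma~\ref{lem:block-psd}, Lemma~\ref{lem:matrix-KL}, Proposition~\ref{prop:matrix-macwilliams}, Theorem~\ref{thm:matrix-KL-equality} and Corollary~\ref{cor:code-identities}; your extra remarks (Schur's lemma giving $\E_{\mathbf 1}=\CC I$, and $M$ being independent of $P$ once the intertwiners $\phi^\xi_{\alpha\beta}$ are fixed) only make explicit what the paper leaves implicit.
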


\begin{proof}
Set
\[
A_\xi:=A_\xi(P),
\qquad
B_\xi:=B_\xi(P). \numberthis
\]
Then \eqref{eq:sdp-thm-1} and \eqref{eq:sdp-thm-2} follow from
Lemma~\ref{lem:block-psd} (with \(X=P\succeq0\)), while \eqref{eq:sdp-thm-4} and
\eqref{eq:sdp-thm-5} are exactly the normalization identities
\eqref{eq:A-parseval} and \eqref{eq:B-parseval} applied to \(X=P\).

Equation \eqref{eq:sdp-thm-3} is the matrix Knill--Laflamme inequality
\eqref{eq:matrix-KL-code}.
Equation \eqref{eq:sdp-thm-6} is the matrix-valued MacWilliams identity.
Finally, since \(\C\) detects every error in \(\E_\xi\) for \(\xi\in\mathcal D\),
the equality criterion of Theorem~\ref{thm:matrix-KL-equality} gives
\[
A_\xi(P)=K\,B_\xi(P)
\qquad\text{for all }\xi\in\mathcal D, \numberthis
\]
which establishes \eqref{eq:sdp-thm-7}. When \(V\) is irreducible the trivial
sector is one-dimensional, and \(A_{\mathbf1}=K^2/N\), \(B_{\mathbf1}=K/N\)
follow from Corollary~\ref{cor:code-identities} applied to \(X=P\).
\end{proof}

\begin{corollary}
\label{cor:intrinsic-sdp-infeasibility}
Fix a positive integer \(K\) and a prescribed subset \(\mathcal D\) of detected
isotypic sectors.
If the system \eqref{eq:sdp-thm-1}--\eqref{eq:sdp-thm-7} has no solution, then
there does not exist a code subspace \(\C\subseteq V\) of dimension \(K\)
capable of detecting every error in each sector \(\E_\xi\) for
\(\xi\in\mathcal D\).
Equivalently, infeasibility of this semidefinite system certifies that no
intrinsic code with the specified parameters exists.
\end{corollary}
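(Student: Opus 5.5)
The corollary is the contrapositive of Theorem~\ref{thm:intrinsic-sdp-feasibility}, so the plan is a one-step contradiction argument. I will suppose, for contradiction, that some code subspace \(\C\subseteq V\) of dimension \(K\) detects every error in \(\E_\xi\) for each \(\xi\in\mathcal D\), and I will let \(P\) be its orthogonal projector.

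The theorem then applies verbatim to \(\C\), with the same \(K\) and the same set \(\mathcal D\) of detected labels. It says that the Hermitian matrices \(A_\xi:=A_\xi(P)\) and \(B_\xi:=B_\xi(P)\), for \(\xi\in\Lambda\), satisfy every relation \eqref{eq:sdp-thm-1}--\eqref{eq:sdp-thm-7}. The system would therefore have a solution, contradicting the hypothesis that it is infeasible. Hence no such code exists, which is the stated conclusion.

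The only point needing care is to check that the system in the corollary is literally the same as the one produced by the theorem. The coefficient matrix \(M\) and the multiplicities \(m_\xi\) depend only on \(V\), \(G\), and the chosen multiplicity bases \(\phi^\xi_{\alpha\beta}\), not on \(\C\). By the remark after Lemma~\ref{lem:trivial-row-audit}, changing those bases only conjugates each block by a unitary, so feasibility does not depend on that choice.

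I do not expect a genuine obstacle, since all the substantive work (positivity, normalization, the MacWilliams identity, and the matrix Knill--Laflamme equality) was done in Lemma~\ref{lem:block-psd}, Proposition~\ref{prop:matrix-macwilliams}, Lemma~\ref{lem:matrix-KL} and Theorem~\ref{thm:matrix-KL-equality}. The final phrase of the corollary, that infeasibility certifies nonexistence, is a restatement of this contrapositive.
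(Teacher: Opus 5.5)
Your proposal is correct and matches the paper's approach: the corollary is the contrapositive of Theorem~\ref{thm:intrinsic-sdp-feasibility}, since any code of dimension $K$ detecting the sectors in $\mathcal D$ makes $A_\xi(P)$ and $B_\xi(P)$ a feasible point of the system. Your remark on basis-independence is harmless but not needed, because the theorem already holds for whichever fixed choice of intertwiners $\phi^\xi_{\alpha\beta}$ is used to define $M$.
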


The infeasibility criterion is turned into a dimension bound by the following
elementary monotonicity property, which replaces any direct ``maximization of
$K$'' and is used in both the semidefinite and linear settings.

\begin{lemma}[Detection passes to subcodes]
\label{lem:subcode-detection}
Let $\C'\subseteq\C\subseteq V$ with orthogonal projectors $P'$ and $P$.
If $\C$ detects an error $E$, then so does $\C'$.
Consequently, if the fixed-$(K_0{+}1)$ feasibility problem
\eqref{eq:sdp-thm-1}--\eqref{eq:sdp-thm-7} is infeasible for a given detected
set, then no code detecting that set has dimension exceeding $K_0$.
\end{lemma}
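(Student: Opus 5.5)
The first claim follows from a one-line sandwich computation. Since $\C'\subseteq\C$, the projectors satisfy $P'P=PP'=P'$. If $\C$ detects $E$, then $PEP=c_EP$ for some scalar $c_E\in\CC$. Multiplying on both sides by $P'$ gives
\[
P'EP'=P'(PEP)P'=c_E\,P'PP'=c_E\,P'.
\]
So $\C'$ detects $E$ with the same scalar. Applying this to every $E$ in a sector $\E_\xi$ shows that $\C'$ detects each sector in $\mathcal D$ whenever $\C$ does.

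For the consequence, I argue by contraposition. Suppose some code $\C$ detecting every sector $\E_\xi$ with $\xi\in\mathcal D$ has dimension $K>K_0$, so $K\ge K_0+1$. Choose any subspace $\C'\subseteq\C$ with $\dim\C'=K_0+1$, for instance the span of $K_0+1$ vectors from an orthonormal basis of $\C$. By the first part, $\C'$ still detects every sector in $\mathcal D$. Theorem~\ref{thm:intrinsic-sdp-feasibility}, applied to $\C'$ with $K=K_0+1$, then produces the matrices $A_\xi(P')$ and $B_\xi(P')$. These satisfy \eqref{eq:sdp-thm-1}--\eqref{eq:sdp-thm-7}, which contradicts the assumed infeasibility. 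Hence every code detecting $\mathcal D$ has dimension at most $K_0$.

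The only point needing care is that the feasibility system is posed at one fixed value $K_0+1$, not as a family of inequalities in $K$. This is why the argument passes to a subcode of dimension exactly $K_0+1$ instead of appealing to any monotonicity of the SDP in $K$. Given the theorem already established, there is no real obstacle.
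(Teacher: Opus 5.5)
Your proof is correct and follows the paper's argument essentially verbatim. The first part uses the same sandwich identity $P'EP'=P'PEPP'=c_E P'$, and the consequence uses the same passage to a $(K_0{+}1)$-dimensional subcode; your explicit appeal to Theorem~\ref{thm:intrinsic-sdp-feasibility} to produce the feasible point $A_\xi(P'),B_\xi(P')$ just spells out a step the paper leaves implicit.
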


\begin{proof}
Since $\C'\subseteq\C$ we have $P'P=PP'=P'$. If $PEP=c_E P$, then
\[
    P'EP'=P'PEPP'=c_E\,P'P'=c_E\,P',
\]
so $\C'$ detects $E$. For the second statement, a code of dimension
$K\ge K_0{+}1$ detecting the given set would contain a $(K_0{+}1)$-dimensional
subcode which, by the first part, also detects that set, contradicting
infeasibility of the fixed-$(K_0{+}1)$ problem.
\end{proof}

Theorem~\ref{thm:intrinsic-sdp-feasibility} and
Corollary~\ref{cor:intrinsic-sdp-infeasibility} therefore reduce the existence
problem for intrinsic codes in the general equivariant setting to a finite
semidefinite feasibility problem.
For each fixed positive integer \(K\) these conditions form a semidefinite
feasibility problem; upper bounds on achievable intrinsic code dimension are
obtained by proving infeasibility at candidate integer values of \(K\) (see
Lemma~\ref{lem:subcode-detection}).

\section{An Equivariant Example with Multiplicity: Intrinsic Codes in $(2,2)$}
\label{sec:mult-su3-example}

We now illustrate the matrix-valued theory in the smallest example we have
found that combines three useful features: the conjugation representation has
genuine multiplicities (including multiplicities \(2\) and \(3\)); the full
intertwiner algebra and MacWilliams transform remain explicitly tractable; and
the representation contains a concrete \(5\)-dimensional subgroup-covariant
code. Thus the example is large enough for off-diagonal multiplicity data to
matter, but small enough for the resulting semidefinite bound to admit an exact
dual certificate. In contrast with the example of
Section~\ref{sec:su3-example}, where all irreducible sectors occur with
multiplicity one and the intrinsic feasibility problem is scalar, the present
case is genuinely matrix-valued.

\subsection{The representation $V \cong (2,2)$ and the decomposition of $\L(V)$}

Let $V \cong (2,2)$ be the irreducible $\SU 3$ representation of
dimension $27$. The decomposition of $\L(V)$ into irreducible
$\SU 3$-modules is
\[
\begin{aligned}
\L(V) \cong\;&
(0,0)\oplus 2(1,1)\oplus (3,0)\oplus (0,3)\oplus 3(2,2) \\
&\oplus 2(4,1)\oplus 2(1,4)\oplus (6,0)\oplus (0,6) \\
&\oplus 2(3,3)\oplus (5,2)\oplus (2,5)\oplus (4,4). 
\end{aligned}
\]

Several irreducible sectors occur with multiplicity greater than one.
For example, the adjoint representation $(1,1)$ occurs with multiplicity
two, and the representation $(2,2)$ with multiplicity three.
Accordingly,
\[
\L(V)\cong \bigoplus_{\xi} m_\xi\,\xi, \numberthis
\]
with $m_\xi>1$ for several $\xi$, and the intertwiner algebra
\[
\Hom_{\SU 3}(\L(V),\L(V)) \numberthis
\]
is noncommutative. Consequently, the intrinsic enumerators are
matrix-valued: for each irreducible sector $\xi$ with multiplicity
$m_\xi$, the quantities
\[
A_\xi,B_\xi \in M_{m_\xi}(\CC) \numberthis
\]
are positive semidefinite matrices.

\subsection{The isotypic adjoint-order error model}
\label{subsec:22-depth-model}

For this example the depth function is the adjoint-order filtration. For an
irreducible \(\SU 3\)-module \(\xi\), set
\[
\operatorname{ord}_{\mathrm{Ad}}(\xi)
:=
\min\{r\ge0:\xi\subseteq (1,1)^{\otimes r}\},
\qquad \operatorname{ord}_{\mathrm{Ad}}((0,0))=0.
\numberthis
\]
Restricted to the irreducible types occurring in \(\L(V)\), this gives
\[
\begin{array}{c|l}
\text{depth} & \text{irreducible types}\\ \hline
0 &(0,0),\\
1 &(1,1),\\
2 &(2,2),(3,0),(0,3),\\
3 &(4,1),(1,4),(3,3),\\
4 &(2,5),(5,2),(6,0),(0,6),(4,4).
\end{array}
\numberthis
\]
Our error sectors are the \emph{full isotypic components}. In particular, a
depth-$2$ code in this section detects the entire adjoint isotypic component
$(1,1)\oplus(1,1)$, not merely the distinguished adjoint copy arising from
the differential of the $\SU3$ action. This is the full-isotypic
adjoint-order error model. Intrinsically, this depth is defined independently
of any particular physical realization. As shown in
Section~\ref{subsec:22-significance}, in the canonical four-qutrit
realization the full adjoint isotypic component is precisely the compression
of the physical weight-one error space, so depth two coincides there with
ordinary distance two.

\subsection{A $5$-dimensional intrinsic code via restriction to $3.A_6$}

We briefly recall the code from \cite{usIntrinsicCodes}. Consider
$3.A_6 \subset \SU 3$, the triple cover of the alternating group
$A_6$. Under restriction to $3.A_6$, the irreducible representation
$V \cong (2,2)$ decomposes into irreducible $3.A_6$-modules as described in the appendix of \cite{us3}
\[
V\downarrow_{3.A_6}\cong \chi_6 \oplus \chi_7 \oplus \chi_{11} \oplus \chi_{12}. \numberthis
\]
In particular, here we will consider this first $5$-dimensional irreducible constituent
$\chi_6$. That is, we consider the intrinsic code $ \Cint:=\chi_6\subset V $. Then $ \dim \Cint = K=5 $. This description is purely in terms of
the representation-theoretic decomposition of $V$ under the subgroup
$3.A_6$, so that the code projector is exactly the projector onto the $ \chi_6 $ isotypic subspace.

\subsection{Semidefinite feasibility conditions}

The intrinsic enumerators
\[
A_\xi,B_\xi \in M_{m_\xi}(\CC) \numberthis
\]
satisfy the matrix-valued MacWilliams identity
\[
\mathrm{vec}(B)=M\,\mathrm{vec}(A), \numberthis
\]
with the coefficient matrix \(M\) of \eqref{eq:M-mult-def}, together with the
semidefinite constraints
\[
A_\xi \succeq 0,
\qquad
B_\xi\succeq 0,
\qquad
A_\xi \preceq K\,B_\xi, \numberthis
\]
the normalization conditions
\[
\sum_\xi \Tr(A_\xi)=K,
\qquad
\sum_\xi \Tr(B_\xi)=K^2, \numberthis
\]
and, since \(V\) is irreducible of dimension \(N=27\),
\[
A_{(0,0)}=\frac{K^2}{27},
\qquad
B_{(0,0)}=\frac{K}{27}. \numberthis
\]

These constraints are genuinely block-structured. For example,
\[
A_{(1,1)},\,B_{(1,1)}\in M_2(\CC),
\qquad
A_{(2,2)},\,B_{(2,2)}\in M_3(\CC). \numberthis
\]
The depth-$2$ detection condition imposes, on the adjoint sector,
\[
A_{(1,1)}=K\,B_{(1,1)} \numberthis
\]
(Theorem~\ref{thm:matrix-KL-equality}). This is the \emph{only} depth-$2$
detection constraint; we emphasize that it does \emph{not} in general force
\(A_{(1,1)}=0\), although the specific code \(\chi_6\) below happens to satisfy
\(A_{(1,1)}=B_{(1,1)}=0\).

For each fixed \(K\), the above is a semidefinite feasibility problem over the
block variables \(\{A_\xi,B_\xi\}\).

\paragraph{An explicit feasible code at $K=5$.}
The \(3.A_6\) constituent \(\Cint=\chi_6\) is an explicit depth-$2$ code of
dimension \(K=5\); its enumerator table (Section~\ref{subsec:22-table}) is a
feasible point of the \(K=5\) problem.

\paragraph{Non-uniqueness of the $K=5$ enumerator.}
The \(K=5\), depth-$2$ feasibility set is not a single point. In the coefficient
basis used for the computation, an explicit one-parameter family of feasible
projector enumerators is
\begin{equation}
\label{eq:22-K5-family}
\begin{aligned}
A_{(0,0)}&=\tfrac{25}{27}, &
A_{(2,5)}=A_{(5,2)}&=\tfrac53-\tfrac32 t, \\
A_{(6,0)}=A_{(0,6)}&=t+\tfrac{10}{27}, &
A_{(4,4)}&=t,
\end{aligned}
\qquad 0\le t\le\tfrac{10}{9},
\end{equation}
with every other \(A\)-block zero; at \(t=\tfrac{10}{9}\) this is the enumerator
of \(\chi_6\), which is thus an endpoint of the family. We therefore make no
claim that the feasible enumerator at \(K=5\) is unique: unlike the three
scalar examples above, whose feasible enumerators are forced, the matrix-valued
feasibility constraints here do not uniquely determine the \(K=5\) enumerator.

\begin{proposition}[Sharp bound for the isotypic adjoint-order model]
\label{prop:22-Kle5}
Every code in the \(\SU 3\) irrep \(V\cong(2,2)\) that detects the full adjoint
isotypic component---equivalently, every depth-\(2\) code for the error model of
Section~\ref{subsec:22-depth-model}---has dimension \(K\le5\). The bound is
attained by the \(3.A_6\) constituent \(\chi_6\).
\end{proposition}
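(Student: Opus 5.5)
The proof has two parts: achievability at $K=5$ and infeasibility at $K=6$. The upper bound then follows from Lemma~\ref{lem:subcode-detection}. A code of dimension $K\ge 6$ detecting the full adjoint isotypic component $\E_{(1,1)}$ would contain a $6$-dimensional subcode that also detects it. By Theorem~\ref{thm:intrinsic-sdp-feasibility}, that subcode's enumerators $A_\xi(P),B_\xi(P)$ would be a feasible point of the $K=6$ system \eqref{eq:sdp-thm-1}--\eqref{eq:sdp-thm-7}, with $\mathcal D=\{(1,1)\}$ and the trivial-block values $A_{(0,0)}=36/27$, $B_{(0,0)}=6/27$. So the whole upper bound reduces to Corollary~\ref{cor:intrinsic-sdp-infeasibility}: we must show that the $K=6$ semidefinite system is infeasible.

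To prove infeasibility I would exhibit an exact dual (Farkas-type) certificate. First I would compute the block MacWilliams matrix $M$ for $V=(2,2)$ explicitly via \eqref{eq:U-mult-trace}, over an exact number field, using a Gelfand--Tsetlin basis of $V$ and explicit isometric intertwiners $\phi^\xi_{\alpha\beta}$ between the equivalent copies. I would sanity-check the result with the weighted orthogonality $MDM^\dagger=D$ of \eqref{eq:M-mult-weighted-orth} and with Lemma~\ref{lem:trivial-row-audit}.

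Next I would eliminate $B$ using $\mathrm{vec}(B)=M\,\mathrm{vec}(A)$, so that the feasible set is a spectrahedron in the Hermitian blocks $A_\xi$ alone. The problem is then to find
\begin{itemize}
\item Hermitian multipliers $Y_\xi\succeq0$ for $A_\xi\succeq0$,
\item $Z_\xi\succeq0$ for $(M\,\mathrm{vec}(A))_\xi\succeq0$,
\item $W_\xi\succeq0$ for $K(M\,\mathrm{vec}(A))_\xi-A_\xi\succeq0$,
\item a free Hermitian multiplier $S$ for the equality $A_{(1,1)}=K\,(M\,\mathrm{vec}(A))_{(1,1)}$,
\item scalar multipliers for the two trace normalizations and the fixed trivial block,
\end{itemize}
such that the resulting linear combination vanishes identically in $A$ while its constant term is strictly negative. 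That combination is a contradiction $0\le(\text{negative})$ for every feasible point.

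In practice I would solve the dual SDP numerically at $K=6$ (and at nearby non-integer $K$, to see where feasibility breaks down), rationalize the multipliers, and verify the identity and all positivity conditions exactly. Since the trace constraints make $A$ bounded, the bounded direction of Farkas applies, and a strictly feasible dual point is enough.

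For sharpness I would use the $3.A_6$ constituent $\Cint=\chi_6$. I would show that it detects all of $\E_{(1,1)}$, i.e.\ $PEP\in\CC P$ for all $E\in\E_{(1,1)}$. The argument goes through the restriction to $3.A_6$: $\E_{(1,1)}\cong 2\cdot(1,1)$, and under $3.A_6$ the adjoint $(1,1)$ restricts to the sum of the two conjugate $8$-dimensional irreducibles (the $3$-dimensional irreducible of $3.A_6$ tensored with its dual, minus the trivial summand). The compression $E\mapsto PEP$ is $3.A_6$-equivariant into $\L(\chi_6)\cong\chi_6^*\otimes\chi_6$. If $\chi_6^*\otimes\chi_6$ contains neither $8$-dimensional constituent, Schur's lemma forces $PEP=0$ for all $E\in\E_{(1,1)}$, which is consistent with $A_{(1,1)}=B_{(1,1)}=0$. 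I would check this with the $A_6$ character table. If that multiplicity check fails, I would instead verify $A_{(1,1)}(P)=5\,B_{(1,1)}(P)$ numerically from the explicit projector, as recorded in the enumerator table, and invoke Theorem~\ref{thm:matrix-KL-equality}.

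The main obstacle is the certificate. Computing $M$ exactly (blocks of sizes $1,2,3$ across $13$ types) is laborious but routine. The hard part is that the $K=6$ dual SDP may be only barely infeasible, and the non-uniqueness of the $K=5$ family \eqref{eq:22-K5-family} suggests degenerate faces. Rounding to exact rationals while keeping every $Y_\xi,Z_\xi,W_\xi$ positive semidefinite may therefore require choosing a strictly interior dual solution, for instance via an analytic-centre or facial-reduction step, before rationalizing.
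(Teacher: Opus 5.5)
Your plan is correct and rests on the same idea as the paper's proof, an exact positive dual certificate for the matrix MacWilliams system, but it is organized differently.

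\textbf{The upper bound.} You certify infeasibility at the single value $K=6$, allowing multipliers for every constraint, and then reach all $K\ge6$ through Lemma~\ref{lem:subcode-detection}. The paper instead gives one identity \eqref{eq:22-supporting-identity}, valid for every $A$ with $\mathrm{vec}(B)=M\,\mathrm{vec}(A)$. It is built from $Y_\xi\succ0$ on the seven sectors in $S$ and $Z=\operatorname{diag}(9/35,1)$ on the adjoint block. After substituting $A_{(0,0)}=K^2/27$, $\sum_\xi\Tr A_\xi=K$ and $A_{(1,1)}=KB_{(1,1)}$, its right-hand side becomes $-(K-5)\bigl(\tfrac{16K}{189}+\Tr(ZB_{(1,1)})\bigr)$. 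This buys three things:
\begin{enumerate}
\item it excludes every $K>5$ at once, with no subcode argument;
\item it shows that $A_\xi\preceq KB_\xi$, $\sum_\xi\Tr B_\xi=K^2$, and positivity of $B_\xi$ away from the adjoint are never needed;
\item because it is tight at $K=5$ (forcing $A_\xi=0$ for $\xi\in S$), it exhibits $K=5$ as an exposed face, consistent with the non-unique family \eqref{eq:22-K5-family}.
\end{enumerate}
Your fixed-$K$ recipe is more generic and would succeed here. Specialized to $K=6$, the paper's identity reads $\sum_{\xi\in S}\Tr(Y_\xi A_\xi)+\Tr(ZB_{(1,1)})=-\tfrac{32}{63}$ on the feasible set. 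That is a Farkas certificate of exactly your shape: all $W_\xi=0$, your $Z_\xi$ equal to the paper's $Z$ at $\xi=(1,1)$ and zero elsewhere, and the free multiplier on the detection equality equal to $Z$. So your worry is unfounded: the $K=6$ problem is not barely infeasible, and the dual blocks can be taken strictly positive definite. One caveat on ``rationalize'': exact multipliers generally live in the number field generated by the entries of $M$, not in $\mathbb{Q}$. The paper's $Y_\xi$ involve $\sqrt{105}$, $\sqrt{4935}$, $\sqrt{6}$, \dots

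\textbf{Attainment.} There is one slip in your sharpness argument. The adjoint $(1,1)$ is $8$-dimensional, so it cannot restrict to the sum of the two $8$-dimensional irreducibles. The restriction of $\CC^3$ to $3.A_6$ is irreducible, so $3\otimes\overline{3}$ contains the trivial module exactly once. Hence $(1,1)\downarrow_{3.A_6}$ is an $8$-dimensional $A_6$-module with no trivial summand, which forces it to be a single irreducible $8$, and $\E_{(1,1)}\downarrow\cong 2\cdot 8$. With that correction your Schur argument goes through. The centre of $\SU 3$ acts trivially on $(2,2)$, so $\chi_6$ is a $5$-dimensional irreducible of $A_6$, and $\chi_6\otimes\chi_6^*\cong 1\oplus5\oplus9\oplus10$ contains no $8$. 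Therefore $PEP=0$ for all $E\in\E_{(1,1)}$. This is a more conceptual justification of attainment than the paper gives: its proof simply asserts that $\chi_6$ has depth $2$, with $A_{(1,1)}=B_{(1,1)}=0$ recorded in the enumerator table.
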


\begin{proof}
By Theorem~\ref{thm:intrinsic-sdp-feasibility}, a depth-$2$ code of dimension
\(K\) yields Hermitian blocks \(A_\xi=A_\xi(P)\succeq0\) and
\(B_\xi=B_\xi(P)\succeq0\) with \(\mathrm{vec}(B)=M\,\mathrm{vec}(A)\), the
normalizations \(A_{(0,0)}=K^2/27\) and \(\sum_\xi\Tr A_\xi=K\), and the
detected-adjoint equality \(A_{(1,1)}=K\,B_{(1,1)}\)
(Theorem~\ref{thm:matrix-KL-equality}).
Appendix~\ref{app:22-certificate} exhibits positive definite matrices
\(Y_\xi\) for \(\xi\in S:=\{(1,1),(2,2),(0,3),(3,0),(4,1),(1,4),(3,3)\}\) and
\(Z=\operatorname{diag}(9/35,1)\succ0\), together with the exact identity, valid
for all \(A\) with \(\mathrm{vec}(B)=M\,\mathrm{vec}(A)\),
\begin{equation}
\label{eq:22-supporting-identity}
\sum_{\xi\in S}\Tr(Y_\xi A_\xi)
=
-\tfrac{16}{7}A_{(0,0)}
+\tfrac{80}{189}\sum_\xi\Tr A_\xi
-\Tr\!\bigl[Z\bigl(A_{(1,1)}-5B_{(1,1)}\bigr)\bigr].
\end{equation}
Substituting the depth-$2$ relations and
\(A_{(1,1)}-5B_{(1,1)}=(K-5)B_{(1,1)}\) gives
\begin{equation}
\label{eq:22-K-factor}
\sum_{\xi\in S}\Tr(Y_\xi A_\xi)
=
-(K-5)\Bigl(\tfrac{16K}{189}+\Tr\bigl(Z\,B_{(1,1)}\bigr)\Bigr).
\end{equation}
The left-hand side is nonnegative because each \(Y_\xi\succ0\) and
\(A_\xi\succeq0\); on the right, \(Z\succ0\) and \(B_{(1,1)}\succeq0\) make the
parenthesized factor strictly positive for \(K>0\). Hence \(K>5\) forces a
strictly negative right-hand side, a contradiction, so \(K\le5\). The
constituent \(\chi_6\) has dimension \(5\) and depth \(2\), so the bound is
attained.
\end{proof}

\begin{remark}
The certificate is stronger than fixed-\(K=6\) infeasibility: it uses only the
positivity \(A_\xi\succeq0\) and \(B_{(1,1)}\succeq0\), the MacWilliams identity,
and the universal normalizations, and does not invoke the full Loewner family
\(A_\xi\preceq K B_\xi\) or \(\sum_\xi\Tr B_\xi=K^2\). At \(K=5\),
\eqref{eq:22-K-factor} forces \(\sum_{\xi\in S}\Tr(Y_\xi A_\xi)=0\), hence
\(A_\xi=0\) for every \(\xi\in S\); this is consistent with the family
\eqref{eq:22-K5-family}, whose support avoids \(S\). Thus \(K=5\) is an exposed
boundary face of the matrix-valued feasibility region, and sharpness does not
require enumerator uniqueness.
\end{remark}

\subsection{An explicit solution of the semidefinite constraints}
\label{subsec:22-table}

Before displaying the enumerators we record two facts used in the computation.
First, the coefficient matrix \(M\) for this example is real, which justifies
restricting the search to real-symmetric blocks.

\begin{lemma}[Real-symmetric reduction]
\label{lem:real-sym-reduction}
Suppose the coefficient matrix \(M\) is real. If the complex-Hermitian
feasibility problem \eqref{eq:sdp-thm-1}--\eqref{eq:sdp-thm-7} has a solution,
then it has one in which every \(A_\xi\) and \(B_\xi\) is real symmetric.
\end{lemma}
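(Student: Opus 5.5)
Given a feasible Hermitian solution $\{A_\xi,B_\xi\}$, replace each block by the real part of itself:
\[
A_\xi':=\tfrac12\bigl(A_\xi+\overline{A_\xi}\bigr),\qquad B_\xi':=\tfrac12\bigl(B_\xi+\overline{B_\xi}\bigr).
\]
We then check that every constraint \eqref{eq:sdp-thm-1}--\eqref{eq:sdp-thm-7} survives. The first step is to observe that entrywise complex conjugation preserves the whole system. If $A_\xi$ is Hermitian, then $\overline{A_\xi}=A_\xi^{T}$ is again Hermitian. Moreover, $c^\dagger \overline{A_\xi}c=\overline{\bar c^\dagger A_\xi \bar c}$, so positive semidefiniteness is preserved, and so is any Loewner inequality $A_\xi\preceq K B_\xi$, applied to the difference $KB_\xi-A_\xi$. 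Traces of Hermitian matrices are real, so $\Tr\overline{A_\xi}=\Tr A_\xi$ and both normalizations \eqref{eq:sdp-thm-4}--\eqref{eq:sdp-thm-5} hold. The linear equalities $A_\xi=KB_\xi$ for $\xi\in\mathcal D$ are preserved because $K$ is real. The MacWilliams identity is the only place the hypothesis enters: conjugating $\mathrm{vec}(B)=M\,\mathrm{vec}(A)$ entrywise gives $\mathrm{vec}(\overline B)=\overline M\,\mathrm{vec}(\overline A)=M\,\mathrm{vec}(\overline A)$, since $M$ is real. Hence $\{\overline{A_\xi},\overline{B_\xi}\}$ is also feasible.

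The second step is convexity. All constraints are either linear equalities or Loewner inequalities, so the feasible set is convex and the average of the two feasible points is feasible. The averaged blocks $A_\xi'=\operatorname{Re}A_\xi$ and $B_\xi'=\operatorname{Re}B_\xi$ are real. They are also symmetric, because for Hermitian $H$ one has $\operatorname{Re}H=\tfrac12(H+H^T)$. This gives the required real-symmetric solution. If the trivial-sector values $A_{\mathbf 1}=K^2/N$ and $B_{\mathbf 1}=K/N$ are imposed, they are real and unchanged by the averaging.

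The only point needing care is bookkeeping rather than difficulty. Entrywise conjugation must be the same operation on $\mathrm{vec}(A)$ and on each block, which holds because vectorization is just stacking entries. One must also make sure no constraint secretly involves a non-real coefficient. I expect the MacWilliams step to be the only place this matters, and it is handled exactly by the assumption that $M$ is real.
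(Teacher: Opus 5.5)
Your proof is correct and follows the paper's own argument. Conjugate entrywise, which keeps the point feasible because every coefficient, including $M$ and $K$, is real. Then average the two feasible points using convexity; the average is real symmetric because the blocks are Hermitian. You simply spell out the constraint-by-constraint checks that the paper compresses into a single sentence.
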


\begin{proof}
If \((A,B)\) is feasible then, because all coefficients are real, so is
\((\overline A,\overline B)\); by convexity of the constraints
\(\bigl(\tfrac{A+\overline A}{2},\tfrac{B+\overline B}{2}\bigr)\) is feasible,
and it is real symmetric since \(A,B\) are Hermitian. All Loewner inequalities
and the affine constraints are preserved under this averaging.
\end{proof}

Second, we record a structural caveat on the error model, valid throughout the
matrix-valued theory.

\begin{remark}[Isotypic versus irreducible sectors]
\label{rmk:isotypic-caveat}
The isotypic components $\E_\xi$ are canonical, whereas a splitting into
equivalent irreducible copies $\E_{\xi\alpha}$ is not. We use whole
isotypic components because they are the realization-independent sectors
controlled by the Schur bootstrap \cite{usIntrinsicCodes}. Additional operator structure may
nevertheless distinguish particular irreducible copies; for example, the
differential of the $\SU3$ action selects a distinguished adjoint submodule
inside $\E_{(1,1)}\cong2(1,1)$. Such finer error models can also be
meaningful, but they are not canonical from the isotypic decomposition
alone.
\end{remark}

The semidefinite feasibility conditions admit an explicit solution
corresponding to the code $\Cint=\chi_6$.
Within each multiplicity space we choose the orthonormal error basis
that diagonalises $B_\xi$; since $A_\xi=0$ for every sector with
$m_\xi>1$, this simultaneously diagonalises both enumerators.
The resulting intrinsic enumerators are:

\[
\resizebox{\textwidth}{!}{$
\begin{array}{c|c|c|ccc|ccc|ccccc}
\text{Rep}
  & \dynkin{0,0} & \dynkin{1,1}
  & \dynkin{2,2} & \dynkin{0,3} & \dynkin{3,0}
  & \dynkin{4,1} & \dynkin{1,4} & \dynkin{3,3}
  & \dynkin{2,5} & \dynkin{5,2} & \dynkin{6,0} & \dynkin{0,6} & \dynkin{4,4}
\\
\hline
\text{Depth} & 0 & 1 & 2 & 2 & 2 & 3 & 3 & 3 & 4 & 4 & 4 & 4 & 4
\\ \hline
A
  & \tfrac{25}{27}
  & \smqty(0&0\\0&0)
  & \smqty(0&0&0\\0&0&0\\0&0&0)
  & 0 & 0
  & \smqty(0&0\\0&0)
  & \smqty(0&0\\0&0)
  & \smqty(0&0\\0&0)
  & 0 & 0
  & \tfrac{40}{27} & \tfrac{40}{27} & \tfrac{10}{9}
\\[6pt]
B
  & \tfrac{5}{27}
  & \smqty(0&0\\0&0)
  & \smqty(\alpha_+&0&0\\0&0&0\\0&0&\alpha_-)
  & \tfrac{125}{189} & \tfrac{125}{189}
  & \smqty(\tfrac{25}{18}&0\\[3pt]0&\tfrac{25}{36})
  & \smqty(\tfrac{25}{18}&0\\[3pt]0&\tfrac{25}{36})
  & \smqty(\tfrac{56}{27}&0\\[3pt]0&\tfrac{520}{189})
  & \tfrac{73}{28} & \tfrac{73}{28}
  & \tfrac{29}{27} & \tfrac{29}{27} & \tfrac{235}{54}
\end{array} 
$} \numberthis
\]

\noindent
Here
\[
  \alpha_\pm = \frac{117 \pm \sqrt{3189}}{84} \numberthis
\]
are the eigenvalues of the $\{\mu{=}1,\mu{=}3\}$ sub-block of $B_{(2,2)}$
(numerically $\alpha_+\approx 2.065$ and $\alpha_-\approx 0.721$),
and the $\mu{=}2$ entry is zero, identifying an error direction
invisible to the code.
All other off-diagonal entries are zero, and $(4,1)$ and $(1,4)$
carry identical diagonal entries reflecting the conjugate symmetry of the code.



\subsection{Significance of the $(2,2)$ example}
\label{subsec:22-significance}

The example above serves first as a proof of concept for the genuinely
matrix-valued theory.  Proposition~\ref{prop:22-Kle5} gives the upper bound
\[
    K\leq 5
\]
from the noncommutative MacWilliams constraints, while the
$3.A_6$ constituent $\Cint=\chi_6$ has $K=5$ and satisfies the required
depth-$2$ conditions.  Thus the semidefinite relaxation can be sharp: in
this example the matrix-valued enumerator constraints recover the exact
optimal code dimension.

The example also has a direct multiqudit interpretation through the Schur
bootstrap of~\cite{usIntrinsicCodes}.  The canonical embedding of the
$\SU3$ irrep $V=(2,2)$ is
\[
    \iota:V\hookrightarrow
    \CC^3\otimes\CC^3\otimes\overline{\CC^3}
    \otimes\overline{\CC^3},
    \numberthis
\]
where $V$ occurs with multiplicity one.  Consequently, every intrinsic
depth-$2$ code in $V$ gives a four-qutrit code of ordinary multi-qudit distance at
least two.  In this particular realization the converse holds as well.
If $\Phi(E)=\iota^\dagger E\iota$ denotes compression to $V$, then the
compressed traceless weight-one error space is
\[
    \operatorname{span}\left\{
        \Phi(X^{(r)}):
        X\in\mathfrak{sl}_3(\CC),\ 1\leq r\leq4
    \right\}
    =
    \E_{(1,1)}
    \cong 2(1,1),
    \numberthis
\]
with the fundamental or anti-fundamental action understood on the
corresponding tensor factor.  Thus the independent one-site errors fill
the entire adjoint isotypic component.  Detection of the full
$\E_{(1,1)}$ intrinsic sector is therefore equivalent here to detection of arbitrary
weight-one errors in the multi-qutrit realization.  It follows that the bound $K\leq5$ is also the sharp
bound for distance-$2$ codes supported in this canonical $(2,2)$ sector of
four qutrits, and $\chi_6$ attains it.

There is a useful comparison with the multiplicity-free regime at the same
block length and local dimension.  The symmetric four-qutrit sector
\[
    V_{\mathrm{sym}}
    =
    \Sym^4(\CC^3)
    \cong(4,0)
\]
has multiplicity-free conjugation representation by
Proposition~\ref{prop:sym-power-mult-free}, so its enumerator bound is a
scalar linear program.  In this case that program gives the sharp
distance-$2$ bound
\[
    K\leq3.
    \numberthis
\]
For completeness, this follows already from the $K=4$ feasibility
conditions.  Writing the normalized projector enumerator as
$\widetilde A=(1,\widetilde A_1,\ldots,\widetilde A_4)$, the normalization
and adjoint-detection equations reduce to
\[
    \widetilde A_1+\widetilde A_2+\widetilde A_3+\widetilde A_4
    =\frac{11}{4},
\]
and, after eliminating $\widetilde A_4$,
\[
    49\widetilde A_1
    +64\widetilde A_2
    +36\widetilde A_3
    +16
    =0.
\]
This is impossible by nonnegativity, so the $K=4$ linear program is
infeasible; Lemma~\ref{lem:subcode-detection} then gives $K\leq3$.
The bound is attained by a permutation-invariant
$((4,3,2))_3$ code \cite{gross3}.  Thus, already for four qutrits and distance two, the
non-multiplicity-free $(2,2)$ sector supports a $5$-dimensional code $((4,5,2))_3$,
whereas the natural multiplicity-free symmetric sector supports at most
dimension $3$.  Multiplicity is therefore not merely a technical
complication of the formalism: the genuinely matrix-valued regime can
contain larger optimal codes than its multiplicity-free counterpart.

More broadly, this example illustrates why it is useful to formulate and
optimize quantum codes intrinsically.  The Schur bootstrap
of~\cite{usIntrinsicCodes} implies that Knill--Laflamme conditions imposed
on full isotypic sectors of $V$ propagate to every equivariant realization
of $V$.  One may therefore solve the enumerator problem once in the
typically much smaller representation space $V$, and transfer the resulting
code and bounds simultaneously to its physical realizations.  For standard
fundamental multiqudit realizations, adjoint order agrees with physical
error weight and intrinsic depth is ordinary code distance; other
equivariant realizations include heterogeneous multiqudit, bosonic, and
molecular settings in which the same intrinsic protection conditions
continue to apply.  This realization-independence is also a principal
reason for using full isotypic error sectors: they are canonical from
$(G,V)$ and are precisely the sectors controlled uniformly by equivariant
compression.

The $(2,2)$ example is intended as the first test of this noncommutative
regime rather than an exhaustive study of it.  A natural direction for
future work is therefore to apply the matrix-valued MacWilliams transform
and its SDP to broader families of non-multiplicity-free irreducible
representations of $\SU q$, to search for exact dual certificates and
explicit codes attaining them, and to determine how often multiplicity
permits code parameters that cannot occur in comparable
multiplicity-free sectors.

\appendices

\makeatletter
\renewcommand{\theequation}{A\arabic{equation}}
\renewcommand{\thetable}{A\arabic{table}}
\renewcommand{\thefigure}{A\arabic{figure}}
\renewcommand{\thelemma}{A\arabic{lemma}}
\renewcommand{\theremark}{A\arabic{remark}}
\renewcommand{\theproposition}{A\arabic{proposition}}
\renewcommand{\thecorollary}{A\arabic{corollary}}
\renewcommand{\thetheorem}{A\arabic{theorem}}
\setcounter{table}{0}
\setcounter{figure}{0}
\setcounter{lemma}{0}
\setcounter{remark}{0}
\setcounter{corollary}{0}
\setcounter{proposition}{0}
\setcounter{theorem}{0}
\setcounter{equation}{0}

\section{The Multiplicity--Free Specialization}
\label{sec:mf-macwilliams}

Section~\ref{sec:mult} developed equivariant MacWilliams theory for an
arbitrary finite-dimensional unitary representation \(V\) of a group \(G\),
allowing arbitrary multiplicities in the conjugation representation on
\(\L(V)\cong V^*\otimes V\).
This appendix records the special case in which that conjugation
representation is \emph{multiplicity--free}, and then carries out the two
concrete computations that this case makes possible: the \(\SU2\) transform in
terms of Wigner \(6j\)-symbols, and the wreath-product transform that
reproduces the quantum Krawtchouk matrix.
The symmetric-power and permutation-invariant examples that use this
specialization are collected in Appendix~\ref{sec:mf-examples}.

Nothing in Sections~\ref{subsec:mf-collapse}--\ref{subsec:wreath-krawtchouk} below
is logically new: every object and identity is the \(m_\xi\equiv1\) instance of
a corresponding construction in Section~\ref{sec:mult}.
As discussed in Section~\ref{subsec:relation-to-prior}, this
multiplicity--free theory has also been developed previously in the language of
quantum metric spaces --- the \(\SU2\) case by
Bumgardner~\cite{bumgardner2012codeswastmetricspacestheory}, and the general
linear-programming framework by
Okada~\cite{okada2025quantumanalogdelsarteslinear}.
We include the specialization only to fix the scalar notation used in the
examples, to exhibit the commutative limit of the matrix-valued theory
explicitly, and to record the two closed-form transforms, which do not appear
in the general development.

\subsection{The multiplicity--free collapse}
\label{subsec:mf-collapse}

Assume the conjugation representation on \(\L(V)\) is multiplicity--free, so
that its isotypic decomposition
\begin{equation}\label{eq:mf-decomp}
    \L(V)=\bigoplus_{\xi}\E_\xi
\end{equation}
has every nonzero summand \(\E_\xi\) irreducible and occurring with
multiplicity \(m_\xi=1\).
Every construction of Section~\ref{sec:mult} then collapses along its
multiplicity index, and we obtain the corresponding scalar theory with no
further work.
We record the collapse once, in the order in which the objects were introduced
in the main text.

\paragraph{Enumerator data.}
With each multiplicity space one-dimensional, the triple index
\((\xi,\alpha,\beta)\) reduces to the single label \(\xi\).
The projector matrix unit and generalized twirl of Section~\ref{sec:mult}
become the ordinary projector and twirling superoperators
\begin{equation}\label{eq:mf-collapse-ops}
    \Pi_{\xi11}=\proj_\xi,
    \qquad
    \Twirl_{\xi11}=\twirl_\xi,
\end{equation}
where, for an orthonormal basis \(\B_\xi\) of \(\E_\xi\),
\[
    \proj_\xi(X)=\sum_{E\in\B_\xi}\ip{E,X}\,E,
    \qquad
    \twirl_\xi(X)=\sum_{E\in\B_\xi}E^\dagger X E .
\]
The matrix-valued enumerators of Section~\ref{sec:mult} become the scalars
\begin{align}
    A_\xi(X_1,X_2)
    &:= \ip{X_1,\proj_\xi(X_2)}
     = \sum_{E\in\B_\xi}\Tr(X_1^\dagger E)\Tr(X_2E^\dagger),
    \label{eq:mf-a-def}\\
    B_\xi(X_1,X_2)
    &:= \ip{X_1,\twirl_\xi(X_2)}
     = \sum_{E\in\B_\xi}\Tr(X_1^\dagger E^\dagger X_2 E),
    \label{eq:mf-b-def}
\end{align}
and we again write \(A_\xi(X,X)\), \(B_\xi(X,X)\) for the associated quadratic
enumerators.
These are exactly the sesquilinear forms of Section~\ref{sec:intrinsic-forms}
attached to the decomposition~\eqref{eq:mf-decomp}, so all of their properties
are already available.

\paragraph{Positivity, normalization, and detection.}
Specializing Lemma~\ref{lem:block-psd}, Lemma~\ref{lem:matrix-KL}, and
Theorem~\ref{thm:matrix-KL-equality} to \(m_\xi=1\) turns every Loewner
inequality into a scalar inequality and every trace into the scalar itself.
Equivalently, one may read these facts directly off
Lemma~\ref{lem:basic-identities}, Lemma~\ref{lem:ai-bi-ineq}, and
Corollary~\ref{cor:code-identities} of Section~\ref{sec:intrinsic-forms}.
We collect the result for later reference.

\begin{corollary}
\label{cor:mf-basic-identities}
Let the conjugation representation on \(\L(V)\) be multiplicity--free, and let
\(\C\subseteq V\) be a code of dimension \(K:=\dim\C\) with projector \(P\).
Then for every \(\xi\),
\[
    A_\xi(P,P)\ge0,
    \qquad
    \sum_\xi A_\xi(P,P)=K,
    \qquad
    \sum_\xi B_\xi(P,P)=K^2,
\]
and
\[
    A_\xi(P,P)\le K\,B_\xi(P,P),
\]
with equality if and only if \(PEP=c_E P\) for all \(E\in\B_\xi\), i.e.\ if and
only if \(\C\) detects the sector \(\E_\xi\).
If \(\E_0=\mathrm{span}\{I\}\) then \(A_0(P,P)=K^2/N\) and \(B_0(P,P)=K/N\).
\end{corollary}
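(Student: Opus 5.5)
This is a direct specialization, so the plan is to reduce each assertion to a result already established in Section~\ref{sec:intrinsic-forms}, where the decomposition $\L(V)=\bigoplus_\xi\E_\xi$ was arbitrary.

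\emph{Step 1: reduction.} Under the multiplicity-free hypothesis, the isotypic decomposition \eqref{eq:mf-decomp} is an orthogonal decomposition of $\L(V)$ with respect to the Hilbert--Schmidt inner product, because isotypic components of a unitary representation are mutually orthogonal. The scalar forms \eqref{eq:mf-a-def}--\eqref{eq:mf-b-def} are literally the forms \eqref{eq:a-def}--\eqref{eq:b-def} attached to this decomposition. The hypothesis of Section~\ref{sec:intrinsic-forms} is therefore satisfied, and we may apply Corollary~\ref{cor:code-identities} to $X=P$.

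\emph{Step 2: read off the identities.} Items (1)--(3) of Corollary~\ref{cor:code-identities} give $A_\xi(P,P)\ge0$, $\sum_\xi A_\xi(P,P)=K$, and $\sum_\xi B_\xi(P,P)=K^2$. Item (4) gives $A_0=K^2/N$ and $B_0=K/N$ when $\E_0=\mathrm{span}\{I\}$. Item (5) gives the inequality $A_\xi\le K B_\xi$, with equality exactly when $PEP=c_EP$ for all $E\in\B_\xi$. Since detection is linear in $E$, this holds for all $E\in\B_\xi$ if and only if it holds on all of $\E_\xi$, which is the statement that $\C$ detects the sector.

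\emph{Step 3: cross-check.} Alternatively, we can set $m_\xi=1$ in Lemma~\ref{lem:block-psd}, Lemma~\ref{lem:matrix-KL}, and Theorem~\ref{thm:matrix-KL-equality}. Then the $1\times1$ matrices $A_\xi(P)$ and $B_\xi(P)$ coincide with the scalars via \eqref{eq:mf-collapse-ops}, Loewner order reduces to the usual order, and traces reduce to the entries themselves.

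There is no real obstacle. The only point needing care is Step 1: confirming that isotypic components are Hilbert--Schmidt orthogonal. This holds because conjugation by a unitary representation preserves the Hilbert--Schmidt inner product, and distinct isotypic components of a unitary representation are orthogonal.
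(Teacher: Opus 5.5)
Your proposal is correct and matches the paper's proof, which likewise just specializes Corollary~\ref{cor:code-identities} to the isotypic decomposition (or, equivalently, sets $m_\xi=1$ in Lemma~\ref{lem:block-psd} and Theorem~\ref{thm:matrix-KL-equality}). Your explicit check that distinct isotypic components are Hilbert--Schmidt orthogonal is a hypothesis the paper leaves implicit, and your justification of it is sound.
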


\begin{proof}
Specialize Corollary~\ref{cor:code-identities} (equivalently, set
\(m_\xi=1\) in Lemma~\ref{lem:block-psd} and
Theorem~\ref{thm:matrix-KL-equality}).
\end{proof}

\paragraph{The intertwiner algebra and the transform.}
By Schur's lemma each one-dimensional multiplicity space contributes a
one-dimensional endomorphism algebra, so
\[
    \Hom_G(\L(V),\L(V))\cong\bigoplus_\xi\CC
\]
is commutative.
The two orthonormal bases of Section~\ref{sec:mult} collapse to the scalar
families
\[
    \Bigl\{d_\xi^{-1/2}\proj_\xi\Bigr\}_\xi,
    \qquad
    \Bigl\{d_\xi^{-1/2}\twirl_\xi\Bigr\}_\xi ,
    \qquad d_\xi:=\dim\xi,
\]
each an orthonormal basis of this commutative algebra.
The unitary change of basis between them is the scalar matrix \(U\) obtained by
setting \(m_\xi=1\) in Proposition~\ref{prop:block-macwilliams}, and the
associated coefficient matrix \(M\) of~\eqref{eq:M-mult-def} becomes the scalar
transform
\begin{equation}\label{eq:mf-M-def}
    M_{\xi\rho}=\sqrt{\tfrac{d_\xi}{d_\rho}}\,U_{\xi\rho},
    \qquad
    B_\xi(P,P)=\sum_\rho M_{\xi\rho}\,A_\rho(P,P),
\end{equation}
which is the scalar case of the matrix MacWilliams identity
(Proposition~\ref{prop:matrix-macwilliams}).
The weighted-orthogonality relation is inherited directly.

\begin{proposition}[Weighted orthogonality]
\label{prop:mf-weighted-orthogonality}
With \(D:=\mathrm{diag}(d_\xi)\), the scalar MacWilliams matrix \(M\)
of~\eqref{eq:mf-M-def} satisfies
\[
    M\,D\,M^{\mathsf T}=D,
\]
equivalently \(D^{-1/2}MD^{1/2}=U\) is unitary.
\end{proposition}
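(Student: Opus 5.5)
The plan is to obtain the identity directly from the unitarity of the change of basis in Proposition~\ref{prop:block-macwilliams}, specialized to $m_\xi\equiv1$, and to treat the passage from conjugate transpose to plain transpose as a separate reality statement about $M$.

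\textbf{Step 1 (the relation with conjugate transpose).} With $m_\xi=1$ the two families $\{d_\xi^{-1/2}\proj_\xi\}$ and $\{d_\xi^{-1/2}\twirl_\xi\}$ are orthonormal bases of the commutative algebra $\Hom_G(\L(V),\L(V))\cong\bigoplus_\xi\CC$. Hence the scalar matrix $U$ of inner products between them is unitary. By \eqref{eq:mf-M-def}, $M_{\xi\rho}=\sqrt{d_\xi/d_\rho}\,U_{\xi\rho}$, that is, $M=D^{1/2}UD^{-1/2}$, which is the same as $D^{-1/2}MD^{1/2}=U$. Then
\[
MDM^\dagger=D^{1/2}UD^{-1/2}\,D\,D^{-1/2}U^\dagger D^{1/2}=D^{1/2}UU^\dagger D^{1/2}=D .
\]
This is the $m_\xi=1$ case of the boxed relation \eqref{eq:M-mult-weighted-orth}, so the equivalence with unitarity of $U$ is immediate.

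\textbf{Step 2 (reality of $M$).} To replace $M^\dagger$ by $M^{\mathsf T}$, it remains to show $M$ is real. In the commutative algebra, $\twirl_\xi=\sum_\rho \lambda_{\xi\rho}\proj_\rho$, where $\lambda_{\xi\rho}$ is the scalar by which $\twirl_\xi$ acts on $\E_\rho$. Computing this scalar gives
\[
\lambda_{\xi\rho}=\frac{1}{d_\rho}\sum_{E\in\B_\xi}\sum_{F\in\B_\rho}\Tr(F^\dagger E^\dagger F E).
\]
I would establish reality of this quantity using two symmetries.
\begin{itemize}
\item[(a)] Taking the complex conjugate of each trace, $\overline{\Tr(F^\dagger E^\dagger FE)}=\Tr(E^\dagger F^\dagger E F)$. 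This gives $\overline{d_\rho\lambda_{\xi\rho}}=d_\xi\lambda_{\rho\xi}$, which is a weighted symmetry of $M$.
\item[(b)] The adjoint map $X\mapsto X^\dagger$ is an antilinear $G$-equivariant bijection sending $\E_\xi$ to $\E_{\xi^*}$. From this, $\twirl_\xi^\dagger=\twirl_{\xi^*}$. In addition, $\twirl_\xi$ is Hermiticity-preserving, which yields $\lambda_{\xi\rho^*}=\overline{\lambda_{\xi\rho}}$.
\end{itemize}

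\textbf{Main obstacle.} Step 2 is where I expect the difficulty. When every occurring sector is self-conjugate ($\rho\cong\rho^*$), as for all $\SU2$ sectors and the real Pauli/Weyl-type decompositions, property (b) already forces every $\lambda_{\xi\rho}$ to be real. Then $M^\dagger=M^{\mathsf T}$ and the stated form follows. In the general case, I would first try to combine (a) and (b) with the trace identity $\Tr(Y)=\Tr(Y^{\mathsf T})$ in a basis where $G$ acts by real-structure-compatible matrices, aiming for a uniform reality argument. If that fails, I would read $M^{\mathsf T}$ as the conjugate transpose, which Step 1 already establishes unconditionally and which is the only form used in the equivalence with unitarity of $U$.
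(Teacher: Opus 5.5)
Your Step~1 is exactly the paper's proof. The paper cites the $m_\xi=1$ case of \eqref{eq:M-mult-weighted-orth}, i.e.\ it substitutes $M=D^{1/2}UD^{-1/2}$ and uses $UU^\dagger=I$. Note that this argument, yours and the paper's alike, yields $MDM^\dagger=D$. The paper then writes $M^{\mathsf T}$ without comment, so it never addresses the reality question you isolate in Step~2.

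Your suspicion there is well founded. The ``uniform reality argument'' you plan to look for cannot exist, because the transposed identity is false at this level of generality. Let $G$ be the Weyl--Heisenberg group generated by $X,Z$ acting irreducibly on $\CC^q$, with $ZX=\omega XZ$, $\omega=e^{2\pi i/q}$, and $q\ge3$. Conjugation splits $\L(\CC^q)$ into the $q^2$ lines $\CC X^cZ^d$, which carry pairwise distinct characters. So the setting is multiplicity-free, with every $d_\xi=1$ and $D=I$. With $\B_{(a,b)}=\{q^{-1/2}X^aZ^b\}$ one gets $\twirl_{(a,b)}(X^cZ^d)=q^{-1}\omega^{ad-bc}X^cZ^d$, hence $M_{(a,b),(c,d)}=q^{-1}\omega^{ad-bc}$. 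Then $MM^\dagger=I$, but $(MM^{\mathsf T})_{\xi\xi'}=\delta_{\xi',-\xi}$ is a nontrivial permutation matrix. This is precisely the failure mode your (b) points to, since the sector $(c,d)$ has conjugate $(-c,-d)\neq(c,d)$.

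So the identity that holds in general is $MDM^\dagger=D$, and this is also the form actually equivalent to unitarity of $U$. Since $M$ is invertible, $M^{\mathsf T}$ may replace $M^\dagger$ exactly when $M$ is real. Your (b) gives reality whenever every sector is self-conjugate. That covers the multiplicity-free cases the paper computes: $\SU{2}$, the symmetric powers, and the one-site and wreath-product weight sectors.

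Your real-structure idea also works when $V\cong V^*$. Conjugating by an antiunitary $J$ that commutes with $G$ gives $\lambda_{\xi\rho^*}=\overline{\lambda_{\xi^*\rho}}$. Together with (b) and $\twirl_\xi^\dagger=\twirl_{\xi^*}$, this forces $\lambda_{\xi\rho}\in\mathbb{R}$. Consistently, $\CC^q$ is not self-conjugate in the counterexample above.
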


\begin{proof}
This is the multiplicity--free case of the weighted-orthogonality relation
\eqref{eq:M-mult-weighted-orth}; alternatively substitute
\(M=D^{1/2}UD^{-1/2}\) and use \(UU^\dagger=I\).
\end{proof}

The matrix \(M\) plays the role of the Krawtchouk transform of classical coding
theory: it carries the projector enumerator to the dual (twirl) enumerator, and
Proposition~\ref{prop:mf-weighted-orthogonality} shows it is orthogonal with
respect to the sector dimensions \(d_\xi\).
The trivial sector gives a constant first row, since specializing
Lemma~\ref{lem:trivial-row-audit} to an irreducible \(V\) of dimension \(N\)
gives \(M_{0\rho}=1/N\) for every \(\rho\).

\paragraph{The linear program.}
Finally, the semidefinite feasibility theorem collapses to a linear one.

\begin{theorem}[Intrinsic LP feasibility]
\label{thm:intrinsic-lp-feasibility}
Let \(V\) be a finite-dimensional unitary representation of \(G\) whose
conjugation representation on \(\L(V)=\bigoplus_\xi\E_\xi\) is
multiplicity--free.
Suppose there is a code \(\C\subseteq V\) of dimension \(K:=\dim\C\) detecting
every sector \(\E_\xi\) for \(\xi\in\mathcal D\).
Then there exist real numbers \(\{A_\xi\},\{B_\xi\}\) with
\begin{align}
    A_\xi &\ge 0
        &&\text{for all }\xi,\label{eq:lp-thm-1}\\
    A_\xi &\le K\,B_\xi
        &&\text{for all }\xi,\label{eq:lp-thm-2}\\
    \textstyle\sum_\xi A_\xi &= K,\label{eq:lp-thm-3}\\
    \textstyle\sum_\xi B_\xi &= K^2,\label{eq:lp-thm-4}\\
    B_\xi &= \textstyle\sum_\rho M_{\xi\rho}A_\rho
        &&\text{for all }\xi,\label{eq:lp-thm-5}\\
    A_\xi &= K\,B_\xi
        &&\text{for all }\xi\in\mathcal D.\label{eq:lp-thm-6}
\end{align}
One may take \(A_\xi=A_\xi(P,P)\) and \(B_\xi=B_\xi(P,P)\).
\end{theorem}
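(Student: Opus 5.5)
The plan is to set \(A_\xi:=A_\xi(P,P)\) and \(B_\xi:=B_\xi(P,P)\) for the code projector \(P\), and to check each constraint \eqref{eq:lp-thm-1}--\eqref{eq:lp-thm-6} against results already established. This is the \(m_\xi\equiv1\) instance of Theorem~\ref{thm:intrinsic-sdp-feasibility}, so one could simply specialize that theorem. I would nonetheless route the scalar constraints through Corollary~\ref{cor:mf-basic-identities}, so that the argument stays self-contained in the scalar notation.

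First, the numbers are real. \(A_\xi(P,P)=\|\proj_\xi(P)\|_2^2\) is manifestly real. For \(B_\xi\), since \(P=P^\dagger\), each term \(\Tr(PE^\dagger PE)\) equals \(\|PEP\|_2^2\), which is real and nonnegative. Next, Corollary~\ref{cor:mf-basic-identities} supplies \eqref{eq:lp-thm-1}, \eqref{eq:lp-thm-3}, \eqref{eq:lp-thm-4} and \eqref{eq:lp-thm-2} directly. Its equality clause gives \eqref{eq:lp-thm-6}: for \(\xi\in\mathcal D\) the code detects \(\E_\xi\), so \(PEP=c_EP\) for all \(E\in\B_\xi\), and this forces \(A_\xi(P,P)=K\,B_\xi(P,P)\).

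The remaining constraint is the linear identity \eqref{eq:lp-thm-5}. I obtain it from Proposition~\ref{prop:matrix-macwilliams} with \(X=P\). When every \(m_\xi=1\), the triple indices \((\xi,1,1)\) collapse to \(\xi\), and by \eqref{eq:mf-collapse-ops} the matrix-valued enumerators become the scalar forms \eqref{eq:mf-a-def}--\eqref{eq:mf-b-def}. The block identity \(\mathrm{vec}(B)=M\,\mathrm{vec}(A)\) then reads \(B_\xi=\sum_\rho M_{\xi\rho}A_\rho\), with \(M\) as in \eqref{eq:mf-M-def}.

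The one step needing care is bookkeeping. The row/column convention and normalization of \(M\) in \eqref{eq:mf-M-def} must match \eqref{eq:M-mult-def}, with the output index as the row and the factor \(\sqrt{d_{\text{out}}/d_{\text{in}}}\). I would confirm this by a direct check. Since \(\{\proj_\rho\}\) is an orthogonal basis of the commutative algebra \(\Hom_G(\L(V),\L(V))\) with \(\ipp{\proj_\rho,\proj_\rho}=d_\rho\), we have
\[
\twirl_\xi=\sum_\rho \frac{\ipp{\proj_\rho,\twirl_\xi}}{d_\rho}\,\proj_\rho .
\]
Pairing both sides with \(P\) in the form \(\ip{P,\,\cdot\,(P)}\) gives \(B_\xi=\sum_\rho M_{\xi\rho}A_\rho\) with \(M_{\xi\rho}=\ipp{\proj_\rho,\twirl_\xi}/d_\rho=\sqrt{d_\xi/d_\rho}\,U_{\xi\rho}\), in agreement with \eqref{eq:mf-M-def}. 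This expansion needs only that \(\twirl_\xi\) lies in the span of the \(\proj_\rho\), which holds because \(\twirl_\xi\) is \(G\)-equivariant and the intertwiner algebra is spanned by the projectors in the multiplicity-free case.
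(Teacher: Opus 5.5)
Your proposal is correct and follows essentially the same route as the paper, which proves the theorem by setting $m_\xi=1$ in Theorem~\ref{thm:intrinsic-sdp-feasibility} and citing Corollary~\ref{cor:mf-basic-identities} for feasibility of $A_\xi(P,P)$ and $B_\xi(P,P)$. Your expansion of $\twirl_\xi$ in the orthogonal projector basis, and your observation that $B_\xi(P,P)=\sum_{E}\|PEP\|_2^2$ is real, spell out the convention $M_{\xi\rho}=\ipp{\proj_\rho,\twirl_\xi}/d_\rho$ that the paper leaves implicit.
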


\begin{proof}
Set \(m_\xi=1\) in Theorem~\ref{thm:intrinsic-sdp-feasibility}: the Hermitian
blocks \(A_\xi,B_\xi\) become nonnegative reals,
\eqref{eq:sdp-thm-1}--\eqref{eq:sdp-thm-2} become
\eqref{eq:lp-thm-1}, the Loewner inequality \eqref{eq:sdp-thm-3} becomes
\eqref{eq:lp-thm-2}, the trace normalizations \eqref{eq:sdp-thm-4}--\eqref{eq:sdp-thm-5}
become \eqref{eq:lp-thm-3}--\eqref{eq:lp-thm-4}, the vectorized identity
\eqref{eq:sdp-thm-6} becomes the scalar transform \eqref{eq:lp-thm-5}, and the
detection equalities \eqref{eq:sdp-thm-7} become \eqref{eq:lp-thm-6}.
The values \(A_\xi(P,P),B_\xi(P,P)\) are feasible by
Corollary~\ref{cor:mf-basic-identities}.
\end{proof}

\begin{corollary}
\label{cor:intrinsic-lp-infeasibility}
Fix \(K\) and a detected set \(\mathcal D\).
If the system \eqref{eq:lp-thm-1}--\eqref{eq:lp-thm-6} is infeasible, then no
code of dimension \(K\) detects every sector \(\E_\xi\) for \(\xi\in\mathcal D\).
Combined with Lemma~\ref{lem:subcode-detection}, infeasibility of the
fixed-\((K_0{+}1)\) system implies that no such code has dimension exceeding
\(K_0\).
\end{corollary}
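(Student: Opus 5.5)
The plan is to obtain the statement as a direct combination of two earlier results. The first is Theorem~\ref{thm:intrinsic-lp-feasibility}, in contrapositive form. The second is Lemma~\ref{lem:subcode-detection}, which says that detection passes to subcodes.

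\emph{First claim (fixed $K$).} Suppose, for contradiction, that some code $\C\subseteq V$ with $\dim\C=K$ detects every sector $\E_\xi$ for $\xi\in\mathcal D$. Let $P$ be its projector. Theorem~\ref{thm:intrinsic-lp-feasibility} (the $m_\xi\equiv1$ specialization of Theorem~\ref{thm:intrinsic-sdp-feasibility}) says that $A_\xi=A_\xi(P,P)$ and $B_\xi=B_\xi(P,P)$ satisfy \eqref{eq:lp-thm-1}--\eqref{eq:lp-thm-6}. The individual constraints come from the following results:
\begin{itemize}
\item nonnegativity, the Knill--Laflamme inequality and the two normalizations come from Corollary~\ref{cor:mf-basic-identities};
\item the transform relation \eqref{eq:lp-thm-5} is the scalar case of Proposition~\ref{prop:matrix-macwilliams};
\item the equalities on $\mathcal D$ come from the equality case of Corollary~\ref{cor:mf-basic-identities}.
\end{itemize}
This gives a feasible point, contradicting the assumed infeasibility.

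\emph{Second claim (the bound $K_0$).} Assume the fixed-$(K_0{+}1)$ system is infeasible, and suppose some code $\C$ of dimension $K>K_0$ detects every $\E_\xi$ with $\xi\in\mathcal D$. Choose any $(K_0{+}1)$-dimensional subspace $\C'\subseteq\C$. By Lemma~\ref{lem:subcode-detection}, $\C'$ detects each error $E\in\E_\xi$ for $\xi\in\mathcal D$. Applying the first claim with $K=K_0+1$ to $\C'$ gives a contradiction. Hence no such code has dimension exceeding $K_0$.

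The only point that needs care is that Lemma~\ref{lem:subcode-detection} is stated error by error, and it applies uniformly to every $E$ in the detected sectors. There is also no integrality issue: $K$ enters the system only as the fixed parameter $K_0+1$, and subcodes of every intermediate dimension exist. I expect no real obstacle, since the argument is bookkeeping around previously established feasibility.
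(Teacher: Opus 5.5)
Your proposal is correct and matches the paper's proof. The paper also takes the contrapositive of the feasibility result and then applies Lemma~\ref{lem:subcode-detection} to pass to a $(K_0{+}1)$-dimensional subcode; it cites the multiplicity-free case of Corollary~\ref{cor:intrinsic-sdp-infeasibility} where you cite Theorem~\ref{thm:intrinsic-lp-feasibility} directly, which amounts to the same argument.
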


\begin{proof}
The multiplicity--free case of
Corollary~\ref{cor:intrinsic-sdp-infeasibility}, together with
Lemma~\ref{lem:subcode-detection}.
\end{proof}

\paragraph{Normalized coordinates.}
For the examples we use the coding-theoretic normalization of
Section~\ref{sec:sl-recovery},
\[
    \widetilde A_\xi:=\frac{N}{K^2}A_\xi(P,P),
    \qquad
    \widetilde B_\xi:=\frac{N}{K}B_\xi(P,P),
\]
under which \(\widetilde A_0=\widetilde B_0=1\) (using
\(\E_0=\mathrm{span}\{I\}\), which exists because \(V\) is irreducible), the
MacWilliams identity reads \(\widetilde B=KM\widetilde A\), and each detected
sector contributes the coding-theoretic equality \(\widetilde A_\xi=\widetilde
B_\xi\).
These are the same normalized enumerators that, in the wreath-product case of
Section~\ref{sec:sl-recovery}, coincide with the Shor--Laflamme enumerators
\(A_w^{\mathrm{SL}},B_w^{\mathrm{SL}}\).

\subsection{The \texorpdfstring{$\SU2$}{SU(2)} transform via \texorpdfstring{$6j$}{6j}-symbols}
\label{subsec:su2-macwilliams}

We now specialize the transform~\eqref{eq:mf-M-def} to \(G=\SU2\), where it has
a closed form; this case was first treated by
Bumgardner~\cite{bumgardner2012codeswastmetricspacestheory}.
Let \(V\cong V_j\) be the spin-\(j\) irreducible representation.
Under conjugation,
\begin{equation}\label{eq:L-su2-decomp}
    \L(V)\cong V_j\otimes V_j^*\cong\bigoplus_{k=0}^{2j}V_k,
\end{equation}
where \(V_k\) is the \((2k+1)\)-dimensional space of rank-\(k\) irreducible
tensor operators; the decomposition is multiplicity--free.
For each \(k\) let \(\Pi_k\) be the projector onto \(V_k\) and
\begin{equation}\label{eq:Twirlk-def}
    \Twirl_k(X):=\sum_{q=-k}^{k}\bigl(T^{(k)}_q\bigr)^\dagger X\,T^{(k)}_q,
\end{equation}
for any orthonormal basis \(\{T^{(k)}_q\}_{q=-k}^{k}\) of \(V_k\) (for example
the rank-\(k\) spherical tensors~\cite{quantumtheoryangularmomentum}); the
definition is basis-independent.
By Section~\ref{subsec:mf-collapse} the normalized families
\[
    \Bigl\{(2k+1)^{-1/2}\Pi_k\Bigr\}_{k=0}^{2j},
    \qquad
    \Bigl\{(2k+1)^{-1/2}\Twirl_k\Bigr\}_{k=0}^{2j}
\]
are dual orthonormal bases of \(\Hom_{\SU2}(\L(V),\L(V))\), and the transform is
\begin{equation}\label{eq:U-def-su2}
    [U]_{k_1k_2}
    =\big\langle(2k_1+1)^{-1/2}\Pi_{k_1},\,(2k_2+1)^{-1/2}\Twirl_{k_2}\big\rangle,
    \qquad 0\le k_1,k_2\le 2j.
\end{equation}

\begin{remark}[Index convention]
\label{rmk:su2-index}
The general transform~\eqref{eq:M-mult-def} places the twirl (output) label
first, whereas~\eqref{eq:U-def-su2} places the projector label first.
This is immaterial here: by the row-exchange symmetry of the \(6j\)-symbol
in~\eqref{eq:U-6j}, \([U]_{k_1k_2}\) is symmetric, and the unnormalized matrix
\(M\) of Corollary~\ref{cor:M-6j} is written in the output-as-row convention
\(B_{k_1}=\sum_{k_2}M_{k_1k_2}A_{k_2}\), consistent with the general theory.
\end{remark}

\begin{proposition}[Scalar action of the twirling maps]
\label{prop:twirl-scalar}
For each \(0\le k_1,k_2\le 2j\), the map \(\Twirl_{k_2}\) acts on the summand
\(V_{k_1}\) in~\eqref{eq:L-su2-decomp} as a scalar,
\[
    \Twirl_{k_2}\big|_{V_{k_1}}=c_{k_1k_2}\,\mathrm{id}_{V_{k_1}},
\]
and
\begin{equation}\label{eq:U-vs-c}
    [U]_{k_1k_2}=\sqrt{\tfrac{2k_1+1}{2k_2+1}}\;c_{k_1k_2}.
\end{equation}
\end{proposition}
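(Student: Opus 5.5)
The argument has two parts: Schur's lemma gives the scalar action, and a direct Hilbert--Schmidt computation gives the formula relating \(U\) to \(c\).

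\emph{Scalar action.} First I check that \(\Twirl_{k_2}\) is \(\SU2\)-equivariant. For \(g\in\SU2\), conjugation \(E\mapsto gEg^\dagger\) acts unitarily on \(V_{k_2}\). Hence \(\{gT^{(k_2)}_qg^\dagger\}_q\) is again an orthonormal basis of \(V_{k_2}\), and by basis-independence of the twirl (the lemma of Section~\ref{sec:intrinsic-forms}) we get
\[
\Twirl_{k_2}(gXg^\dagger)=\sum_q (T_q)^\dagger gXg^\dagger T_q = g\Bigl(\sum_q (g^\dagger T_q g)^\dagger X (g^\dagger T_q g)\Bigr)g^\dagger = g\,\Twirl_{k_2}(X)\,g^\dagger .
\]
So \(\Twirl_{k_2}\in\Hom_{\SU2}(\L(V),\L(V))\). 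Since the decomposition~\eqref{eq:L-su2-decomp} is multiplicity-free, Schur's lemma shows that \(\Hom_{\SU2}(\L(V),\L(V))\cong\bigoplus_k\CC\) is spanned by the projectors \(\Pi_k\). Therefore \(\Twirl_{k_2}=\sum_{k_1}c_{k_1k_2}\Pi_{k_1}\) for some scalars \(c_{k_1k_2}\), which is exactly the statement \(\Twirl_{k_2}|_{V_{k_1}}=c_{k_1k_2}\,\mathrm{id}_{V_{k_1}}\). Concretely, equivariance implies \(\Twirl_{k_2}\) preserves each isotypic component \(V_{k_1}\), and on an irreducible summand an endomorphism commuting with the group action is a scalar.

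\emph{Formula for \(U\).} I expand the definition~\eqref{eq:U-def-su2} using \(\Pi_{k_1}^\dagger=\Pi_{k_1}\):
\[
\ipp{\Pi_{k_1},\Twirl_{k_2}}=\Tr_{\L(V)}\bigl(\Pi_{k_1}\Twirl_{k_2}\bigr)=\Tr_{\L(V)}\bigl(c_{k_1k_2}\Pi_{k_1}\bigr)=c_{k_1k_2}\,(2k_1+1).
\]
The middle step uses that \(\Twirl_{k_2}\) maps \(V_{k_1}\) to itself by the scalar \(c_{k_1k_2}\), so \(\Pi_{k_1}\Twirl_{k_2}=c_{k_1k_2}\Pi_{k_1}\). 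Multiplying by the normalizations \((2k_1+1)^{-1/2}(2k_2+1)^{-1/2}\) gives
\[
[U]_{k_1k_2}=\sqrt{\tfrac{2k_1+1}{2k_2+1}}\;c_{k_1k_2},
\]
which is~\eqref{eq:U-vs-c}. The inner product is conjugate-linear in its first slot, but \(\Pi_{k_1}\) is self-adjoint and real-scaled, so no conjugation issue arises.

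The only point that needs care is the equivariance of \(\Twirl_{k_2}\), and it reduces to basis-independence of the twirl. Both steps are otherwise routine.
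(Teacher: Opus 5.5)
Your proposal is correct and follows the same route as the paper's proof: equivariance plus Schur's lemma give the scalar action, and then $\Tr(\Pi_{k_1}^\dagger\Twirl_{k_2})=c_{k_1k_2}(2k_1+1)$ divided by $\sqrt{(2k_1+1)(2k_2+1)}$ yields the formula for $[U]_{k_1k_2}$. Your explicit check of the equivariance of $\Twirl_{k_2}$, which uses the basis-independence of the twirl, fills in a step that the paper only asserts.
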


\begin{proof}
Equivariance of \(\Twirl_{k_2}\) and Schur's lemma give the scalar action.
Then \eqref{eq:U-def-su2} gives
\[
    [U]_{k_1k_2}
    =\frac{\Tr(\Pi_{k_1}^\dagger\Twirl_{k_2})}{\sqrt{(2k_1+1)(2k_2+1)}}
    =\frac{c_{k_1k_2}(2k_1+1)}{\sqrt{(2k_1+1)(2k_2+1)}},
\]
which is~\eqref{eq:U-vs-c}.
\end{proof}

\begin{proposition}[Closed form via \texorpdfstring{$6j$}{6j}-symbols]
\label{prop:su2-6j}
The scalars are
\begin{equation}\label{eq:c-6j}
    c_{k_1k_2}
    =(2k_2+1)(-1)^{2j+k_1+k_2}
    \begin{Bmatrix} j&j&k_1\\ j&j&k_2\end{Bmatrix},
\end{equation}
so the transform is
\begin{equation}\label{eq:U-6j}
    [U]_{k_1k_2}
    =(-1)^{2j+k_1+k_2}\sqrt{(2k_1+1)(2k_2+1)}
    \begin{Bmatrix} j&j&k_1\\ j&j&k_2\end{Bmatrix}.
\end{equation}
\end{proposition}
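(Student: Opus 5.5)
The plan is to compute the scalar $c_{k_1k_2}$ from Proposition~\ref{prop:twirl-scalar} as a trace, and then evaluate that trace by standard angular-momentum recoupling. By Schur's lemma, $\Twirl_{k_2}$ acts on $V_{k_1}$ as $c_{k_1k_2}$, so
\[
c_{k_1k_2}=\frac{1}{2k_1+1}\sum_{q_1}\ip{T^{(k_1)}_{q_1},\Twirl_{k_2}(T^{(k_1)}_{q_1})}
=\frac{1}{2k_1+1}\sum_{q_1,q_2}\Tr\!\bigl((T^{(k_1)}_{q_1})^\dagger (T^{(k_2)}_{q_2})^\dagger T^{(k_1)}_{q_1}T^{(k_2)}_{q_2}\bigr).
\]
This is the same trace of four tensor operators that appears in \eqref{eq:U-mult-trace} with $m_\xi=1$. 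For the normalized spherical tensors I will use matrix elements given by Wigner--Eckart with unit-normalized reduced elements,
\[
\langle j m|T^{(k)}_q|j m'\rangle=(-1)^{j-m}\sqrt{2k+1}\begin{pmatrix} j&k&j\\ -m&q&m'\end{pmatrix}.
\]
The prefactor is chosen so that $\Tr(T^{(k)\dagger}_qT^{(k)}_{q'})=\delta_{qq'}$ follows from the orthogonality of $3j$-symbols. I will also use $(T^{(k)}_q)^\dagger=(-1)^qT^{(k)}_{-q}$.

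Substituting these matrix elements turns the trace into a sum, over four magnetic indices $m$ and over $q_1,q_2$, of a product of four $3j$-symbols. Each symbol carries the entries $j,j,k_1$ or $j,j,k_2$. The standard contraction of four $3j$-symbols arranged in a tetrahedral pattern yields exactly one $6j$-symbol $\begin{Bmatrix} j&j&k_1\\ j&j&k_2\end{Bmatrix}$, with prefactor $(2k_1+1)(2k_2+1)$ coming from the two normalizations and a sign.

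Dividing by $2k_1+1$ then gives \eqref{eq:c-6j}. Substituting into \eqref{eq:U-vs-c} gives \eqref{eq:U-6j}.

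As a cheaper alternative to the explicit contraction, I may argue via the Racah recoupling (crossing) identity instead. The map $X\mapsto\sum_q T^{(k_2)\dagger}_q X T^{(k_2)}_q$ is a contraction of the tensor product $V_{k_2}\otimes V_{k_1}\otimes V_{k_2}$ back to $V_{k_1}$. Schur's lemma makes its eigenvalue a recoupling coefficient between the coupling orders $(j\otimes j^*)\otimes(j\otimes j^*)$, and such coefficients are precisely $6j$-symbols.

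I will verify the result with two checks. For $k_2=0$ we have $T^{(0)}=I/\sqrt{N}$ with $N=2j+1$, so $c_{k_10}=1/N$. This matches $\{j\,j\,k_1;j\,j\,0\}=(-1)^{2j+k_1}/(2j+1)$, and it also matches Lemma~\ref{lem:trivial-row-audit}. Separately, unitarity of $U$ is equivalent to the $6j$ orthogonality relation, which confirms the normalization.

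The main obstacle is bookkeeping the phase $(-1)^{2j+k_1+k_2}$. The sign depends on the conventions for $T^\dagger$, on the cyclic order inside the trace, and on the $3j$ symmetry phases $(-1)^{j_1+j_2+j_3}$ introduced when rearranging columns. I will fix the phase first using the $k_2=0$ and $k_1=0$ cases (where $c_{0k_2}=(2k_2+1)/N$ by the completeness identity), and then match it to the general phase produced by the contraction formula.
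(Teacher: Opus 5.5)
Your proposal is correct. The ``cheaper alternative'' you sketch at the end is exactly the paper's proof. The paper fixes one unit vector $T^{(k_1)}_{q_1}$ and quotes the operator-level Racah identity, which writes $\sum_{q_2}(T^{(k_2)}_{q_2})^\dagger T^{(k_1)}_{q_1}T^{(k_2)}_{q_2}$ as $(2k_2+1)(-1)^{2j+k_1+k_2}$ times the $6j$-symbol times $T^{(k_1)}_{q_1}$. It then reads off $c_{k_1k_2}$ with a single inner product.

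Your main route differs in how it is carried out. Averaging over $q_1$ turns $c_{k_1k_2}=\Tr(\Pi_{k_1}\Twirl_{k_2})/(2k_1+1)$ into a closed, fully contracted network with six summed magnetic indices; this is the $m_\xi=1$ case of \eqref{eq:U-mult-trace}. You then evaluate it from Wigner--Eckart and the tetrahedral four-$3j$ sum formula. This buys self-containedness: you derive, rather than cite, both the appearance of the $6j$-symbol and the prefactor $(2k_1+1)(2k_2+1)$. The paper's route is shorter because the quoted operator identity already packages the contraction and, crucially, the sign.

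On the sign, your boundary families $k_2=0$ and $k_1=0$ can do more than serve as checks, but only given a structural fact you should state. After matching to the canonical contraction, the residual phase has the form $(-1)^{\epsilon_0+\epsilon_1\cdot 2j+\epsilon_2k_1+\epsilon_3k_2}$. This holds because every sign comes from $(-1)^{j-m}$, $(-1)^{q}$ and $3j$ column symmetries, and the result is real. Under that form the two families force $\epsilon_0=0$ and $\epsilon_1=\epsilon_2=\epsilon_3=1$, i.e.\ $(-1)^{2j+k_1+k_2}$.

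Without that structural fact, boundary data alone cannot exclude a phase that is invisible when $k_1=0$ or $k_2=0$. So an interior check is also worth recording. For $j=\tfrac12$ and $k_1=k_2=1$, $\Twirl_1(\sigma_z)=-\tfrac12\sigma_z$ gives $c_{11}=-\tfrac12$, in agreement with the formula.
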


\begin{proof}
Fix a unit vector \(T^{(k_1)}_{q_1}\in V_{k_1}\).
By Proposition~\ref{prop:twirl-scalar},
\[
    c_{k_1k_2}
    =\big\langle T^{(k_1)}_{q_1},\Twirl_{k_2}(T^{(k_1)}_{q_1})\big\rangle
    =\sum_{q_2=-k_2}^{k_2}
     \Tr\!\Big((T^{(k_1)}_{q_1})^\dagger(T^{(k_2)}_{q_2})^\dagger
     T^{(k_1)}_{q_1}T^{(k_2)}_{q_2}\Big).
\]
By the Racah recoupling identity for irreducible tensor
operators~\cite{quantumtheoryangularmomentum},
\[
    \sum_{q_2=-k_2}^{k_2}(T^{(k_2)}_{q_2})^\dagger
    T^{(k_1)}_{q_1}T^{(k_2)}_{q_2}
    =(2k_2+1)(-1)^{2j+k_1+k_2}
     \begin{Bmatrix} j&j&k_1\\ j&j&k_2\end{Bmatrix}
     T^{(k_1)}_{q_1},
\]
and taking the inner product with \(T^{(k_1)}_{q_1}\) gives~\eqref{eq:c-6j}.
Substituting into~\eqref{eq:U-vs-c} gives~\eqref{eq:U-6j}.
\end{proof}

\begin{corollary}[Unnormalized \texorpdfstring{$\SU2$}{SU(2)} MacWilliams matrix]
\label{cor:M-6j}
In the unnormalized basis \(B_{k_1}=\sum_{k_2=0}^{2j}M_{k_1k_2}A_{k_2}\),
\begin{equation}\label{eq:M-6j}
    M_{k_1k_2}
    =(-1)^{2j+k_1+k_2}(2k_1+1)
     \begin{Bmatrix} j&j&k_1\\ j&j&k_2\end{Bmatrix}.
\end{equation}
\end{corollary}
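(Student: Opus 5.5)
The corollary follows by combining the scalar description of the transform in Section~\ref{subsec:mf-collapse} with the closed form of Proposition~\ref{prop:su2-6j}. In the multiplicity-free case the coefficient matrix is $M=D^{1/2}UD^{-1/2}$, with $D=\mathrm{diag}(2k+1)$. Entrywise, in the output-as-row convention $B_{k_1}=\sum_{k_2}M_{k_1k_2}A_{k_2}$, this reads
\[
M_{k_1k_2}=\sqrt{\tfrac{2k_1+1}{2k_2+1}}\,U_{k_1k_2}.
\]
Here I first check that the index conventions agree. Equation~\eqref{eq:M-mult-def} writes $[M]_{ba}=\sqrt{d_b/d_a}\,[U]_{ba}$ with $b$ the twirl index and $a$ the projector index. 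Equation~\eqref{eq:U-def-su2} instead puts the projector label first. By Remark~\ref{rmk:su2-index}, $U$ is symmetric, because the $6j$-symbol with two columns $(j,j)$ is invariant under exchanging $k_1$ and $k_2$. Hence $U_{k_1k_2}$ may be read with $k_1$ as the twirl label, and the formula above is valid.

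Substituting \eqref{eq:U-6j} gives
\[
M_{k_1k_2}=\sqrt{\tfrac{2k_1+1}{2k_2+1}}\,(-1)^{2j+k_1+k_2}\sqrt{(2k_1+1)(2k_2+1)}\begin{Bmatrix} j&j&k_1\\ j&j&k_2\end{Bmatrix},
\]
and the square roots combine to the single factor $(2k_1+1)$, which is \eqref{eq:M-6j}.

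As an independent check that bypasses $U$, I would also derive the expansion directly. Since $\Twirl_{k_1}$ is equivariant, Proposition~\ref{prop:twirl-scalar} gives $\Twirl_{k_1}=\sum_{k_2}c_{k_2k_1}\Pi_{k_2}$. Then
\[
B_{k_1}(X)=\ip{X,\Twirl_{k_1}X}=\sum_{k_2}c_{k_2k_1}A_{k_2}(X),
\]
so $M_{k_1k_2}=c_{k_2k_1}$. By \eqref{eq:c-6j} with the labels swapped, this equals $(2k_1+1)(-1)^{2j+k_1+k_2}\{\cdots\}$, using the column symmetry of the $6j$-symbol once more.

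The only real subtlety is this transpose and convention bookkeeping, namely making sure the factor that survives is $(2k_1+1)$ and not $(2k_2+1)$. The direct check settles it. As a final sanity check, I would verify the trivial row. For $k_1=0$,
\[
\begin{Bmatrix} j&j&0\\ j&j&k_2\end{Bmatrix}=\frac{(-1)^{2j+k_2}}{2j+1},
\]
so $M_{0k_2}=1/(2j+1)=1/N$ for every $k_2$, in agreement with Lemma~\ref{lem:trivial-row-audit}.
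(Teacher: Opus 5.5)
Your proof is correct and follows the paper's route: the paper likewise substitutes \eqref{eq:U-6j} into $M_{k_1k_2}=\sqrt{(2k_1+1)/(2k_2+1)}\,[U]_{k_1k_2}$, with the index convention justified exactly as in Remark~\ref{rmk:su2-index} by the symmetry of $U$. Your extra check $M_{k_1k_2}=c_{k_2k_1}$, obtained from $\Twirl_{k_1}=\sum_{k_2}c_{k_2k_1}\Pi_{k_2}$, and your trivial-row evaluation are both valid; note only that the $6j$ symmetry needed there is the exchange of upper and lower entries in two columns, not a permutation of columns.
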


\begin{proof}
Substitute~\eqref{eq:U-6j} into \(M_{k_1k_2}=\sqrt{(2k_1+1)/(2k_2+1)}\,[U]_{k_1k_2}\).
\end{proof}

The \(6j\)-symbol has a natural representation-theoretic reading.
Under \(V_j^*\cong V_j\) one has \(\L(V_j)\cong V_j\otimes V_j\), hence
\(\Hom_{\SU2}(\L(V_j),\L(V_j))\cong(V_j^{\otimes4})^{\SU2}\); the projector and
twirl bases correspond to two coupling schemes for four spins \(j\), and the
transform is the Racah recoupling between them.
The same matrix appears in~\cite{fan2026randomizedbenchmarkingsyntheticquantum}
with a randomized-benchmarking interpretation.
As a check, \eqref{eq:M-6j} with \(k_1=0\) gives the constant trivial row
\(M_{0k}=\tfrac{1}{2j+1}\), consistent with the trivial-row audit
(Lemma~\ref{lem:trivial-row-audit}) since \(N=\dim V_j=2j+1\).

\subsection{The wreath-product transform and the quantum Krawtchouk matrix}
\label{subsec:wreath-krawtchouk}

The second closed-form case is the non-entangling wreath product of
Section~\ref{sec:sl-recovery}, whose transform is the quantum Krawtchouk
matrix.
Let
\[
    V=(\CC^q)^{\otimes n},
    \qquad
    G=U(q)^n\rtimes S_n=U(q)\wr S_n .
\]
The central \(U(1)\)'s act trivially by conjugation, so one may replace
\(U(q)\) by \(\SU q\) without changing the error sectors.
On one site
\(\L(\CC^q)=\CC I\oplus\L_0(\CC^q)\cong\mathbf1\oplus\Ad\), and as recalled in
Section~\ref{sec:sl-recovery} the wreath product merges supports of equal size,
so the isotypic sectors are the Hamming-weight spaces
\[
    \E_w=\bigoplus_{\substack{S\subseteq[n]\\|S|=w}}\E_S ,
    \qquad 0\le w\le n,
    \qquad
    \L(V)=\bigoplus_{w=0}^n\E_w .
\]
This decomposition is multiplicity--free as a representation of
\(U(q)\wr S_n\): in the standard parametrization of irreducibles of the wreath
product, \(\E_w\) assigns the one-row partition \((n-w)\) to the trivial
single-site representation and \((w)\) to the adjoint, and these are pairwise
inequivalent.
Thus the specialization of this appendix applies.

For a single site write the two sectors as \(\E_0=\CC I\) and
\(\E_1=\L_0(\CC^q)\), and let \(M_{\mathrm{site}}\) be the one-site unnormalized
transform \(B_i=\sum_{j}[M_{\mathrm{site}}]_{ij}A_j\).

\begin{lemma}[One-site transform]
\label{lem:one-site-krawtchouk}
\begin{equation}\label{eq:one-site-quantum-macwilliams}
    M_{\mathrm{site}}
    =\frac{1}{q}
     \begin{pmatrix} 1 & 1\\ q^2-1 & -1\end{pmatrix}.
\end{equation}
\end{lemma}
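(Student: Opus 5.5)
The plan is to compute \(M_{\mathrm{site}}\) directly from its definition in \eqref{eq:M-mult-def}. There, \([M]_{ba}=\langle\Pi_a,\Twirl_b\rangle/d_a\) is the coefficient of \(\Pi_a\) when the twirl \(\Twirl_b\) is expanded in the (non-normalized) projector basis. Since the one-site decomposition \(\L(\CC^q)=\CC I\oplus\L_0(\CC^q)\) is multiplicity-free, the intertwiner algebra is spanned by \(\proj_0,\proj_1\). It therefore suffices to write each twirl \(\twirl_0,\twirl_1\) as an explicit linear combination of \(\proj_0\) and \(\proj_1\). Reading off the coefficients then gives the rows of \(M_{\mathrm{site}}\), and pairing with \(X\) gives \(B_i(X)=\sum_j[M_{\mathrm{site}}]_{ij}A_j(X)\) exactly as in Proposition~\ref{prop:matrix-macwilliams}.

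\emph{Step 1 (trivial row).} Use the orthonormal basis \(\{I/\sqrt q\}\) of \(\E_0\). This gives \(\twirl_0(X)=X/q\), so \(\twirl_0=\tfrac1q(\proj_0+\proj_1)\) and the first row is \((\tfrac1q,\tfrac1q)\). This matches the trivial-row audit of Lemma~\ref{lem:trivial-row-audit} with \(N=q\).

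\emph{Step 2 (adjoint row).} Rather than choosing an explicit basis of \(\L_0(\CC^q)\), I will use the completeness identity. Summing over the union of the two orthonormal bases gives \(\twirl_0+\twirl_1=\bigl(X\mapsto\Tr(X)I\bigr)\). Since \(\proj_0(X)=\tfrac{\Tr(X)}{q}I\), this map equals \(q\,\proj_0\). Hence
\[
\twirl_1=q\,\proj_0-\tfrac1q(\proj_0+\proj_1)=\tfrac{q^2-1}{q}\,\proj_0-\tfrac1q\,\proj_1 ,
\]
which gives the second row \(\tfrac1q(q^2-1,\,-1)\).

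\emph{Step 3 (identification).} Confirm that these expansion coefficients coincide with \(\langle\Pi_a,\Twirl_b\rangle/d_a\), using \(\langle\Pi_a,\Pi_{a'}\rangle=d_a\delta_{aa'}\). As a consistency check, verify the weighted orthogonality \(MDM^{\mathsf T}=D\) with \(D=\mathrm{diag}(1,q^2-1)\).

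The only step needing care is Step 2: one must express the full twirl \(X\mapsto \Tr(X)I\) in terms of \(\proj_0\) with the correct factor \(q\), rather than as the identity superoperator. Everything else is immediate.
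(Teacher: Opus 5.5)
Your argument is correct, but it takes a different route from the paper. The paper gets the first row from the trivial-row audit, as you do. It then asserts that weighted orthogonality $M_{\mathrm{site}}DM_{\mathrm{site}}^{\mathsf T}=D$, with $D=\mathrm{diag}(1,q^2-1)$, forces the second row.

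That relation alone fixes the second row only up to sign. Write the second row as $(x,y)$. The $(0,1)$ entry gives $\tfrac1q x+\tfrac{q^2-1}{q}y=0$, so $x=-(q^2-1)y$. The $(1,1)$ entry then gives $y^2=1/q^2$. Hence both $\pm\tfrac1q(q^2-1,-1)$ satisfy it; indeed $MDM^{\mathsf T}=D$ is unchanged if any row of $M$ is multiplied by $-1$.

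Your Step~2 supplies exactly the missing input. The completeness identity $\twirl_0+\twirl_1=q\,\proj_0$ says the column sums of $M_{\mathrm{site}}$ are $(q,0)$. This is the superoperator form of the normalization $\sum_\xi B_\xi(X,X)=|\Tr X|^2$, and it selects the correct sign. With the wrong sign one would get $B_1(I,I)=-(q^2-1)<0$, whereas directly $B_1(I,I)=\dim\E_1=q^2-1$.

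So the paper's route is shorter and shows how tightly the unitarity of $U$ constrains the transform, but it silently needs a normalization or positivity input to fix the sign. Your direct computation is self-contained and determines both rows without appealing to unitarity. Weighted orthogonality then serves only as a consistency check.
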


\begin{proof}
The trivial sector \(\E_0=\CC I\) gives the constant first row
\(M_{0j}=1/q\) by the trivial-row audit
(Lemma~\ref{lem:trivial-row-audit}) with \(N=q\).
For the second row, \(\dim\E_0=1\) and \(\dim\E_1=q^2-1\), and the
weighted-orthogonality relation \(M_{\mathrm{site}}\,D\,M_{\mathrm{site}}^{\mathsf T}=D\)
of Proposition~\ref{prop:mf-weighted-orthogonality}, with
\(D=\mathrm{diag}(1,q^2-1)\), forces the second row to be \((q^2-1,-1)/q\).
\end{proof}

For \(n\) sites, before quotienting by \(S_n\), the support-refined transform is
the tensor power \(M_{\mathrm{site}}^{\otimes n}\); passing to the
\(S_n\)-invariant weight basis gives the Krawtchouk matrix.

\begin{proposition}[Wreath-product MacWilliams transform]
\label{prop:wreath-krawtchouk}
For \(V=(\CC^q)^{\otimes n}\) and \(G=U(q)\wr S_n\), the unnormalized intrinsic
MacWilliams matrix for the weight decomposition \(\L(V)=\bigoplus_{w=0}^n\E_w\)
has entries
\[
    M_{wi}=\frac{1}{q^n}\,\mathsf K^{(n,q)}_w(i),
    \qquad 0\le w,i\le n,
\]
where
\[
    \mathsf K^{(n,q)}_w(i)
    :=\sum_{r=0}^{w}(-1)^r(q^2-1)^{w-r}\binom{i}{r}\binom{n-i}{w-r}
\]
is the \(q\)-ary quantum Krawtchouk polynomial.
\end{proposition}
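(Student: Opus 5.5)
The proof factorizes the twirl over sites and then restricts the resulting product operator to the $S_n$-symmetrized weight sectors.

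\textbf{Support-refined transform.} I first work with the local group $U(q)^n$ and the support sectors $\E_S$. An orthonormal basis of $\E_S$ is given by tensor products of one-site orthonormal bases: the normalized identity $q^{-1/2}I$ on sites outside $S$, and an orthonormal basis of $\L_0(\CC^q)$ on sites in $S$. Because of this, the twirl factorizes:
\[
\twirl_S=\bigotimes_{r=1}^n \twirl^{(r)}_{\chi_S(r)},
\]
where $\twirl^{(r)}_0,\twirl^{(r)}_1$ are the one-site twirls acting on the $r$-th tensor factor of $\L(V)=\L(\CC^q)^{\otimes n}$. The projector $\proj_S$ factorizes in the same way. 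By Lemma~\ref{lem:one-site-krawtchouk}, on one site
\[
\twirl_i=\sum_j[M_{\mathrm{site}}]_{ij}\proj_j
\]
as superoperators. This is the one-site matrix MacWilliams identity, made exact because $\Hom_{U(q)}$ is spanned by $\proj_0,\proj_1$. Taking tensor products gives
\[
\twirl_S=\sum_{T\subseteq[n]}\prod_{r}[M_{\mathrm{site}}]_{\chi_S(r)\chi_T(r)}\,\proj_T .
\]

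\textbf{Summing over supports of weight $w$.} Next I set $\twirl_w:=\sum_{|S|=w}\twirl_S$. This equals the twirl of $\E_w$, since the union of the bases of the $\E_S$ with $|S|=w$ is an orthonormal basis of $\E_w$. Fix $T$ with $|T|=i$ and compute the coefficient of $\proj_T$. Each site contributes a factor determined by the pair $(\chi_S(r),\chi_T(r))$:
\begin{itemize}
\item $1/q$ for the pairs $(0,0)$ and $(0,1)$;
\item $(q^2-1)/q$ for the pair $(1,0)$;
\item $-1/q$ for the pair $(1,1)$.
\end{itemize}
Let $r=|S\cap T|$, so that $|S\setminus T|=w-r$. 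The product is then
\[
q^{-n}(-1)^r(q^2-1)^{w-r},
\]
and the number of such $S$ is $\binom{i}{r}\binom{n-i}{w-r}$. Summing over $r$, the coefficient is $q^{-n}\mathsf K^{(n,q)}_w(i)$. It depends only on $i=|T|$, so
\[
\twirl_w=\sum_i q^{-n}\mathsf K^{(n,q)}_w(i)\,\proj_i,
\qquad \proj_i=\sum_{|T|=i}\proj_T .
\]

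\textbf{Identifying $M$.} Finally I match this with~\eqref{eq:mf-M-def}. Since $\{\proj_i\}$ are mutually orthogonal idempotents, the coefficient of $\proj_i$ in the expansion of $\twirl_w$ equals $\ipp{\proj_i,\twirl_w}/d_i$, which is exactly $M_{wi}$ by~\eqref{eq:M-mult-def}. This also reproduces $B_w=\sum_i M_{wi}A_i$.

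The main point requiring care is the first step. One must check that the one-site expansion of $\twirl_i$ in the $\proj_j$ is an identity of superoperators and not merely of enumerators. This holds because $\twirl_i$ is $U(q)$-equivariant, and the one-site conjugation representation $\mathbf 1\oplus\Ad$ is multiplicity-free, so the intertwiner space is spanned by $\proj_0,\proj_1$. One must also check that the tensor factorization of the twirl is legitimate, which holds because product bases are orthonormal in the Hilbert--Schmidt inner product. The rest is the counting above.
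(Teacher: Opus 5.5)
Your proposal is correct and follows the paper's proof: the support-refined transform is the $n$-fold tensor power of $M_{\mathrm{site}}$, and the weight-$w$ entry comes from counting the $\binom{i}{r}\binom{n-i}{w-r}$ supports $S$ with $|S\cap T|=r$, each contributing $q^{-n}(-1)^r(q^2-1)^{w-r}$. You also spell out steps the paper leaves implicit: $\twirl_S$ and $\proj_S$ factorize over product bases, the one-site expansion holds as a superoperator identity, and the resulting coefficients are exactly $M_{wi}=\ipp{\proj_i,\twirl_w}/d_i$.
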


\begin{proof}
The support-refined transform is the \(n\)-fold tensor power of
\eqref{eq:one-site-quantum-macwilliams}.
Fix an input error of weight \(i\).
To produce an output of weight \(w\), choose \(r\) of the \(i\) nonidentity
input positions to stay in the traceless sector and \(w-r\) of the \(n-i\)
identity positions to enter it; \eqref{eq:one-site-quantum-macwilliams}
contributes \(-1\) per nonidentity-to-nonidentity position and \(q^2-1\) per
identity-to-nonidentity position, with global factor \(q^{-n}\).
Summing over \(r\) gives the stated entry.
\end{proof}

Combining Proposition~\ref{prop:wreath-krawtchouk} with the normalized relation
\(\widetilde B=KM\widetilde A\) and the identification
\(\widetilde A_i=A_i^{\mathrm{SL}}\), \(\widetilde B_w=B_w^{\mathrm{SL}}\) of
Section~\ref{sec:sl-recovery} recovers the ordinary Shor--Laflamme quantum
MacWilliams identity
\[
    B_w^{\mathrm{SL}}
    =\frac{K}{q^n}\sum_{i=0}^{n}\mathsf K^{(n,q)}_w(i)\,A_i^{\mathrm{SL}} .
\]
Thus the standard quantum weight enumerators and their Krawtchouk transform are
the intrinsic enumerator theory of the non-entangling wreath-product symmetry,
and the \(\SU2\) transform of Section~\ref{subsec:su2-macwilliams} is the
corresponding statement for a single spin-\(j\) irrep.

\makeatletter
\renewcommand{\theequation}{B\arabic{equation}}
\renewcommand{\thetable}{B\arabic{table}}
\renewcommand{\thefigure}{B\arabic{figure}}
\renewcommand{\thelemma}{B\arabic{lemma}}
\renewcommand{\theremark}{B\arabic{remark}}
\renewcommand{\theproposition}{B\arabic{proposition}}
\renewcommand{\thecorollary}{B\arabic{corollary}}
\renewcommand{\thetheorem}{B\arabic{theorem}}
\setcounter{table}{0}
\setcounter{figure}{0}
\setcounter{lemma}{0}
\setcounter{remark}{0}
\setcounter{corollary}{0}
\setcounter{proposition}{0}
\setcounter{theorem}{0}
\setcounter{equation}{0}

\section{Multiplicity-Free Equivariant Examples}
\label{sec:mf-examples}

The examples in this appendix are the commutative shadow of the matrix-valued
theory of Section~\ref{sec:mult}, worked out for symmetric-power
representations. Section~\ref{sec:mult-su3-example} treated a genuinely
matrix-valued case, where the intertwiner algebra is noncommutative and the
feasibility problem is semidefinite. Here the conjugation representation is
multiplicity--free, so every multiplicity space is one-dimensional, the
enumerators are scalars, and the semidefinite feasibility problem of
Theorem~\ref{thm:intrinsic-sdp-feasibility} collapses to the linear program of
Theorem~\ref{thm:intrinsic-lp-feasibility}. This is the regime in which the
whole apparatus becomes hand-computable, and in which --- in contrast with the
matrix-valued \((2,2)\) example, whose feasible enumerator at \(K=5\) is
\emph{not} unique --- the feasible enumerator is typically forced uniquely by
the linear constraints.

The symmetric powers \(V=\Sym^n(\CC^q)\) are the natural home of this regime:
their conjugation representations are multiplicity--free, and, as we show below,
they correspond exactly to permutation-invariant qudit codes with intrinsic
depth equal to ordinary qudit distance. Through that correspondence the
intrinsic linear program yields extremality certificates for concrete
permutation-invariant codes, and we record three: the four-qubit
\(\lcode 5,2,2\rcode_{\SU2}\), the seven-qubit \(\lcode 8,2,3\rcode_{\SU2}\), and
the three-qutrit \(\lcode 10,2,2\rcode_{\SU3}\) codes.

\subsection{Permutation--Invariant Qudit Codes and Symmetric-Power Intrinsic Codes}

The symmetric-power family is the bridge between the abstract intrinsic depth of
the linear program and physical qudit distance. Its multiplicity-freeness places
it in the linear regime of Appendix~\ref{sec:mf-macwilliams}, while the
distance-preserving correspondence established below makes the resulting bounds
statements about honest permutation-invariant codes.

Let $q\ge2$ and $n\ge1$, and set $ V = \Sym^n(\CC^q)$.
Then $V$ is the irreducible $\SU q$ representation of highest weight
\[
(n,0,\dots,0), \numberthis
\]
and the conjugation representation on $\L(V)$ is multiplicity--free.

\begin{proposition}
\label{prop:sym-power-mult-free}
Let $ V=\Sym^n(\CC^q) $ be the irreducible $\SU q$ representation of highest weight $ (n,0,\dots,0) $, and let $\theta$ be the highest root of $\mathfrak{sl}_q$.
Then the conjugation representation on $ \L(V)\cong V\otimes V^* $ is multiplicity--free. More precisely,
\[
\L(V)\cong \bigoplus_{r=0}^{n} \E_r,
\qquad
\E_r \cong V_{r\theta}, \numberthis
\]
with each irreducible summand occurring with multiplicity one. Here, for
$q\ge3$, $r\theta=(r,0,\dots,0,r)$ in Dynkin-label notation, while for $q=2$
the module $V_{r\theta}$ has Dynkin label $2r$, i.e.\ it is the spin-$r$
representation of $\SU 2$.
\end{proposition}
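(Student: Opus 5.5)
We prove the decomposition $\Sym^n(\CC^q)\otimes\Sym^n(\CC^q)^*\cong\bigoplus_{r=0}^n V_{r\theta}$ by a character and dimension argument. Multiplicity-freeness then follows at once, since the highest weights $r\theta$ are pairwise distinct.

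First we identify the invariant content. For $G=U(q)$ acting on $V\otimes V^*$, the space of polynomial-times-antipolynomial functions of bidegree $(n,n)$ is $\Sym^n(\CC^q)\otimes\overline{\Sym^n(\CC^q)}$. Multiplication by the invariant $|z|^2=\sum_i z_i\bar z_i$ gives an injective equivariant map from bidegree $(n-1,n-1)$ into bidegree $(n,n)$. The complement of its image is the space $\mathcal H_{n,n}$ of complex harmonic polynomials of bidegree $(n,n)$, namely those annihilated by $\sum_i\partial_{z_i}\partial_{\bar z_i}$. This is the classical decomposition of spherical harmonics on $\CC^q=\mathbb R^{2q}$. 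Iterating it yields
\[
\Sym^n\otimes\overline{\Sym^n}\cong\bigoplus_{r=0}^n \mathcal H_{r,r}.
\]

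Next we show that each $\mathcal H_{r,r}$ is irreducible with highest weight $r\theta$. The vector $z_1^r\bar z_q^{\,r}$ is harmonic, because $\partial_{z_i}\partial_{\bar z_i}$ kills it for every $i$. It has weight $r(e_1-e_q)=r\theta$, and it is annihilated by all raising operators $E_{ij}$ ($i<j$), which act by $z_i\partial_{z_j}-\bar z_j\partial_{\bar z_i}$. So $\mathcal H_{r,r}$ contains the irreducible $V_{r\theta}$. To get equality, we compare dimensions. On one side,
\[
\dim\mathcal H_{r,r}=\binom{r+q-1}{q-1}^2-\binom{r+q-2}{q-1}^2.
\]
On the other, we compute $\dim V_{(r,0,\dots,0,r)}$ from the Weyl dimension formula and check that the two agree. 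As a cleaner alternative, we can use the Pieri rule: $\Sym^n\otimes\Sym^n(\CC^q)^*$ is computed as $(n)\otimes(n)^*$, and the Littlewood–Richardson rule for a row times a dual row gives exactly the weights $(r,0,\dots,0,r)$ for $0\le r\le n$, each once.

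Finally, the case $q=2$. There $\theta$ has Dynkin label $2$, so $V_{r\theta}$ is spin $r$. This agrees with the Clebsch–Gordan decomposition $V_{n/2}\otimes V_{n/2}\cong\bigoplus_{r=0}^n V_r$, which we may cite directly.

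The main obstacle is the irreducibility or dimension match for $\mathcal H_{r,r}$. We will handle it by the Pieri/Littlewood–Richardson computation: tensor $(n)$ with $(n)^*=(n,\dots,n,0)$ shifted by $\det^{-n}$, and read off the multiplicity-one terms. This avoids any explicit Weyl-formula algebra.
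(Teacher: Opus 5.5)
Your proof is correct, but its pieces relate to the paper's proof as follows. The paper's entire proof is the Littlewood--Richardson computation $(n,0,\dots,0)\otimes(0,\dots,0,n)\cong\bigoplus_{r=0}^{n}(r,0,\dots,0,r)$. The Pieri step you ultimately rely on is exactly that computation: a horizontal $n$-strip added to $(n^{q-1},0)$ can only enter rows $1$ and $q$, giving $(n+a,n,\dots,n,n-a)\otimes\mathrm{det}^{-n}$ for $0\le a\le n$, each once. That step proves the full statement on its own, so if you close the argument this way the harmonic-polynomial decomposition is logically redundant.

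If you want the harmonic route to be a genuinely independent proof, finish it with the Weyl dimension formula instead; it is short. For $\lambda=(r,0,\dots,0,-r)$ only pairs involving index $1$ or index $q$ give nontrivial factors, and
\[
\dim V_{r\theta}=\binom{r+q-2}{q-2}^{2}\,\frac{2r+q-1}{q-1}.
\]
This equals your $\binom{r+q-1}{q-1}^2-\binom{r+q-2}{q-1}^2$: write the two binomials as $\binom{r+q-2}{q-2}$ times $\frac{r+q-1}{q-1}$ and $\frac{r}{q-1}$ respectively, then take the difference of squares. The identity holds uniformly for $q\ge 2$, so the $q=2$ case needs no separate Clebsch--Gordan citation.

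As for what each route buys: the paper's is a one-line appeal to a standard rule. Yours produces an explicit model of the $r$-th sector as $|z|^{2(n-r)}\mathcal H_{r,r}$, with highest-weight vector $|z|^{2(n-r)}z_1^{r}\bar z_q^{\,r}$. By multiplicity-freeness, under any equivariant identification this is a nonzero multiple of the operator $J_x^{\,r}|_V$, with $x=\ket{1}\bra{q}$. That operator is exactly what the paper uses in the reverse inclusion of Theorem~\ref{thm:sympower-filtration}, so your picture makes that later step transparent.

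One minor point: your opening sentence promises a character argument, but none is actually used.
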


Consequently, the intrinsic enumerator formalism for
$\Sym^n(\CC^q)$ lies in the multiplicity--free equivariant setting of
\Cref{sec:mf-macwilliams}. In particular, the intrinsic enumerators are
scalar-valued and the corresponding feasibility problem is a linear program.
\begin{proof}
Since $V=\Sym^n(\CC^q)$ has highest weight $(n,0,\dots,0)$, its dual
$V^*$ has highest weight $(0,\dots,0,n)$. Hence
\[
\L(V)\cong V\otimes V^*
\cong
(n,0,\dots,0)\otimes(0,\dots,0,n). \numberthis
\]
A standard Littlewood--Richardson rule computation \cite{fultonharris} shows that this tensor
product is multiplicity--free and decomposes as
\[
(n,0,\dots,0)\otimes(0,\dots,0,n)
\cong
\bigoplus_{r=0}^{n}(r,0,\dots,0,r). \numberthis
\]
This proves the stated decomposition of $\L(V)$.
The final claim now follows immediately from the general theory of
\Cref{sec:mf-macwilliams}.
\end{proof}

Thus symmetric-power intrinsic codes form a large natural family lying entirely
in the LP regime.
Their intrinsic error sectors are indexed by the irreducible summands
\[
\E_r\cong V_{r\theta},
\qquad
0\le r\le n, \numberthis
\]
appearing in $\L(\Sym^n(\CC^q))$.

There is also a canonical $\SU q$--equivariant embedding
\[
\iota_n : \Sym^n(\CC^q)  \hookrightarrow (\CC^q)^{\otimes n} \numberthis
\]
obtained by symmetrizing tensors.
Its image is precisely the permutation--invariant subspace of
$(\CC^q)^{\otimes n}$. Thus any permutation--invariant $n$-qudit code $ \C $ may equivalently be regarded as a subspace
\[
\mathcal I := \iota_n^{-1}(\mathcal C) \subseteq V, \numberthis
\]
that is, as a subspace of the $\SU q$ irrep $(n,0,\dots,0)$.
Conversely, every subspace $\mathcal I\subseteq V$ determines a
permutation--invariant qudit code
\[
\mathcal C := \iota_n(\mathcal I)\subseteq \Sym^n(\CC^q). \numberthis
\]

The remaining point is to compare the two distance parameters. We do this
through an exact filtration identity, which yields both directions of the
comparison without a hidden surjectivity assumption.

Let $H=(\CC^q)^{\otimes n}$, let $\iota:V\hookrightarrow H$ be the symmetric
embedding, and let
\[
\Phi(E):=\iota^\dagger E\,\iota\in\operatorname{End}(V) \numberthis
\]
be the compression to the symmetric subspace. For $0\le t\le n$, let
$F_t\subseteq\operatorname{End}(H)$ denote the span of all physical errors supported on at
most $t$ sites.

\begin{theorem}[Symmetric-power error filtration]
\label{thm:sympower-filtration}
For every $0\le t\le n$,
\[
\Phi(F_t)=\bigoplus_{r=0}^{t}\E_r,
\qquad
\E_r\cong V_{r\theta}.
\]
\end{theorem}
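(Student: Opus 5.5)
The plan is to prove the two inclusions $\Phi(F_t)\subseteq\bigoplus_{r\le t}\E_r$ and $\bigoplus_{r\le t}\E_r\subseteq\Phi(F_t)$ separately, using $\SU q$-equivariance of $\Phi$ together with the multiplicity-free decomposition of Proposition~\ref{prop:sym-power-mult-free}. Since $\iota$ is $\SU q$-equivariant (with $\SU q$ acting diagonally on $H$), $\Phi(gEg^{-1})=g\Phi(E)g^{-1}$. Moreover $F_t$ is stable under diagonal conjugation, so $\Phi(F_t)$ is an $\SU q$-subrepresentation of $\L(V)$. Because $\L(V)$ is multiplicity-free, every subrepresentation is a direct sum of some of the $\E_r$. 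It therefore suffices to decide exactly which highest weights $r\theta$ occur in $\Phi(F_t)$.

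\textbf{Upper inclusion.} $F_t$ is spanned by products $X_1^{(s_1)}\cdots X_u^{(s_u)}$ with $u\le t$ and each $X_j$ a traceless single-site operator, together with the identity. As an $\SU q$-representation, $\Ad_{s_1}\otimes\cdots\otimes\Ad_{s_u}\cong\Ad^{\otimes u}$. Hence every irreducible constituent of $F_t$ lies in $\bigoplus_{u\le t}\Ad^{\otimes u}$, and the same holds for its image under the equivariant map $\Phi$. By Schur's lemma, $\Phi(F_t)$ can only contain those $\E_r$ whose type $r\theta$ occurs in $\Ad^{\otimes u}$ for some $u\le t$. Every weight of $\Ad^{\otimes u}$ is a sum of $u$ roots, and $\theta$ is the highest root, so each weight $\lambda$ of $\Ad^{\otimes u}$ satisfies $\theta$-dominance $r\theta\le u\theta$ in the dominance order. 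Comparing the coefficient of a simple root then gives $r\le u$. This yields $\Phi(F_t)\subseteq\bigoplus_{r\le t}\E_r$.

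\textbf{Lower inclusion.} By the first paragraph, it is enough to show that $\Phi(F_t)$ has a nonzero component in $\E_t$, that is, in the $t\theta$-isotypic part. I would exhibit an explicit highest-weight vector. Take $E=(|1\rangle\langle q|)^{\otimes t}\otimes I^{\otimes(n-t)}$, which is supported on $t$ sites. This operator has weight $t\theta$ under diagonal conjugation, so $\Phi(E)$ has weight $t\theta$. In $\L(V)$ the weight $t\theta$ occurs only in $\E_r$ with $r\ge t$, because weights of $\E_r$ are dominated by $r\theta$. Combined with the upper inclusion, any nonzero weight-$t\theta$ vector of $\Phi(F_t)$ must lie in $\E_t$. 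So it remains to show $\Phi(E)\ne0$. Compute its matrix element between the symmetrized basis states $|\mathrm{sym}(1^t,\ldots)\rangle$ and $|\mathrm{sym}(q^t,\ldots)\rangle$, where the remaining $n-t$ slots are filled by the same letters, for instance all $1$'s. Replacing the first $t$ letters $q$ by $1$ produces a positive sum of overlaps of symmetric states, which is manifestly nonzero. Combining this with the upper inclusion and multiplicity-freeness gives the equality.

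\textbf{Main obstacle.} The delicate step is the weight bound in the upper inclusion, namely that $r\theta\preceq u\theta$ forces $r\le u$, together with the nonvanishing of $\Phi(E)$. The nonvanishing is a combinatorial count, and I expect it to be routine once the symmetric basis is normalized. For the weight bound, I would first try comparing the pairing of weights with the coweight dual to $\alpha_1$. Every root $\alpha$ satisfies $\langle\alpha,\omega_1^\vee\rangle\le1$, and $\langle\theta,\omega_1^\vee\rangle=1$, so a weight of $\Ad^{\otimes u}$ pairs to at most $u$, while $r\theta$ pairs to exactly $r$. For $q=2$ the same argument applies with spin labels: $\Ad^{\otimes u}$ contains spin at most $u$.
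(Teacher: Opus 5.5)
Your proof is correct and follows essentially the paper's route: equivariance of $\Phi$ together with the dominance bound (your $\omega_1^\vee$-pairing makes it precise) gives $\Phi(F_t)\subseteq\bigoplus_{r\le t}\E_r$. For the reverse inclusion, the highest-root operator $|1\rangle\langle q|$ supplies a nonzero weight-$t\theta$ vector, and your $\Phi\bigl((|1\rangle\langle q|)^{\otimes t}\otimes I^{\otimes(n-t)}\bigr)$ is a scalar multiple of the paper's $J_x^{\,t}|_V$.

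The only variation is how this vector is placed in $\E_t$: you use weight support plus the already-proved upper inclusion, whereas the paper notes it is a highest-weight vector. You should also state explicitly that $F_r\subseteq F_t$ then yields $\E_r\subseteq\Phi(F_t)$ for every $r\le t$, not only for $r=t$.
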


\begin{proof}
\emph{Forward inclusion.} On one site
$\L(\CC^q)\cong\mathbf 1\oplus\Ad$, and the adjoint has highest weight
$\theta$. An error supported on at most $t$ sites therefore lies in a sum of
tensor products containing at most $t$ adjoint factors, and every irreducible
constituent of $\Ad^{\otimes s}$ has highest weight at most $s\theta$ in
dominance order; in particular $V_{r\theta}$ cannot occur for $r>s$. Since
$\Phi$ is $\SU q$--equivariant,
\[
\Phi(F_t)\subseteq\bigoplus_{r=0}^{t}\E_r. \numberthis
\]

\emph{Reverse inclusion.} Let $x=\ket 1\bra q$, a highest-root vector with
$x^2=0$, and set $J_x=\sum_{s=1}^{n}x^{(s)}$. Because $x^2=0$,
\[
J_x^{\,r}=r!\sum_{\substack{S\subseteq[n]\\|S|=r}}\ \prod_{s\in S}x^{(s)}, \numberthis
\]
a linear combination of physical errors of weight exactly $r$, so
$J_x^{\,r}\in F_r$. Moreover $J_x$ commutes with $S_n$, hence preserves the
symmetric subspace $V$, so $J_x^{\,r}|_V=\Phi(J_x^{\,r})\in\Phi(F_r)$. It is
nonzero: it maps $\ket q^{\otimes n}$ to a nonzero multiple of the symmetric
occupation state with $r$ entries equal to $1$ and $n-r$ entries equal to $q$.
Under the conjugation action $J_x^{\,r}|_V$ is a highest-weight vector of weight
$r\theta$, so the $\SU q$--submodule it generates is a copy of $V_{r\theta}$;
by multiplicity-freeness that copy is exactly $\E_r$. Hence
$\E_r\subseteq\Phi(F_r)\subseteq\Phi(F_t)$ for $r\le t$, and combining with the
forward inclusion gives equality.
\end{proof}

\begin{lemma}
\label{lem:symmetric-depth-distance}
Let
\[
V=\Sym^n(\CC^q)\subset (\CC^q)^{\otimes n}. \numberthis
\]
For any code subspace $\mathcal C\subseteq V$, the ordinary qudit distance of
$\mathcal C$ as a subspace of $(\CC^q)^{\otimes n}$ is equal to its intrinsic
depth when $\mathcal C$ is regarded as a subspace of the $\SU q$
representation $V$.
\end{lemma}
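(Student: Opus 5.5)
The plan is to reduce both notions of distance to a single statement about which compressed operators the code projector must make scalar, and then apply Theorem~\ref{thm:sympower-filtration}. Throughout, the depth function on $\Sym^n(\CC^q)$ is $\operatorname{depth}(\E_r)=r$; this is the adjoint order of $V_{r\theta}$, in accordance with the filtration. Let $P$ be the projector onto $\mathcal C\subseteq V$, viewed inside $H=(\CC^q)^{\otimes n}$. Since $\mathcal C\subseteq V=\iota(V)$, we have $P=\iota P_V\iota^\dagger$, where $P_V$ is the projector of $\mathcal C$ regarded as a subspace of $V$.

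\emph{Key identity.} For any $E\in\operatorname{End}(H)$,
\[
PEP=\iota\,P_V\,\Phi(E)\,P_V\,\iota^\dagger .
\]
Since $\iota$ is an isometry, it follows that $PEP=c_E P$ if and only if $P_V\Phi(E)P_V=c_E P_V$. In words, $\mathcal C$ detects the physical error $E$ exactly when it intrinsically detects the compressed operator $\Phi(E)$.

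\emph{Distance to depth.} Ordinary distance at least $d$ means that $\mathcal C$ detects every element of $F_{d-1}$, the span of errors of weight less than $d$. Detection is linear in $E$, so it is enough to check a spanning set. By the key identity, this is equivalent to $\mathcal C$ detecting every operator in $\Phi(F_{d-1})$. Theorem~\ref{thm:sympower-filtration} identifies $\Phi(F_{d-1})=\bigoplus_{r<d}\E_r$. So the condition says precisely that $\mathcal C$ detects every sector $\E_r$ with $r<d$, which is intrinsic sector distance at least $d$.

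\emph{Conclusion.} The two conditions ``distance $\ge d$'' and ``depth $\ge d$'' therefore coincide for every $d$, and hence the two maxima are equal. The top case needs a small check: detecting $F_n=\operatorname{End}(H)$ corresponds to detecting all of $\L(V)$, and the convention for the maximal value is the same on both sides.

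The only delicate point is the equivalence $PEP\propto P \iff P_V\Phi(E)P_V\propto P_V$. This rests on the isometry property $\iota^\dagger\iota=I_V$, together with the observation that the scalar $c_E$ is the same on both sides. Both facts follow from a direct computation. All of the substantive work, in particular the reverse inclusion, is already contained in Theorem~\ref{thm:sympower-filtration}.
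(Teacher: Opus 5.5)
Your proof is correct and follows the paper's argument: the paper likewise reduces physical detection to detection of the compressed operators via $PEP=P\,\Phi(E)\,P$, and then uses Theorem~\ref{thm:sympower-filtration} to identify $\Phi(F_{d-1})$ with $\bigoplus_{r<d}\E_r$. Your explicit bookkeeping with $\iota$ (the isometry $\iota^\dagger\iota=I_V$ and the unchanged scalar $c_E$) and your stated depth convention $\operatorname{depth}(\E_r)=r$ make precise what the paper leaves implicit when it identifies $V$ with its image.
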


\begin{proof}
Let $P$ be the projector onto $\mathcal C$. Since $\mathcal C\subseteq V$, the
range of $P$ lies in the symmetric subspace, so for every physical operator
$E$,
\[
PEP=P\,\Phi(E)\,P. \numberthis
\]
Hence the Knill--Laflamme conditions for all physical errors of weight $<d$
are equivalent to the Knill--Laflamme conditions on $\Phi(F_{d-1})$. By
Theorem~\ref{thm:sympower-filtration}, $\Phi(F_{d-1})=\bigoplus_{r<d}\E_r$, so
these are exactly the intrinsic Knill--Laflamme conditions on the sectors
$\E_r$ with $r<d$. Therefore ordinary qudit distance at least $d$ is equivalent
to intrinsic depth at least $d$, and the two parameters coincide.
\end{proof}

\begin{proposition}
\label{prop:PI-qudit-intrinsic-strong}
Fix integers $q\ge2$ and $n\ge1$, and let
\[
V=\Sym^n(\CC^q) \numberthis
\]
be the $\SU q$ irreducible representation of highest weight $(n,0,\dots,0)$. Then the symmetric embedding
\[
\iota_n:V \xrightarrow{\;\sim\;} \Sym^n(\CC^q)\subset (\CC^q)^{\otimes n} \numberthis 
\]
induces a dimension-preserving and distance-preserving bijection between
\begin{enumerate}
\item intrinsic quantum codes in $V$, and
\item permutation--invariant quantum codes on $n$ qudits of local dimension $q$.
\end{enumerate}
Under this bijection, intrinsic depth is equal to ordinary qudit distance.
In particular, for every $K$ and $d$, intrinsic $\lcode N,K,d \rcode_{\SU{q}}$ codes in $V$ (with $N=\dim V$) are
in one-to-one correspondence with permutation--invariant $((n,K,d))_q$ codes.
\end{proposition}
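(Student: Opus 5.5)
The plan is to assemble the proposition from three ingredients already established: the symmetrization map gives a bijection of subspaces, it preserves dimension, and Lemma~\ref{lem:symmetric-depth-distance} identifies the two distance parameters. No new computation should be needed.

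First I would set up the bijection. The symmetric embedding $\iota_n$ is an $\SU q$-equivariant isometry whose image is exactly the permutation-invariant subspace $\Sym^n(\CC^q)\subset(\CC^q)^{\otimes n}$. Here a permutation-invariant code means a subspace of $(\CC^q)^{\otimes n}$ contained in the symmetric subspace; if the intended definition is the weaker ``$S_n$-stable'', I would state this convention explicitly. The maps $\mathcal I\mapsto\iota_n(\mathcal I)$ and $\C\mapsto\iota_n^{-1}(\C)$ are then mutually inverse bijections between subspaces of $V$ and subspaces of the symmetric subspace. Because $\iota_n$ is injective and linear, $\dim\iota_n(\mathcal I)=\dim\mathcal I$, so dimension is preserved.

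Next I would transport projectors. If $P$ is the orthogonal projector onto $\mathcal I$, then $\iota_nP\iota_n^\dagger$ is the orthogonal projector onto $\iota_n(\mathcal I)$, because $\iota_n^\dagger\iota_n=\mathrm{id}_V$. For any physical error $E$, the Knill--Laflamme expression for the physical code therefore reduces as follows:
\[
(\iota_nP\iota_n^\dagger)E(\iota_nP\iota_n^\dagger)=\iota_n\,P\,\Phi(E)\,P\,\iota_n^\dagger .
\]
This equals $c_E\,\iota_nP\iota_n^\dagger$ if and only if $P\Phi(E)P=c_EP$, since $\iota_n$ is an isometry.

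Once $\iota_n(\mathcal I)$ is identified with the code in Lemma~\ref{lem:symmetric-depth-distance}, that lemma, which rests on the filtration identity $\Phi(F_{d-1})=\bigoplus_{r<d}\E_r$ of Theorem~\ref{thm:sympower-filtration}, gives that qudit distance equals intrinsic depth. The depth assignment used here is $\operatorname{depth}(\E_r)=r$, matching the adjoint-order labeling of Proposition~\ref{prop:sym-power-mult-free}. The final statement about $\lcode N,K,d\rcode_{\SU q}$ codes and $((n,K,d))_q$ codes is then just the bijection restricted to the codes with fixed $K$ and $d$.

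The main subtlety is bookkeeping rather than mathematics. The code in Lemma~\ref{lem:symmetric-depth-distance} is $\C\subseteq V$ viewed inside the tensor space, and I must check that the intrinsic Knill--Laflamme conditions computed there via $\Phi$ coincide with those for $\mathcal I$ under the identification $V\cong\iota_n(V)$. The displayed conjugation identity settles exactly this point.
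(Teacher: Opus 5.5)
Your proposal is correct and follows essentially the same route as the paper: the mutually inverse maps $\mathcal I\mapsto\iota_n(\mathcal I)$ and $\C\mapsto\iota_n^{-1}(\C)$ give a dimension-preserving bijection, and Lemma~\ref{lem:symmetric-depth-distance} (resting on Theorem~\ref{thm:sympower-filtration}) shows that distance is preserved. Your identity $(\iota_nP\iota_n^\dagger)E(\iota_nP\iota_n^\dagger)=\iota_nP\,\Phi(E)\,P\iota_n^\dagger$ just makes explicit the identification of projectors on $V$ and on $(\CC^q)^{\otimes n}$ that the paper's lemma uses implicitly when it writes $PEP=P\,\Phi(E)\,P$.
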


\begin{proof}
The map $\iota_n$ is an $\SU q$--equivariant linear isomorphism from $V$ onto
the permutation--invariant subspace $\Sym^n(\CC^q)$.
Hence it induces inverse bijections
\[
\mathcal I \mapsto \iota_n(\mathcal I),
\qquad
\mathcal C \mapsto \iota_n^{-1}(\mathcal C), \numberthis 
\]
between subspaces of $V$ and permutation--invariant subspaces of
$(\CC^q)^{\otimes n}$.
These maps preserve dimension.
By Lemma~\ref{lem:symmetric-depth-distance}, they also preserve the distance
parameter, since intrinsic depth in $V$ equals ordinary qudit distance in
$(\CC^q)^{\otimes n}$.
\end{proof}

For $q=2$, the representation $V=\Sym^n(\CC^2)$ is the spin-$n/2$
irreducible representation of $\SU 2$.
Thus \Cref{prop:PI-qudit-intrinsic-strong} recovers the Dicke Bootstrap correspondence  between
permutation--invariant qubit codes and single-irrep $\SU 2$ codes discussed in \cite{us2}.

\subsection{Reading the examples: the linear program and its extremality certificates}
\label{subsec:mf-lp-template}

The three examples below follow a single template, which we state once here and
then instantiate.

\paragraph{Normalized enumerators.}
For a code of dimension \(K\) in \(V=\Sym^n(\CC^q)\), with \(N=\dim V\), write
\begin{equation}\label{eq:mf-normalized-enum}
\widetilde A_r:=\frac{N}{K^2}A_r,
\qquad
\widetilde B_r:=\frac{N}{K}B_r,
\qquad 0\le r\le n.
\end{equation}
In this normalization \(\widetilde A_0=\widetilde B_0=1\); the intrinsic
MacWilliams identity \(\Bvec=M\Avec\) becomes
\[
\widetilde B=KM\widetilde A;
\]
and each detected sector \(r\) contributes the coding-theoretic equality
\(\widetilde A_r=\widetilde B_r\) (Theorem~\ref{thm:matrix-KL-equality}). The
scalar linear program of Theorem~\ref{thm:intrinsic-lp-feasibility} is therefore
cut out, for a fixed dimension \(K\) and detected set \(\mathcal D\), by
\begin{equation}\label{eq:mf-lp-constraints}
0\le\widetilde A_r\le\widetilde B_r\ (0\le r\le n),
\qquad
\widetilde B=KM\widetilde A,
\qquad
\widetilde A_r=\widetilde B_r\ (r\in\mathcal D).
\end{equation}

\paragraph{Uniqueness by elimination.}
Each example determines a \emph{unique} feasible enumerator by the same
argument, which we write out in full only for the first example
(Proposition~\ref{prop:522-unique-feasible}) and invoke thereafter. Eliminate
\(\widetilde B\) via \(\widetilde B=KM\widetilde A\); impose the detection
equalities \(\widetilde A_r=\widetilde B_r\) for \(r\in\mathcal D\) using the
corresponding rows of \(M\); combine with the normalization
\(\sum_{r}\widetilde A_r=K\) (which is \(\widetilde A_0=1\) together with
\(\sum_{r\ge1}\widetilde A_r=K-1\)). The result is a linear system in the
undetected weights whose right-hand side is a nonpositive combination of
nonnegative variables; nonnegativity then forces those weights to vanish, which
pins down the top weight and hence the whole enumerator. The uniqueness of the
feasible point is the scalar counterpart of the non-uniqueness seen in the
matrix-valued \((2,2)\) example of Section~\ref{sec:mult-su3-example}.

\paragraph{Extremality.}
In each example the code is optimal within its permutation-invariant family, by
three infeasibility checks against \eqref{eq:mf-lp-constraints}: (i) for smaller
length \(n\) the fixed-\(K\), fixed-\(\mathcal D\) system is infeasible, so no
shorter code with the target parameters exists; (ii) the fixed-\((K{+}1)\)
system is infeasible, so by Lemma~\ref{lem:subcode-detection} no code of larger
dimension detects \(\mathcal D\); and (iii) enlarging the detected set (raising
the target distance) is infeasible at the given \(n,K\), so the distance cannot
be increased. Under Proposition~\ref{prop:PI-qudit-intrinsic-strong} each
verdict transfers verbatim to permutation-invariant qudit codes. We state the
resulting extremality conclusion in each case and omit the routine
re-instantiation of these three checks.

\subsection{A First Extremal Example: the Intrinsic $\lcode 5,2,2 \rcode_{\SU 2}$ Code}
\label{sec:su2-example}

The smallest nontrivial case is \(j=2\), i.e.\ \(V_2\cong\Sym^4(\CC^2)\), so that
intrinsic codes in \(V_2\) correspond to permutation-invariant codes on four
qubits (Proposition~\ref{prop:PI-qudit-intrinsic-strong}). Here
\[
\L(V_2)\cong W_0\oplus W_1\oplus W_2\oplus W_3\oplus W_4, \numberthis
\]
and by \cref{cor:M-6j} the unnormalized intrinsic MacWilliams matrix is
\[
\renewcommand{\arraystretch}{1.2}
M =
\begin{pmatrix}
\tfrac{1}{5} & \tfrac{1}{5} & \tfrac{1}{5} & \tfrac{1}{5} & \tfrac{1}{5}\\
\tfrac{3}{5} & \tfrac{1}{2} & \tfrac{3}{10} & 0 & -\tfrac{2}{5}\\
1 & \tfrac{1}{2} & -\tfrac{3}{14} & -\tfrac{4}{7} & \tfrac{2}{7}\\
\tfrac{7}{5} & 0 & -\tfrac{4}{5} & \tfrac{1}{2} & -\tfrac{1}{10}\\
\tfrac{9}{5} & -\tfrac{6}{5} & \tfrac{18}{35} & -\tfrac{9}{70} & \tfrac{1}{70}
\end{pmatrix}. \numberthis
\]

Consider the two-dimensional subspace
\[
\mathcal C
=
\operatorname{span}\left\{
\tfrac{1}{\sqrt2}(\ket0+\ket4),\;
\ket2
\right\}
\subset V_2, \numberthis
\]
where \(\{\ket0,\dots,\ket4\}\) is the standard weight basis of \(V_2\). This is
the intrinsic \(\lcode 5,2,2 \rcode_{\SU 2}\) code of
\cite{usIntrinsicCodes}; under \(V_2\cong\Sym^4(\CC^2)\) it is the four-qubit
permutation-invariant code, with distance \(d=2\)
\cite{RuskaiPRL,ouyangPI,Hagiwara2020FourQubitDeletion,
Nakayama2020SingleDeletion,Ouyang2021PermutationInvariant,422codePRL}.

\begin{proposition}
\label{prop:522-enumerator}
For the intrinsic \(\lcode 5,2,2 \rcode_{\SU 2}\) code, in the normalization
\eqref{eq:mf-normalized-enum} with \(K=2\) and \(N=\dim V_2=5\),
\[
\widetilde A=\left(1,\,0,\,0,\,0,\,\tfrac32\right),
\qquad
\widetilde B=\left(1,\,0,\,\tfrac{20}{7},\,\tfrac52,\,\tfrac{51}{14}\right). \numberthis
\]
In particular \(\widetilde A_1=\widetilde B_1=0\), so the code detects the
adjoint sector \(W_1\).
\end{proposition}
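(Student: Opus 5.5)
The plan is to compute the projector enumerator $\widetilde A$ directly from the code projector, obtain $\widetilde B$ from the MacWilliams identity $\widetilde B=KM\widetilde A$, and read off detection from Corollary~\ref{cor:mf-basic-identities}. Write $P=\ket{c_0}\bra{c_0}+\ket2\bra2$ with $\ket{c_0}=\tfrac1{\sqrt2}(\ket0+\ket4)$. We use $A_r(P,P)=\|\proj_r(P)\|_2^2$ and split $P$ into two pieces:
\[
P=D+O,\qquad D=\operatorname{diag}\bigl(\tfrac12,0,1,0,\tfrac12\bigr),\qquad O=\tfrac12\bigl(\ket0\bra4+\ket4\bra0\bigr).
\]

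First, we locate each piece in the decomposition of $\L(V_2)$ by its weight under the Cartan torus. The operators $\ket0\bra4$ and $\ket4\bra0$ have $J_z$-weight $\pm4$. Only $W_4$ contains weight $\pm4$ components, so $O\in W_4$. The diagonal part $D$ lies in the span of the weight-zero tensors $T^{(k)}_0$, $0\le k\le4$, which are mutually orthogonal.

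Second, we compute the projections of $D$ onto these tensors.
\begin{itemize}
\item For $k=0$, Lemma~\ref{lem:basic-identities}(4) gives $A_0=|\Tr P|^2/5=4/5$.
\item For odd $k$, the entries of $T^{(k)}_0$ are odd under $m\mapsto-m$, while $D$ is even. Hence $A_1=A_3=0$.
\item For $k=2$, $T^{(2)}_0\propto\operatorname{diag}(3m^2-6)=\operatorname{diag}(6,-3,-6,-3,6)$. Its Hilbert--Schmidt pairing with $D$ is $\tfrac12\cdot6+1\cdot(-6)+\tfrac12\cdot6=0$, so $A_2=0$.
\item Everything else lies in $W_4$. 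By Parseval (Lemma~\ref{lem:basic-identities}(2)), $A_4=\Tr(P^2)-A_0=2-\tfrac45=\tfrac65$.
\end{itemize}
Normalizing by $N/K^2=5/4$ gives $\widetilde A=(1,0,0,0,\tfrac32)$.

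Third, we apply $\widetilde B=KM\widetilde A$ with $K=2$ and the displayed matrix $M$ from Corollary~\ref{cor:M-6j}. Only columns $0$ and $4$ contribute, so $\widetilde B_r=2\bigl(M_{r0}+\tfrac32M_{r4}\bigr)$. Row by row:
\begin{itemize}
\item $r=0$: $2\bigl(\tfrac15+\tfrac3{10}\bigr)=1$.
\item $r=1$: $2\bigl(\tfrac35-\tfrac35\bigr)=0$.
\item $r=2$: $2\bigl(1+\tfrac37\bigr)=\tfrac{20}7$.
\item $r=3$: $2\bigl(\tfrac75-\tfrac3{20}\bigr)=\tfrac52$.
\item $r=4$: $2\bigl(\tfrac95+\tfrac3{140}\bigr)=\tfrac{51}{14}$.
\end{itemize}
As a consistency check, $\sum_r\widetilde B_r=1+0+\tfrac{20}7+\tfrac52+\tfrac{51}{14}=10$. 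This equals $\tfrac NK\sum_rB_r=\tfrac52K^2=10$, in agreement with Corollary~\ref{cor:mf-basic-identities}.

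Finally, $\widetilde A_1=\widetilde B_1=0$ means $A_1=KB_1$. The equality case of Corollary~\ref{cor:mf-basic-identities} then gives detection of $W_1$. The only delicate step is the vanishing of the $k=2$ overlap. Everything else is bookkeeping, and the nonnegativity and normalization checks guard against errors in $M$.
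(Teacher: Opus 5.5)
Your proposal is correct and is essentially the paper's first suggested route: direct computation from the projector in a spherical-tensor basis. You streamline it by using the $J_z$-weight and the $m\mapsto -m$ parity to reduce $\widetilde A$ to the single overlap $\langle T^{(2)}_0, D\rangle$, and by obtaining $\widetilde B$ from $\widetilde B = KM\widetilde A$ instead of evaluating the twirls directly. Unlike the paper's alternative route through Proposition~\ref{prop:522-unique-feasible}, which presupposes the distance-$2$ property, your argument derives the detection of $W_1$ from $\widetilde A_1=\widetilde B_1=0$ rather than assuming it.
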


\begin{proof}
Either compute the enumerators directly from the projector \(P\) using a basis
of spherical tensors and normalize by \eqref{eq:mf-normalized-enum}, or invoke
Proposition~\ref{prop:522-unique-feasible} below, which shows this is the unique
feasible enumerator for \(K=2\), \(d=2\).
\end{proof}

\begin{proposition}
\label{prop:522-unique-feasible}
For \(j=2\), \(K=2\), and \(d=2\) (detected set \(\mathcal D=\{1\}\)), the linear
program \eqref{eq:mf-lp-constraints} has the unique feasible point
\[
\widetilde A=\left(1,\,0,\,0,\,0,\,\tfrac32\right),
\qquad
\widetilde B=\left(1,\,0,\,\tfrac{20}{7},\,\tfrac52,\,\tfrac{51}{14}\right). \numberthis
\]
\end{proposition}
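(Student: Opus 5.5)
The plan is to solve the linear system cut out by \eqref{eq:mf-lp-constraints} for $K=2$, $N=5$ by direct elimination, following the ``uniqueness by elimination'' template of Section~\ref{subsec:mf-lp-template}. Existence of a feasible point is not at issue: the explicit code $\mathcal C$ supplies one by Theorem~\ref{thm:intrinsic-lp-feasibility}, and we can also verify it directly at the end. So the real content is uniqueness.

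The first step is to fix the normalizations in the coordinates \eqref{eq:mf-normalized-enum}. We have $\widetilde A_0=\widetilde B_0=1$. The normalization $\sum_r A_r=K$ of Corollary~\ref{cor:mf-basic-identities} must be converted with care. Since $\widetilde A_r=\tfrac{N}{K^2}A_r$, it becomes $\sum_r\widetilde A_r=N/K=\tfrac52$, that is,
\[
\widetilde A_1+\widetilde A_2+\widetilde A_3+\widetilde A_4=\tfrac32 .
\]
This bookkeeping is the step I expect to need the most care: using ``$\sum\widetilde A_r=K$'' literally would produce a spurious infeasibility. The MacWilliams relation reads $\widetilde B=2M\widetilde A$, with $M$ the displayed $5\times5$ matrix from Corollary~\ref{cor:M-6j}.

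The second step imposes the single detection equality $\widetilde A_1=\widetilde B_1$, using row $1$ of $M$:
\[
\widetilde A_1=2\Bigl(\tfrac35+\tfrac12\widetilde A_1+\tfrac{3}{10}\widetilde A_2-\tfrac25\widetilde A_4\Bigr).
\]
The $\widetilde A_1$ terms cancel, leaving $\widetilde A_4=\tfrac32+\tfrac34\widetilde A_2$. Substituting this into the normalization gives
\[
\widetilde A_1+\tfrac74\widetilde A_2+\widetilde A_3=0 .
\]
Since $\widetilde A_r\ge0$ by \eqref{eq:mf-lp-constraints}, this forces $\widetilde A_1=\widetilde A_2=\widetilde A_3=0$, and hence $\widetilde A_4=\tfrac32$. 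This pins down $\widetilde A=(1,0,0,0,\tfrac32)$.

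The final step reads off $\widetilde B=2M\widetilde A$ row by row, which gives $\widetilde B=(1,0,\tfrac{20}{7},\tfrac52,\tfrac{51}{14})$; for example, $\widetilde B_2=2\bigl(1+\tfrac27\cdot\tfrac32\bigr)=\tfrac{20}{7}$. Since $\widetilde B$ is determined by $\widetilde A$, the point is unique. To confirm feasibility directly, we check that $0\le\widetilde A_r\le\widetilde B_r$ holds componentwise ($\tfrac32\le\tfrac{51}{14}$ in the top entry) and that equality holds at $r=1$. This also recovers Proposition~\ref{prop:522-enumerator}.
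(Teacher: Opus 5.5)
Your proof is correct and follows the paper's own argument step for step: the normalization $\widetilde A_1+\widetilde A_2+\widetilde A_3+\widetilde A_4=\tfrac32$, the detection row giving $\widetilde A_4=\tfrac32+\tfrac34\widetilde A_2$, the resulting constraint $\widetilde A_1+\tfrac74\widetilde A_2+\widetilde A_3=0$ on nonnegative variables, and reading off $\widetilde B=2M\widetilde A$. Your normalization $\sum_r\widetilde A_r=N/K$ (rather than $K$) is what the paper's proof actually uses, and your direct feasibility check is a small addition that the paper leaves implicit.
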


\begin{proof}
This is the template elimination of Section~\ref{subsec:mf-lp-template},
written out in full. Eliminate \(\widetilde B\) by
\(\widetilde B=2M\widetilde A\). With \(\widetilde A_0=1\) the normalization is
\[
\widetilde A_1+\widetilde A_2+\widetilde A_3+\widetilde A_4=\tfrac32, \numberthis
\]
and the adjoint-detection equality \(\widetilde A_1=\widetilde B_1\), using the
second row of \(M\), reads
\[
\widetilde A_1
=\tfrac65+\widetilde A_1+\tfrac35\widetilde A_2-\tfrac45\widetilde A_4,
\qquad\text{i.e.}\qquad
\widetilde A_4=\tfrac32+\tfrac34\widetilde A_2. \numberthis
\]
Substituting into the normalization gives
\[
\widetilde A_1+\widetilde A_3+\tfrac74\widetilde A_2=0. \numberthis
\]
Since \(\widetilde A_1,\widetilde A_2,\widetilde A_3\ge0\), all three vanish,
whence \(\widetilde A_4=\tfrac32\) and
\(\widetilde A=(1,0,0,0,\tfrac32)\). Applying \(\widetilde B=2M\widetilde A\)
gives the stated \(\widetilde B\), and the feasible point is unique.
\end{proof}

By the extremality template of Section~\ref{subsec:mf-lp-template}, the
four-qubit \(((4,2,2))_2\) code corresponding to \(\lcode 5,2,2\rcode_{\SU2}\) is
optimal among permutation-invariant qubit codes: no \(((n,2,2))_2\) code exists
for \(n<4\); no \(((4,K,2))_2\) code exists with \(K>2\); and no \(((4,2,d))_2\)
code exists with \(d>2\). It is moreover the unique feasible point of its linear
program.

\subsection{An Extremal Example with Distance 3: the Intrinsic $\lcode 8,2,3 \rcode_{\SU 2}$ Code}
\label{sec:823-example}

The next case is \(j=\tfrac72\), i.e.\ \(V_{7/2}\cong\Sym^7(\CC^2)\),
corresponding to permutation-invariant codes on seven qubits. Here
\(\L(V_{7/2})\cong\bigoplus_{r=0}^{7}W_r\), and the unnormalized intrinsic
MacWilliams matrix is
\[
M=
\begin{pmatrix}
\frac18 & \frac18 & \frac18 & \frac18 & \frac18 & \frac18 & \frac18 & \frac18 \\[4pt]
\frac38 & \frac{59}{168} & \frac{17}{56} & \frac{13}{56} & \frac{23}{168} &
\frac{1}{56} & -\frac18 & -\frac{7}{24} \\[6pt]
\frac58 & \frac{85}{168} & \frac{7}{24} & \frac{5}{168} &
-\frac{5}{24} & -\frac{55}{168} & -\frac{5}{24} & \frac{7}{24} \\[6pt]
\frac78 & \frac{13}{24} & \frac{1}{24} & -\frac{31}{88} &
-\frac{101}{264} & \frac{1}{88} & \frac{119}{264} & -\frac{49}{264} \\[6pt]
\frac98 & \frac{23}{56} & -\frac38 & -\frac{303}{616} &
\frac18 & \frac{309}{616} & -\frac38 & \frac{7}{88} \\[6pt]
\frac{11}{8} & \frac{11}{168} & -\frac{121}{168} & \frac{1}{56} &
\frac{103}{168} & -\frac{363}{728} & \frac{53}{312} & -\frac{7}{312} \\[6pt]
\frac{13}{8} & -\frac{13}{24} & -\frac{13}{24} & \frac{221}{264} &
-\frac{13}{24} & \frac{53}{264} & -\frac{1}{24} & \frac{1}{264} \\[6pt]
\frac{15}{8} & -\frac{35}{24} & \frac78 & -\frac{35}{88} &
\frac{35}{264} & -\frac{35}{1144} & \frac{5}{1144} & -\frac{1}{3432}
\end{pmatrix}. \numberthis
\]

Consider the two-dimensional subspace \(\mathcal C\) spanned by
\begin{align}
    &\tfrac{\sqrt{15}}{8}\ket0
+\tfrac{\sqrt7}{8}\ket2
+\tfrac{\sqrt{21}}{8}\ket4
-\tfrac{\sqrt{21}}{8}\ket6, \\
&\tfrac{\sqrt{15}}{8}\ket7
+\tfrac{\sqrt7}{8}\ket5
+\tfrac{\sqrt{21}}{8}\ket3
-\tfrac{\sqrt{21}}{8}\ket1 \numberthis
\end{align}
in the standard weight basis \(\{\ket0,\dots,\ket7\}\) of \(V_{7/2}\). This is
the intrinsic \(\lcode 8,2,3 \rcode_{\SU 2}\) code; under
\(V_{7/2}\cong\Sym^7(\CC^2)\) it is the seven-qubit permutation-invariant code,
with distance \(d=3\), and equivalent realizations appear in
\cite{2004permutation,gross1,us1}.

\begin{proposition}
\label{prop:823-enumerator}
For the intrinsic \(\lcode 8,2,3 \rcode_{\SU 2}\) code, in the normalization
\eqref{eq:mf-normalized-enum} with \(K=2\) and \(N=8\),
\[
\widetilde A =\left(1,0,0,0,0,0,3,0\right),
\qquad
\widetilde B =
\left(1,0,0,\tfrac{49}{11},0,\tfrac{49}{13},3,\tfrac{540}{143}\right). \numberthis
\]
In particular \(\widetilde A_1=\widetilde B_1=0\) and
\(\widetilde A_2=\widetilde B_2=0\), so the code detects the sectors \(W_1\) and
\(W_2\).
\end{proposition}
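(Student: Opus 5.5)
Following the pattern of Proposition~\ref{prop:522-enumerator}, I would establish the enumerator by a uniqueness argument rather than by computing spherical-tensor overlaps. The plan is to show that, for \(j=\tfrac72\), \(K=2\) and detected set \(\mathcal D=\{1,2\}\), the linear program \eqref{eq:mf-lp-constraints} has a unique feasible point. The code \(\mathcal C\) has distance \(3\) by \cite{2004permutation,gross1,us1} together with Lemma~\ref{lem:symmetric-depth-distance}, so its enumerator is feasible and must therefore equal that point. As a consistency check I would also verify directly that \(\widetilde A\) is supported on \(r\in\{0,6\}\). The two basis vectors are exchanged by the spin flip \(\ket m\mapsto\pm\ket{7-m}\), which kills odd-rank contributions to \(\Tr(PT^{(r)}_q)\), and only the \(q=0\) components survive by weight considerations. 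It then remains to evaluate \(\sum_m \bra m P\ket m\,\langle 7/2,m;r,0|7/2,m\rangle\) for \(r=2,4,6\). This is a short but tedious Clebsch--Gordan computation, and I would use it only to confirm the answer.

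For the main argument, I would set \(\widetilde A_0=1\), so the normalization becomes \(\sum_{r=1}^7\widetilde A_r=1\). Substituting \(\widetilde B=2M\widetilde A\) into the detection equalities gives two linear equations. From row \(1\) of \(M\), the equality \(\widetilde A_1=\widetilde B_1\) reads
\[
\widetilde A_1=2\Bigl(\tfrac38+\tfrac{59}{168}\widetilde A_1+\tfrac{17}{56}\widetilde A_2+\tfrac{13}{56}\widetilde A_3+\tfrac{23}{168}\widetilde A_4+\tfrac1{56}\widetilde A_5-\tfrac18\widetilde A_6-\tfrac7{24}\widetilde A_7\Bigr).
\]
Row \(2\) gives the analogous equality \(\widetilde A_2=\widetilde B_2\). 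Together with the normalization, this is three equations in the seven unknowns \(\widetilde A_1,\dots,\widetilde A_7\).

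I would then eliminate the two unknowns carrying negative coefficients whose values should be pinned, namely \(\widetilde A_6\) and \(\widetilde A_7\). Doing so should produce a single relation of the form \(\sum_{r\in\{1,\dots,5\}} c_r\widetilde A_r + c_7\widetilde A_7=0\) with all \(c_r\ge0\) (strictly positive on the relevant variables). Nonnegativity then forces \(\widetilde A_1=\dots=\widetilde A_5=\widetilde A_7=0\), and the normalization gives \(\widetilde A_6=1\). In unnormalized-by-\(K\) terms this is consistent with \(\widetilde A_6=3\): the normalization \(\sum_r\widetilde A_r=K=2\) gives \(\sum_{r\ge1}\widetilde A_r=K-1\), and I must match the convention of \eqref{eq:mf-normalized-enum} carefully. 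In fact \(\sum_r A_r=K\) means \(\sum_r\widetilde A_r=N/K=4\), so \(\widetilde A_6=3\), and I would use this as the correct normalization. Finally, \(\widetilde B=2M\widetilde A\) with \(\widetilde A=e_0+3e_6\) gives \(\widetilde B_r=2(M_{r0}+3M_{r6})\). This reproduces \((1,0,0,\tfrac{49}{11},0,\tfrac{49}{13},3,\tfrac{540}{143})\); for example, \(\widetilde B_3=2(\tfrac78+\tfrac{357}{264})=\tfrac{49}{11}\). The detection claims \(\widetilde A_r=\widetilde B_r=0\) for \(r=1,2\) then follow, and Corollary~\ref{cor:mf-basic-identities} identifies these equalities with detection of \(W_1\) and \(W_2\).

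The main obstacle is the elimination step. With two detection equations and seven unknowns, it is not automatic that a nonnegative combination exists which annihilates the constant terms and leaves every remaining coefficient nonnegative. I would first solve for the pair \((\widetilde A_6,\widetilde A_7)\) from the two detection rows and substitute into the normalization. If some resulting coefficient turns out negative, I would instead search for a Farkas-type multiplier vector \((y_1,y_2,y_0)\) on the three equations, which is a small LP dual. I would also bring in the inequalities \(\widetilde A_r\le\widetilde B_r\) for undetected \(r\) if needed, so that the feasible region collapses to the single point \(\widetilde A=e_0+3e_6\).
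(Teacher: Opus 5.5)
Your main argument is the paper's own route. The enumerator is the unique feasible point of the $K=2$, $\mathcal D=\{1,2\}$ linear program (Proposition~\ref{prop:823-unique-feasible}), followed by $\widetilde B=2M\widetilde A$. As in your write-up, this presupposes that the code is already known to have distance $3$. The elimination you flagged as the obstacle goes through directly, with no Farkas multipliers or extra inequalities. With your corrected normalization $\sum_{r\ge1}\widetilde A_r=N/K-1=3$, eliminating $\widetilde A_6,\widetilde A_7$ leaves
$\widetilde A_1+2\widetilde A_2+\tfrac{55}{27}\widetilde A_3+\tfrac{11}{9}\widetilde A_4+\tfrac{13}{27}\widetilde A_5=0$.
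Contrary to your sketch, this relation has no $\widetilde A_7$ term; $\widetilde A_7=0$ and $\widetilde A_6=3$ then follow by back-substitution.

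One correction concerns your optional consistency check: it is false that only the $q=0$ components survive. Each code vector superposes weights differing by $2,4,6$ (e.g.\ $\ket0,\ket6$ in one and $\ket1,\ket7$ in the other). For even rank, the flip symmetry together with the real amplitudes makes the two contributions to $\Tr(PT^{(r)}_q)$ add rather than cancel. Indeed, the result itself says $P-\tfrac14 I$ lies entirely in $W_6$, and this operator is not diagonal, so $\Tr(PT^{(6)}_q)\neq0$ for some $q\neq0$. A $q=0$-only Clebsch--Gordan evaluation would therefore undercount $\widetilde A_6$ and could not certify $\widetilde A_2=\widetilde A_4=0$. The direct check must sum over all $q$.
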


\begin{proof}
Compute directly from the code projector, or invoke
Proposition~\ref{prop:823-unique-feasible}.
\end{proof}

\begin{proposition}
\label{prop:823-unique-feasible}
For \(j=\tfrac72\), \(K=2\), and \(d=3\) (detected set \(\mathcal D=\{1,2\}\)),
the linear program \eqref{eq:mf-lp-constraints} has the unique feasible point of
Proposition~\ref{prop:823-enumerator}.
\end{proposition}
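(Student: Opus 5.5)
The plan is to run the elimination template of Section~\ref{subsec:mf-lp-template} exactly as in Proposition~\ref{prop:522-unique-feasible}, now with two detection rows instead of one. First I will eliminate \(\widetilde B\) through \(\widetilde B=2M\widetilde A\), using the displayed \(8\times8\) matrix \(M\) for \(j=\tfrac72\). With \(\widetilde A_0=1\), the normalization \(\sum_r \widetilde A_r=N/K=4\) becomes
\[
\widetilde A_1+\widetilde A_2+\cdots+\widetilde A_7=3 .
\]
The detection equalities \(\widetilde A_1=\widetilde B_1\) and \(\widetilde A_2=\widetilde B_2\), read off the second and third rows of \(M\), give two affine equations in \(\widetilde A_1,\dots,\widetilde A_7\), each with constant terms \(2\cdot\tfrac38\) and \(2\cdot\tfrac58\) respectively. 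Together with the normalization, this yields three affine equations in the seven undetected and detected weights.

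The key step is to find real multipliers \(\lambda_1,\lambda_2\) with the following property. Adding \(\lambda_1\) times the row-\(1\) detection equation and \(\lambda_2\) times the row-\(2\) detection equation to the normalization should eliminate both the constant term and the \(\widetilde A_6\) term, since \(\widetilde A_6=3\) is the expected surviving weight. This is a \(2\times2\) linear system for \((\lambda_1,\lambda_2)\). Its solution should leave an identity of the form
\[
\sum_{r\ne 0,6} c_r\,\widetilde A_r=0 ,
\]
and I then check that every \(c_r\) is strictly positive. Nonnegativity of the \(\widetilde A_r\) then forces \(\widetilde A_r=0\) for \(r\ne0,6\), and the normalization gives \(\widetilde A_6=3\). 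Feeding \(\widetilde A=(1,0,0,0,0,0,3,0)\) into \(\widetilde B=2M\widetilde A\) produces the \(\widetilde B\) of Proposition~\ref{prop:823-enumerator}. For instance \(\widetilde B_3=2(\tfrac78+3\cdot\tfrac{119}{264})=\tfrac{49}{11}\), and the remaining entries are checked the same way.

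For existence, I must confirm that this point actually satisfies \eqref{eq:mf-lp-constraints}. The detection rows give \(\widetilde B_1=\widetilde B_2=0\). The inequalities \(\widetilde A_r\le\widetilde B_r\) must also hold, and the only nontrivial one is \(\widetilde B_6=3=\widetilde A_6\). Alternatively, existence follows directly from the explicit code via Theorem~\ref{thm:intrinsic-lp-feasibility}.

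The main obstacle is the sign check on the \(c_r\). If some coefficient comes out zero or negative (plausibly for \(\widetilde A_7\) or \(\widetilde A_5\)), the two equalities plus positivity of \(\widetilde A\) will not suffice. In that case I would first add a nonnegative multiple of an inequality constraint, \(\widetilde B_r-\widetilde A_r\ge0\) for a suitable \(r\) (for example \(r=7\) or \(r=3\)), or of \(\widetilde B_r\ge0\), to repair the offending coefficient. This amounts to a small LP dual certificate that still vanishes at the claimed point.
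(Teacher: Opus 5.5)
Your plan follows the same elimination template as the paper and can be completed, but the pivotal step is misdescribed. The $2\times2$ system for $(\lambda_1,\lambda_2)$ is singular, not uniquely solvable. Write the equations as $\sum_{r=1}^{7}\widetilde A_r-3=0$ and $\widetilde A_r-\widetilde B_r=0$ for $r=1,2$, with $\widetilde B=2M\widetilde A$. The pairs (constant term, coefficient of $\widetilde A_6$) are then $(-3,1)$, $(-\tfrac34,\tfrac14)$, $(-\tfrac54,\tfrac5{12})$, all in ratio $-3$. This is forced: the claimed point, whose only nonzero coordinate besides $\widetilde A_0$ is $\widetilde A_6=3$, satisfies every one of these equations.

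So eliminating $\widetilde A_6$ eliminates the constant automatically, and you get a one-parameter family $\lambda_1=-4w$, $\lambda_2=-\tfrac{12}{5}(1-w)$. The six resulting coefficients $c_r$ are all strictly positive exactly when $\tfrac13<w<\tfrac{9}{14}$. At the two endpoints $c_5$ and $c_7$ vanish, respectively, which are the very coefficients you flagged. The sign check therefore succeeds only once you choose $w$ inside this window, and your proposal does not address that choice. Taking $w=\tfrac12$ gives the certificate you were after:
\[
\Bigl(\sum_{r=1}^{7}\widetilde A_r-3\Bigr)-2\bigl(\widetilde A_1-\widetilde B_1\bigr)-\tfrac65\bigl(\widetilde A_2-\widetilde B_2\bigr)
=\tfrac{2}{105}\bigl(85\widetilde A_1+90\widetilde A_2+105\widetilde A_3+55\widetilde A_4+15\widetilde A_5+28\widetilde A_7\bigr).
\]
The left side vanishes on the feasible set, so nonnegativity kills $\widetilde A_1,\dots,\widetilde A_5,\widetilde A_7$ at once.

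The paper instead works at the endpoint $w=\tfrac{9}{14}$. It derives $\widetilde A_1=-2\widetilde A_2-\tfrac{55}{27}\widetilde A_3-\tfrac{11}{9}\widetilde A_4-\tfrac{13}{27}\widetilde A_5$, which has no $\widetilde A_7$ term, and uses it to kill $\widetilde A_1,\dots,\widetilde A_5$. It then removes $\widetilde A_7$ with the second relation $540\widetilde A_2+460\widetilde A_3+330\widetilde A_4+175\widetilde A_5=189\widetilde A_7$. Your single interior identity is slightly cleaner once the degeneracy is recognized; the paper's two-step elimination avoids having to locate the window.

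Your fallback, as exemplified, would not work. At the claimed point $\widetilde B_3-\widetilde A_3=\tfrac{49}{11}$ and $\widetilde B_7-\widetilde A_7=\tfrac{540}{143}$ are slack. By complementary slackness, they cannot carry a nonzero multiplier in a certificate that vanishes there; among $r\in\{3,\dots,7\}$ only $r=4,6$ are tight. Fortunately the fallback is unnecessary.
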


\begin{proof}
We follow the template elimination of Section~\ref{subsec:mf-lp-template}; the
one new feature is that two detection equalities give two expressions for the
same weight. With \(\widetilde A_0=1\) the normalization is
\(\sum_{r=1}^{7}\widetilde A_r=3\). Using the second and third rows of \(M\), the
detection equalities \(\widetilde A_1=\widetilde B_1\) and
\(\widetilde A_2=\widetilde B_2\) each solve for \(\widetilde A_6\):
\begin{align}
\widetilde A_6 &= 3
     + \tfrac{17}{7}\widetilde A_2 + \tfrac{13}{7}\widetilde A_3
     + \tfrac{23}{21}\widetilde A_4 + \tfrac{1}{7}\widetilde A_5
     - \tfrac{25}{21}\widetilde A_1 - \tfrac{7}{3}\widetilde A_7,
\label{eq:A6fromB1} \\
\widetilde A_6 &= 3
     + \tfrac{17}{7}\widetilde A_1 + \tfrac{1}{7}\widetilde A_3
     - \widetilde A_4 - \tfrac{11}{7}\widetilde A_5 - \widetilde A_2 + \tfrac{7}{5}\widetilde A_7. \numberthis
\label{eq:A6fromB2}
\end{align}
The normalization together with \eqref{eq:A6fromB1} gives
\begin{equation}
\widetilde A_1 = 18\widetilde A_2 + 15\widetilde A_3 + 11\widetilde A_4 + 6\widetilde A_5 - 7\widetilde A_7, \numberthis
\label{eq:A1}
\end{equation}
while equating \eqref{eq:A6fromB1} and \eqref{eq:A6fromB2} gives
\begin{equation}
540\widetilde A_2 + 460\widetilde A_3 + 330\widetilde A_4 + 175\widetilde A_5 = 189\widetilde A_7. \numberthis
\label{eq:linrel}
\end{equation}
Substituting \eqref{eq:linrel} into \eqref{eq:A1} yields
\begin{equation}
\widetilde A_1 = -2\widetilde A_2 - \tfrac{55}{27}\widetilde A_3 - \tfrac{11}{9}\widetilde A_4 - \tfrac{13}{27}\widetilde A_5. \numberthis
\label{eq:A1neg}
\end{equation}
As every term on the right is nonpositive while
\(\widetilde A_1,\dots,\widetilde A_5\ge0\), both sides vanish, so
\(\widetilde A_1=\dots=\widetilde A_5=0\); then \eqref{eq:linrel} gives
\(\widetilde A_7=0\) and \eqref{eq:A6fromB1} gives \(\widetilde A_6=3\). Thus
\(\widetilde A=(1,0,0,0,0,0,3,0)\), and \(\widetilde B=2M\widetilde A\) gives the
stated \(\widetilde B\). The feasible point is unique.
\end{proof}

\begin{remark}
\label{rmk:823-degeneracy}
Because the feasible point is unique, every permutation-invariant \(((7,2,3))\)
code has this same intrinsic enumerator. In particular the two Pollatsek--Ruskai
codes and the further seven-qubit permutation-invariant distance-\(3\)
code~\cite{2004permutation,us1} are indistinguishable at the level of the
intrinsic linear program.
\end{remark}

By the extremality template, the seven-qubit \(((7,2,3))\) code corresponding to
\(\lcode 8,2,3\rcode_{\SU2}\) is optimal among permutation-invariant qubit
codes: no \(((n,2,3))\) code exists for \(n<7\); no \(((7,K,3))\) code exists
with \(K>2\); and no \(((7,2,d))\) code exists with \(d>3\). It is the unique
feasible point of its linear program.

\subsection{A Qutrit Example: the Intrinsic $\lcode 10,2,2 \rcode_{\SU 3}$ Code}
\label{sec:su3-example}

The final example is genuinely qutrit: \(V=\Sym^3(\CC^3)\), of dimension \(10\),
corresponding to permutation-invariant codes on three qutrits
(Proposition~\ref{prop:PI-qudit-intrinsic-strong}). The multiplicity-free
decomposition is
\[
\L(V)\cong(0,0)\oplus(1,1)\oplus(2,2)\oplus(3,3), \numberthis
\]
and the unnormalized intrinsic MacWilliams matrix is
\[
\renewcommand{\arraystretch}{1.3}
M=
\begin{pmatrix}
\frac{1}{10} & \frac{1}{10} & \frac{1}{10} & \frac{1}{10}\\
\frac{4}{5} & \frac{3}{5} & \frac{4}{15} & -\frac{1}{5}\\
\frac{27}{10} & \frac{9}{10} & -\frac{47}{70} & \frac{9}{70}\\
\frac{32}{5} & -\frac{8}{5} & \frac{32}{105} & -\frac{1}{35} 
\end{pmatrix}. \numberthis
\]

An orthonormal weight basis of \(V\) is given by the occupation-number vectors
\(\ket{a,b,c}\) with \(a+b+c=3\), the normalized symmetric sums of computational
basis states with \(a\) copies of \(0\), \(b\) of \(1\), and \(c\) of \(2\).
Consider
\[
\mathcal C
=
\operatorname{span}\left\{
\tfrac{1}{\sqrt3}\bigl(\ket{3,0,0}+\ket{0,3,0}+\ket{0,0,3}\bigr),
\;
\ket{1,1,1}
\right\}
\subset V. \numberthis
\]
where
\[
\ket{1,1,1}=\tfrac{1}{\sqrt6}
\bigl(\ket{012}+\ket{021}+\ket{102}+\ket{120}+\ket{201}+\ket{210}\bigr). \numberthis
\]
This is the intrinsic \(\lcode 10,2,2 \rcode_{\SU 3}\) code; it is exactly the
permutation-invariant subcode of the three-qutrit stabilizer code with
generators \(XXX\) and \(ZZZ\).

\begin{proposition}
\label{prop:322-enumerator}
For the intrinsic \(\lcode 10,2,2 \rcode_{\SU 3}\) code, in the normalization
\eqref{eq:mf-normalized-enum} with \(K=2\) and \(N=10\),
\[
\widetilde A=\left(1,\,0,\,0,\,4\right),
\qquad
\widetilde B=\left(1,\,0,\,\tfrac{45}{7},\,\tfrac{88}{7}\right). \numberthis
\]
In particular \(\widetilde A_1=\widetilde B_1=0\), so the code detects the
adjoint sector \((1,1)\).
\end{proposition}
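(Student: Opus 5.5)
The plan mirrors Propositions~\ref{prop:522-enumerator} and~\ref{prop:522-unique-feasible}. First I establish that \(\mathcal C\) actually detects the adjoint sector \((1,1)\), which is what makes the actual enumerator \(A_r(P,P),B_r(P,P)\) a feasible point of \eqref{eq:mf-lp-constraints} with \(\mathcal D=\{1\}\). Then I show by elimination that this linear program has a unique feasible point, which must therefore be the code's enumerator.

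For detection I avoid computing \(PEP\) for an explicit basis of the adjoint sector. The three-qutrit stabilizer code with generators \(XXX,ZZZ\) is the \(((3,3,2))_3\) code, so it detects every weight-one error. By Lemma~\ref{lem:subcode-detection}, its subcode \(\mathcal C\) also detects all weight-one errors. Since \(\mathcal C\subseteq\Sym^3(\CC^3)\), Lemma~\ref{lem:symmetric-depth-distance} (via Theorem~\ref{thm:sympower-filtration}, \(\Phi(F_1)=\E_0\oplus\E_1\)) turns this into detection of the intrinsic sector \(\E_1\cong(1,1)\). By Corollary~\ref{cor:mf-basic-identities} the normalized enumerator of \(\mathcal C\) then satisfies \(\widetilde A_1=\widetilde B_1\) together with all other constraints of \eqref{eq:mf-lp-constraints} at \(K=2\).

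The elimination step runs as follows. In the normalization \eqref{eq:mf-normalized-enum}, the Parseval identity \(\sum_r A_r=K\) becomes \(\sum_r\widetilde A_r=N/K=5\). With \(\widetilde A_0=1\) this gives
\[
\widetilde A_1+\widetilde A_2+\widetilde A_3=4 .
\]
(I will be careful to use \(N/K\) here rather than the shorthand \(K-1\) from the template.) Writing \(\widetilde B=2M\widetilde A\) and using the second row of \(M\), the detection equality \(\widetilde A_1=\widetilde B_1\) reads
\[
\widetilde A_1=\tfrac85+\tfrac65\widetilde A_1+\tfrac{8}{15}\widetilde A_2-\tfrac25\widetilde A_3,
\qquad\text{i.e.}\qquad
\widetilde A_3=4+\tfrac12\widetilde A_1+\tfrac43\widetilde A_2 .
\]
Substituting into the normalization yields \(\tfrac32\widetilde A_1+\tfrac73\widetilde A_2=0\). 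Nonnegativity then forces \(\widetilde A_1=\widetilde A_2=0\) and hence \(\widetilde A_3=4\).

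Finally, applying \(\widetilde B=2M\widetilde A\) with \(\widetilde A=(1,0,0,4)\) gives \(\widetilde B_0=1\) and \(\widetilde B_1=0\). It also gives \(\widetilde B_2=2\bigl(\tfrac{27}{10}+\tfrac{36}{70}\bigr)=\tfrac{45}{7}\) and \(\widetilde B_3=2\bigl(\tfrac{32}{5}-\tfrac{4}{35}\bigr)=\tfrac{88}{7}\). As a sanity check, one can verify \(\widetilde A_r\le\widetilde B_r\) for every \(r\). The main obstacle is the first step: the identification of \(\mathcal C\) as a subcode of the \(XXX,ZZZ\) code and that code's distance are taken as known. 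If that is in doubt, the fallback is a direct check that \(PEP\in\CC P\) for the eight elements \(\sum_s X^{(s)}\), \(X\in\mathfrak{sl}_3\), which span the adjoint sector of \(\Sym^3(\CC^3)\) by multiplicity-freeness.
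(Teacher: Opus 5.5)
Your proposal is correct and is essentially the paper's own argument. The paper proves this by direct computation or by invoking the elimination of Proposition~\ref{prop:322-unique-feasible}, which is exactly your computation, and your normalization $\sum_r\widetilde A_r=N/K=5$ is the right one: it matches the paper's worked example rather than the template's misstatement $\sum_r\widetilde A_r=K$. Your independent check that $\mathcal C$ detects the adjoint sector, via the $((3,3,2))_3$ stabilizer code, Lemma~\ref{lem:subcode-detection}, and $\Phi(F_1)=\E_0\oplus\E_1$, is a worthwhile addition, because the uniqueness route presupposes detection rather than deriving it.
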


\begin{proof}
Compute directly from the code projector, or invoke
Proposition~\ref{prop:322-unique-feasible}.
\end{proof}

\begin{proposition}
\label{prop:322-unique-feasible}
For \(V=\Sym^3(\CC^3)\), \(K=2\), and \(d=2\) (detected set
\(\mathcal D=\{1\}\)), the linear program \eqref{eq:mf-lp-constraints} has the
unique feasible point of Proposition~\ref{prop:322-enumerator}.
\end{proposition}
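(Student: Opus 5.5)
The plan is to run the template elimination of Section~\ref{subsec:mf-lp-template} exactly as in Proposition~\ref{prop:522-unique-feasible}, now using the $4\times4$ matrix $M$ for $\Sym^3(\CC^3)$. The argument has four steps: eliminate $\widetilde B$, impose the one detection equality, combine it with the normalization, and let nonnegativity pin down every undetected weight. Existence of a feasible point is not in doubt, since the code projector of Proposition~\ref{prop:322-enumerator} supplies one by Theorem~\ref{thm:intrinsic-lp-feasibility}. The content of the statement is therefore uniqueness.

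\emph{Normalization.} I will first fix the normalization carefully. From $\sum_r A_r=K$ and $\widetilde A_r=\tfrac{N}{K^2}A_r$ we get $\sum_r\widetilde A_r=N/K=5$. With $\widetilde A_0=1$ this gives
\[
\widetilde A_1+\widetilde A_2+\widetilde A_3=4 .
\]
The template's shorthand ``$\sum_r\widetilde A_r=K$'' should be read with this scaling; this is the one place where care is needed. As a consistency check, the claimed $\widetilde A=(1,0,0,4)$ sums to $5$.

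\emph{Detection equality.} Next I use $\widetilde B=2M\widetilde A$ together with the second row of $M$. The equality $\widetilde A_1=\widetilde B_1$ becomes
\[
\widetilde A_1=2\bigl(\tfrac45+\tfrac35\widetilde A_1+\tfrac4{15}\widetilde A_2-\tfrac15\widetilde A_3\bigr).
\]
Clearing denominators and solving for the top weight gives
\[
\widetilde A_3=4+\tfrac12\widetilde A_1+\tfrac43\widetilde A_2 .
\]

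\emph{Forcing the undetected weights to vanish.} Substituting this into the normalization yields
\[
\tfrac32\widetilde A_1+\tfrac73\widetilde A_2=0 .
\]
The constraints give $\widetilde A_1,\widetilde A_2\ge0$, so both vanish. Then $\widetilde A_3=4$, and hence $\widetilde A=(1,0,0,4)$.

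\emph{Recovering $\widetilde B$ and checking feasibility.} Applying $\widetilde B=2M\widetilde A$ gives:
\begin{itemize}
\item $\widetilde B_0=2\cdot\tfrac1{10}\cdot5=1$;
\item $\widetilde B_1=0$;
\item $\widetilde B_2=2\bigl(\tfrac{27}{10}+\tfrac{36}{70}\bigr)=\tfrac{45}{7}$;
\item $\widetilde B_3=2\bigl(\tfrac{32}{5}-\tfrac4{35}\bigr)=\tfrac{88}{7}$.
\end{itemize}
Finally I verify that the remaining inequalities $0\le\widetilde A_r\le\widetilde B_r$ hold, namely $0\le\tfrac{45}{7}$ and $4\le\tfrac{88}{7}$. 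So the forced point is genuinely feasible, and uniqueness follows.

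I expect no real obstacle beyond the arithmetic bookkeeping. The only step that needs attention is reading off the correct normalized sum $N/K$ and the correct sign pattern in the second row of $M$. If the signs did not produce a nonpositive combination, the argument would instead need additional rows or inequalities. Here, however, the $-\tfrac15$ coefficient on $\widetilde A_3$ makes the elimination work cleanly.
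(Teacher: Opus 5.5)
Your proposal is correct and follows the same template elimination as the paper. It uses the same normalization $\widetilde A_1+\widetilde A_2+\widetilde A_3=4$ and the same relation $\widetilde A_3=4+\tfrac12\widetilde A_1+\tfrac43\widetilde A_2$ from the second row of $M$, and nonnegativity then forces $\widetilde A_1=\widetilde A_2=0$. You are also right that the template's shorthand ``$\sum_r\widetilde A_r=K$'' should read $N/K$. Your explicit check of $0\le\widetilde A_r\le\widetilde B_r$ spells out the existence that the paper simply takes from the code projector.
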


\begin{proof}
Template elimination of Section~\ref{subsec:mf-lp-template}. With
\(\widetilde A_0=1\) the normalization is
\(\widetilde A_1+\widetilde A_2+\widetilde A_3=4\), and the adjoint-detection
equality \(\widetilde A_1=\widetilde B_1\), via the second row of \(M\), gives
\(\widetilde A_3=4+\tfrac12\widetilde A_1+\tfrac43\widetilde A_2\). Substituting
into the normalization yields
\(\tfrac32\widetilde A_1+\tfrac73\widetilde A_2=0\); nonnegativity forces
\(\widetilde A_1=\widetilde A_2=0\) and \(\widetilde A_3=4\, . \) Then
\(\widetilde B=2M\widetilde A\) gives the stated \(\widetilde B\), and the
feasible point is unique.
\end{proof}

By the extremality template, the three-qutrit \(((3,2,2))_3\) code corresponding
to \(\lcode 10,2,2\rcode_{\SU3}\) is optimal among permutation-invariant qutrit
codes: no \(((n,2,2))_3\) code exists for \(n<3\); no \(((3,K,2))_3\) code exists
with \(K>2\); and no \(((3,2,d))_3\) code exists with \(d>2\). It is the unique
feasible point of its linear program.

\makeatletter
\renewcommand{\theequation}{C\arabic{equation}}
\renewcommand{\thetable}{C\arabic{table}}
\renewcommand{\thefigure}{C\arabic{figure}}
\renewcommand{\thelemma}{C\arabic{lemma}}
\renewcommand{\theremark}{C\arabic{remark}}
\renewcommand{\theproposition}{C\arabic{proposition}}
\renewcommand{\thecorollary}{C\arabic{corollary}}
\renewcommand{\thetheorem}{C\arabic{theorem}}
\setcounter{table}{0}
\setcounter{figure}{0}
\setcounter{lemma}{0}
\setcounter{remark}{0}
\setcounter{corollary}{0}
\setcounter{proposition}{0}
\setcounter{theorem}{0}
\setcounter{equation}{0}

\section{Exact positive dual certificate for the depth-$2$ bound in $V=(2,2)$}
\label{app:22-certificate}

This appendix records the positive dual certificate used in the proof of
Proposition~\ref{prop:22-Kle5}. All entries are exact; the certificate has been
verified in exact arithmetic, including the identity
\eqref{eq:22-supporting-identity}, the positive definiteness of every \(Y_\xi\),
and the weighted orthogonality \(MDM^{T}=D\).

\paragraph{Coordinate order.}
We use the $33$-dimensional enumerator coordinate system obtained by listing the
isotypic sectors of \(\L(V)\) in the order
\[
(0,0);\ (1,1);\ (2,2);\ (0,3);\ (3,0);\ (4,1);\ (1,4);\ (3,3);\
(2,5);\ (5,2);\ (6,0);\ (0,6);\ (4,4),
\]
and, within a sector of multiplicity \(m_\xi\), flattening the block entries as
\((\alpha,\beta)=(1,1),(1,2),\dots,(1,m_\xi),(2,1),\dots,(m_\xi,m_\xi)\). In this
basis the coefficient transform \(M\), in the twirl convention
\eqref{eq:twirl-matrix-unit-ordered} used throughout the main text, satisfies
\(\mathrm{vec}(B)=M\,\mathrm{vec}(A)\) and \(MDM^{T}=D\), where \(D\) is diagonal
with \(d_\xi\) repeated \(m_\xi^2\) times.

\paragraph{Dual blocks.}
Let \(S=\{(1,1),(2,2),(0,3),(3,0),(4,1),(1,4),(3,3)\}\) and
\(Z=\operatorname{diag}\!\left(\tfrac{9}{35},1\right)\). Define
\[
Y_{(1,1)}=
\begin{pmatrix}
\frac{1346}{819} & -\frac{8\sqrt{105}}{455}\\[4pt]
-\frac{8\sqrt{105}}{455} & \frac{10}{13}
\end{pmatrix},
\qquad
Y_{(3,3)}=
\begin{pmatrix}
\frac{133}{153} & -\frac{40}{357}\\[4pt]
-\frac{40}{357} & \frac{25}{119}
\end{pmatrix},
\]
\[
Y_{(2,2)}=
\begin{pmatrix}
\frac{39776}{44415} & \frac{8\sqrt{4935}}{1269} & -\frac{32\sqrt{35}}{8883}\\[5pt]
\frac{8\sqrt{4935}}{1269} & \frac{64}{63} & \frac{160\sqrt{141}}{8883}\\[5pt]
-\frac{32\sqrt{35}}{8883} & \frac{160\sqrt{141}}{8883} & \frac{12800}{8883}
\end{pmatrix},
\]
\[
Y_{(4,1)}=
\begin{pmatrix}
\frac{16}{21} & -\frac{8\sqrt6}{63}\\[4pt]
-\frac{8\sqrt6}{63} & \frac{32}{63}
\end{pmatrix},
\qquad
Y_{(1,4)}=
\begin{pmatrix}
\frac{416}{441} & -\frac{8\sqrt6}{147}\\[4pt]
-\frac{8\sqrt6}{147} & \frac{16}{49}
\end{pmatrix},
\]
and \(Y_{(0,3)}=Y_{(3,0)}=\tfrac{64}{45}\). Each \(Y_\xi\) is positive definite;
for instance \(\det Y_{(1,1)}=\tfrac{388}{315}\),
\(\det Y_{(4,1)}=\det Y_{(1,4)}=\tfrac{128}{441}\),
\(\det Y_{(3,3)}=\tfrac{25}{147}\), and for the \(3\times3\) block the leading
principal minors are \(\tfrac{39776}{44415}\), \(\tfrac{1996864}{2798145}\), and
\(\det Y_{(2,2)}=\tfrac{81920}{83349}\), all positive, so Sylvester's criterion
applies.

\paragraph{Supporting identity.}
Direct exact multiplication by the transform \(M\) gives the identity
\eqref{eq:22-supporting-identity}, valid for all \(A\) with
\(\mathrm{vec}(B)=M\,\mathrm{vec}(A)\); this is the supporting-hyperplane
functional used in Proposition~\ref{prop:22-Kle5}.

\bibliographystyle{IEEEtran}
\bibliography{biblio.bib}

\end{document}